\definecolor{mygray}{gray}{0.8}
\newtheorem{theorem}{Theorem}
\theoremstyle{plain}
\newtheorem{assumption}{Assumption}
\newtheorem{algorithm}{Algorithm}
\newtheorem{corollary}{Corollary}
\newtheorem{lemma}{Lemma}
\newtheorem{proposition}{Proposition}
\theoremstyle{definition}
\newtheorem{definition}{Definition}
\newtheorem{example}{Example}[section]
\newenvironment{varproof}[1][Proof]{\begin{trivlist}
\item[\hskip \labelsep {\bfseries #1}]}{\end{trivlist}}
\numberwithin{equation}{section}
\algrenewcommand\algorithmiccomment[1]{\hfill #1}
\theoremstyle{definition}
\newtheorem{remark}{Remark}[section]
\begin{document}
\title[\textbf{Yogurts Choose Consumers}]{Yogurts Choose Consumers?
Estimation of Random-Utility Models via Two-Sided Matching}
\author[O. Bonnet]{Odran Bonnet$^{\flat }$}
\author[A. Galichon]{Alfred Galichon$^{\dag }$ }
\author[Y.-W. Hsieh]{Yu-Wei Hsieh$^{\lozenge}$}
\author[K. O'Hara]{Keith O'Hara$^{\lozenge }$}
\author[M. Shum]{Matt Shum{\small $^{\S }$}}
\date{\today\ (First draft: 9/2015). Galichon gratefully acknowledges
funding from a grant NSF DMS-1716489, and from a European Research Council (ERC) grant No. 866274. We thank Jeremy Fox, Xavier d'Haultefoeuille, Lars Nesheim, Ariel Pakes, and
participants in seminars at Amazon, Caltech, UC-Davis, Emory, Johns Hopkins, Indiana, LSE, Rochester (Simon), Stanford, UNC, USC, Yale, NYU CRATE
conference, Banff Applied Microeconomics Conference, SHUFE
Econometrics Conference,  Toronto Intersections of Econometrics and
Applied Micro Conference, WARP (Workshop of Applications of Revealed Preference) webinar, and UCL-Vanderbilt Conference on Econometrics and
Models of Strategic Interactions for helpful comments. Alejandro
Robinson-Cortes provided excellent research assistance.\\
{\indent $^{\flat }$INSEE/CREST, Paris; odran.bonnet@insee.fr }\\
{\indent$^{\dag }$New York University; ag133@nyu.edu }\\
{\indent$^{\lozenge }$Amazon.com; yuweihsieh01@gmail.com and keith.ohara@nyu.edu. Hsieh and O'Hara's contributions to the paper reflect work done prior to their joining Amazon.}\\
{\indent $^{\S }$California Institute of Technology; mshum@caltech.edu} }

\begin{abstract}
The problem of {\em demand inversion} -- a crucial step in the estimation of random utility discrete-choice models -- is equivalent to the determination of stable outcomes in
two-sided matching models. This equivalence applies to random utility
models that are not necessarily additive, smooth, nor even invertible. Based on this
equivalence, algorithms for the determination of stable
matchings provide effective computational methods for estimating these models.  For non-invertible models, the identified set of utility vectors is a lattice, and the matching algorithms recover sharp upper and lower bounds on the utilities.   Our matching approach
facilitates estimation of models that were previously difficult to
estimate, such as the pure characteristics model.
An empirical application to voting data from the 1999 European Parliament elections illustrates the good performance of our matching-based demand inversion algorithms in practice.

\vspace{0.1cm} \emph{Keywords:} random utility models, demand inversion, two-sided matching,
discrete-choice demand models, partial identification, pure characteristics model

\vspace{0.1cm} \emph{JEL Classification:} C51, C60

\vspace{0.5cm}

\end{abstract}

\maketitle




\bigskip

\newpage

\section{\setcounter{page}{2}\setcounter{equation}{0}Introduction}

Discrete choice models play a
tremendous role in applied work in economics. In these models, an agent $i$
characterized by a utility shock $\varepsilon _{i}\in \Omega $ must choose
from a finite set of alternatives $j\in \mathcal{J}$ in order to maximize
her utility. The random utility framework pioneered by~\citeasnoun{mcfadden1978modelling} assumes that the utility $\mathcal{U}_{\varepsilon _{i}j}\left( \delta
_{j}\right) $ that agent $i$ gets from alternative $j$ depends on $\delta _{j}$, a systematic utility level associated with
alternative $j$ (possibly parameterized as a function of regressors such as characteristics of alternative $j$) which is identical across all agents, and a realization $%
\varepsilon _{i}$ of agent $i$'s random utility shocks. 
The agent chooses the alternative yielding maximal utility:
\begin{equation}
\max_{j\in \mathcal{J}}\left\{ \mathcal{U}_{\varepsilon _{i}j}\left( \delta
_{j}\right) \right\} .  \label{consumerProgram}
\end{equation}%
We focus on parametric random utility models, where the function $\mathcal{U}:\left( \varepsilon ,j,\delta
\right) \in \Omega \times \mathcal{J}\times \mathbb{R}\mapsto \mathcal{U}%
_{\varepsilon j}\left( \delta \right) \in \mathbb{R}$ as well as the
distribution of $\varepsilon $ in the population, denoted $P$, are known to
the researcher. Particular instances of these models are Additive Random Utility
Models (hereafter ARUMs), including Logit or Probit models, where $\Omega =\mathbb{R}%
^{\mathcal{J}}$, $\mathcal{U}_{\varepsilon _{i}j}\left( \delta _{j}\right)
=\delta _{j}+\varepsilon _{ij}$, and $\varepsilon $ follows a Gumbel or
Gaussian distribution. However, our results extend to the more general
class of Non-Additive Random Utility Models (NARUMs), in which $\mathcal{U}%
_{\varepsilon _{i}j}\left( \delta _{j}\right) $ is not 
(quasi-)linear in $\delta _{j}$. 

Under an assumption guaranteeing that agents are not indifferent
between any pair of alternatives (see Assumption 2 below), we can define the vector-valued \emph{demand map }$\sigma \left(
.\right) $, the $j$-th component ($j\in\mathcal{J}_0$) of which is defined as\ the probability that alternative $j$ dominates all
the other ones, given the vector of systematic utilities $\left( \delta
_{j}\right) _{j\in \mathcal{J}}$:
\begin{equation}
\sigma _{j}\left( \delta \right) =P\left( \varepsilon :\;\mathcal{U}%
_{\varepsilon j}\left( \delta _{j}\right) \geq \mathcal{U}_{\varepsilon
j^{\prime }}\left( \delta _{j^{\prime }}\right) ,\;\forall j^{\prime }\in
\mathcal{J}\right) .  \label{ProbaDetermination}
\end{equation}
The main focus of the paper pertains to {\em demand inversion}: given a vector of  observed
market shares $\left( s_{j}\right) _{j\in \mathcal{J}}$, how can one
characterize and compute the full set of utility vectors $\left( \delta
_{j}\right) _{j\in \mathcal{J}}$ such that $s=\sigma \left( \delta \right)$
 -- that is, which rationalizes the observed market shares? (In cases where the distribution $P$ of the utility shocks depends on parameters unknown to the researcher, which arises in many applications as well as in our simulations and application below, demand inversion refers to recovering $\delta_j$ for a given set of parameters of $P$.) Additionally, the demand map may also be {\em non-invertible}, which arises when multiple vectors
$\delta $ solve the demand inversion problem.

\subsection{Contribution}
We establish a new equivalence principle between the
problem of demand inversion and the problem of stable matchings in two-sided
models with Imperfectly Transferable Utility (ITU). More precisely, we show
that a discrete choice model can always be interpreted as a two-sided matching market
where consumers and alternatives are viewed as firms and workers; and that
the \emph{demand inversion problem}, that is the recovery of utility
vectors $\left( \delta _{j}\right)_{j\in \mathcal{J}}$, can be reformulated
as the \emph{equilibrium problem} of determining competitive wages in the
corresponding  matching market. In other words, the identified set of solution
vectors $\delta $ coincides with the set of equilibrium wages in the
matching market. This equivalence implies two important contributions:

\begin{enumerate}
\item \textbf{Characterization of the identified set of $\delta$.} The equivalence to the matching equilibrium implies that the identified set of vectors $\delta _{j}$ is a \emph{lattice}, from which one can construct a very simple data-driven test for point-identification.\footnote{A lattice is a partially ordered set that contains the meet and the join of each pair of its element. For the purposes of this paper, lattices are subsets of vectors in Euclidean space and, for any given pair of vectors, the meet (resp. join) is just the vector containing the componentwise infimum (resp. supremum). For additional discussion of lattices in matching theory, consult \cite{roth1992two}. Relatedly, \citeasnoun{jia2008happens} exploits the lattice structure of equilibria in supermodular games to estimate a large multi-market entry game between discount retailers.} As such, if the greatest element of the lattice
coincides with its smallest element, then the utility index $\delta$ is point-identified.\footnote{\citeasnoun{khan2016identification} call this an \textquotedblleft adaptive\textquotedblright\ property.}  Thus, our approach bypasses the need of verifying {\em a priori} whether the parameters of a given model are point of partial-identified\textemdash a non-trivial exercise in many cases.  

\item \textbf{Computation of the identified set of $\delta$.}
Our matching approach has two key features. First, the matching equivalence allows the utilization of several high-performance matching algorithms, for which the convergence properties are well-studied. The use of these matching algorithms for estimating random utility models is new; moreover, they can readily handle situations in which multiple values of $\delta$ rationalize the observed market shares.   Second, these matching-based algorithms do not require the computation of the demand (market-share) mapping. This is important in specific models, such as the {\em pure characteristics model} (\citeasnoun{berry2007pure}), which are notorious for their non-smooth market-share mappings.  Indeed, using our approach, the demand inversion problem for the pure characteristics model becomes a well-behaved {\em convex program} (see Section 5 below).
\end{enumerate}

\subsection{Related literature}\label{sec:literature}
{ 
In this paragraph we discuss how our paper relates to the existing literature on (1) demand inversion and (2) two-sided matching.

\textbf{Demand inversion literature.} Demand inversion is a crucial intermediate step for estimating aggregate discrete-choice models of product-differentiated markets; see, e.g., \citeasnoun{berry1994estimating} and \citeasnoun{berry1995automobile} (BLP). It also plays an important role in two-step estimation procedures for dynamic discrete-choice models\footnote{ Specifically, even though we don't pursue the connection here, the demand inversion problem considered in this paper also applies straightforwardly to the problem in dynamic discrete-choice estimation of ``inverting" the choice-specific value functions from the conditional choice probabilities.} (including \citeasnoun%
{hotz1993conditional},  \citeasnoun{aguirregabiria2002swapping}, \citeasnoun{bajari2007estimating}, \citeasnoun{arcidiacono2011conditional}, \citeasnoun{kristensen2014ccp}).

Theoretically, there is a large literature that tackles the problem of the invertibility of the demand map utilizing either \emph{differentiability} or \emph{monotonicity} of the demand map.  The first approach, based on global univalence
theorems, requires the \emph{differentiability} of the mapping $\sigma \left(
\delta \right) $.
\footnote{The Hadamard-Palais univalence theorem (\citeasnoun{palais1959natural})
asserts that (i) if $\sigma :\mathbb{R}^{J}\rightarrow \mathbb{R}^{J}$
is $C^{1}$, (ii) if its Jacobian is invertible at all points, and (iii) if $\left\Vert
\sigma \left( \delta \right) \right\Vert \rightarrow \infty $ as $\left\Vert
\delta \right\Vert \rightarrow \infty $, then $\sigma $ is globally
invertible; further results are collected in~\citeasnoun{Parthasarathy} and~\citeasnoun{RadulescuRadulescu}.} 
Such results were used by \citeasnoun{chiappori} and~\citeasnoun{kristensen2014ccp} to study (resp.) multinomial choice and  nonadditive random utility models.
The results in~\citeasnoun{GaleNikaido} focus on
uniqueness, leaving existence aside: assuming that the Jacobian of $\sigma $
is a P-matrix and that the domain is a rectangle, Gale-Nikaido's result
guarantees the injectivity of $\sigma $\textemdash namely, that $\sigma ^{-1}\left(
\left\{ s\right\} \right) $ should have at most one point\textemdash but can be empty.

A second approach to demand inversion relies on \emph{gross substitutes}\textemdash a form of \emph{monotonicity}\textemdash that $\sigma _{j}$ is decreasing, or at least nonincreasing, with respect to
$\delta _{j^{\prime }}$. This category of papers includes the literature on ARUMs, where this property holds automatically (eg. \citeasnoun{mcfadden1978modelling}). \citeasnoun%
{hotz1993conditional} study the problem of the nonemptiness
of $\sigma ^{-1}\left( s\right)$, within general ARUMs ; \citeasnoun{berry1994estimating} provided a
complete argument (which extends to nonadditive random utility
models), and also shows uniqueness of the vector of systematic
utilities under continuity conditions. \citeasnoun{magnac2002identifying} investigate identification of structural parameters (period utility flows)
in dynamic discrete choice models.  \citeasnoun{norets2013surjectivity} focus on the surjectivity of ARUMs, under the assumption of
absolute continuity of the distribution of the additive utility shocks.
More broadly, \citeasnoun{berry2013connected} show injectivity in general demand systems with gross substitutes (thus going beyond random utility models) under a connected strong substitute assumption. 

To date, the existing literature has not provided a general characterization of
the identified utility set $\sigma^{-1}(s)$, defined as the set of utility vectors $\left( \delta _{j}\right) $ which rationalize
a vector of market shares $\left( s_{j}\right)$. 
This paper is the first to consider situations when the identified utility set
$\sigma^{-1}\left( \left\{ s\right\} \right) $ is non-singleton.   In addition, most of the papers cited above provide little guidance on computing the identified set.\footnote{As
\citeasnoun[p. 10]{berry2015identification} underline, \textquotedblleft
(...) the invertibility result of~\citeasnoun{berry2013connected} is not a
characterization (or computational algorithm) for the
inverse\textquotedblright.}  Besides a handful of models,\footnote{These include the logit, nested logit, and random-coefficient logit models (\citeasnoun{berry1994estimating}, \citeasnoun{berry1995automobile}, \citeasnoun{dube2012improving}).} there are no well-established procedures for demand-inversion in general (non-additive) random utility models with arbitrary error distributions.   Our paper aims to fill this gap. In doing so, it builds on a set of recent papers which have reformulated the problem of demand inversion in ARUMs as an optimal transport problem, using the tools of convex duality.  This approach was pioneered by~\citeasnoun{galichon2017cupid}, and was extended to ARUMs with possibly noncontinuous distributions of unobserved heterogeneity by \citeasnoun{chiong2015duality},
and to continuous choice problems by~\citeasnoun{CGHP}.  However, these papers do not cover nonadditive random utility models, and do not characterize the structure of the identified set.

\textbf{Two-sided matching literature.} We provide a brief and incomplete review of the two-sided matching literature, as one key result in this paper is to show the equivalence between demand inversion and a  two-sided matching model.
The matching literature is split between models with non-transferable utility (NTU), and
transferable utility (TU), with intermediate cases called imperfectly transferable utility (ITU).\footnote{A key reference for the matching literature (TU, NTU and ITU alike) is~\citeasnoun{roth1992two},
while~\citeasnoun{galichon2015optimal} focuses on the TU case, or equivalently, optimal transport methods.}
A connection was made earlier between TU matching models and ARUMs (see~\citeasnoun{galichon2017cupid} and~\citeasnoun{chiong2015duality}), and in the present paper, we are making a novel connection between matching models with ITU and NARUMs.

A large class of algorithms for computing matching models consists of ``deferred acceptance algorithms'' which revolve around Tarski's fixed point theorem, and interpreting stable matchings as fixed points of monotone mappings.  They apply to NTU and ITU models. 
These ideas appeared in~\citeasnoun{Adachi2000},
followed by~\citeasnoun{Fleiner}, \citeasnoun{EcheniqueOvideo} and~\citeasnoun{hatfield2005matching}. The
seminal deferred acceptance algorithms (\citeasnoun{gale1962college}; \citeasnoun{crawford1981job}; \citeasnoun{kelso1982job}) can be interpreted in this way.

Other methods are ``descent methods,'' which rely on a reformulation of the problem as a convex optimization problem, and apply in the TU case. Some of the methods involve coordinate descent (as the auction algorithm of~\citeasnoun{bertsekas1989auction}), while some involve gradient descent (as the semi-discrete approach of~\citeasnoun{aurenhammer1987power}). The linear programming solution to the optimal assignment problem described in~\citeasnoun{shapley1971assignment} also belongs in this category.
The study of rates of convergence of these algorithms have been the subject of intense study; see the book~\citeasnoun{peyrecuturi} and~\citeasnoun{Book_assignment} for results and further references. }

\subsection{Organization}

Section~\ref{the framework} introduces the general random utility framework
which is the focus of this paper and provides examples. 
Section \ref%
{sec:DemandInversionNoTies} presents our main equivalence result between NARUMs and two-sided matching problems, and discusses the lattice structure of the identified
utility set. Based on the equivalence result, in Section %
\ref{sec:algorithms} we introduce several matching-based algorithms which can solve a wide
variety of random utility models.  Section \ref{sec:numerical experiments}
contains two simulation investigations of the algorithms, including the pure characteristics model.  Section \ref{sec:empirical} utilizes our matching approach to estimate a spatial voting model using electoral data from the 1999 European Parliament elections. 
Section~\ref{sec: conclusion} concludes.  
Proofs and several additional theoretical results are collected in the appendix.

\section{The framework}

\label{the framework}

\label{NARUM : presentation of the framework}

\subsection{Basic assumptions}

Let $\mathcal{J}_{0}=\mathcal{J}\cup \left\{ 0\right\} $ be a finite set of
alternatives, where $j=0$ denotes a special alternative which serves as a benchmark (see section \ref{sec: normalization} below).
The agent's program is thus%
\begin{equation}
u_{\varepsilon _{i}}=\max_{j\in \mathcal{J}_{0}}\left\{ \mathcal{U}%
_{\varepsilon _{i}j}\left( \delta _{j}\right) \right\} ,
\label{def:uepsilon}
\end{equation}%
where $u_{\varepsilon _{i}}$ is the indirect utility of an agent with shock $%
\varepsilon _{i}$. The utility agent $i$ derives from alternative $j$
depends on the systematic utility vector $\delta _{j}$ associated with this
alternative, and on the realization $\varepsilon _{i}$\ of this agent's
utility shock. We will work under two assumptions:

\begin{assumption}[Regularity of $\mathcal{U}$]
\label{ass:Increasing}Assume $\left( \Omega ,P\right) $ is a Borel
probability space and for every $\varepsilon \in \Omega $, and for every $%
j\in \mathcal{J}_{0}$:

(a) the map\ $\varepsilon \mapsto \left( \mathcal{U}_{\varepsilon j}\left(
\delta _{j}\right) \right) _{j\in \mathcal{J}_{0}}$ is measurable, and

(b) the map $\delta _{j}\mapsto \mathcal{U}_{\varepsilon j}\left( \delta
_{j}\right) $ is increasing from $\mathbb{R}$ to $\mathbb{R}$ and continuous.
\end{assumption}

\begin{assumption}[No indifference]
\label{ass:NoIndiff}

For every distinct pair of indices $j$ and $j^{\prime }$
in $\mathcal{J}_{0}$, and for every pair of scalars $\delta $ and $\delta
^{\prime }$,
\begin{equation*}
P\left( \varepsilon \in \Omega :\mathcal{U}_{\varepsilon j}\left( \delta
\right) =\mathcal{U}_{\varepsilon j^{\prime }}\left( \delta ^{\prime
}\right) \right) =0.
\end{equation*}
\end{assumption}

These two assumptions are standard in the literature, and are automatically satisfied in ARUMs. Assumption \ref{ass:Increasing} (a) is a standard measurability condition, and (b) is often invoked in the literature on non-separable models (see, e.g., \citeasnoun{Matzkin2007}).\footnote{Assumption 1 also circumscribes the set of models we consider here.  Specifically, it rules out purely "horizontal"  (Hotelling) choice models, where  the utility of consumer $\varepsilon$ for store $j$ is $\mathcal{U}_{\varepsilon j}(\delta_j) = -|\delta_{j}-\varepsilon|$ , interpreted as the (minus the) absolute distance between the consumer's home location $\varepsilon$ and the store's location  $\delta_j$ where $\delta_j, \varepsilon \in [0,1]$.  
This specification violates the monotonicity condition (Assumption 1(b)).  Interestingly, it turns out the {\em quadratic} version of this model can be reparametrized in a way which satisfies our modelling assumptions; see Section 6 below.}

Assumption \ref{ass:NoIndiff} rules out indifference (precisely, an event of measure zero) between two alternatives, and is maintained in practically all the applied discrete choice literature.  
{In earlier versions of the paper (available from the authors upon request), we showed that the results in this paper hold even without
Assumption \ref{ass:NoIndiff}, albeit at the greater notational expense of introducing
set-valued functions.} 
In the present paper, for simplicity, we maintain Assumption~\ref{ass:NoIndiff} as it suffices for our purposes. 

Under Assumption \ref{ass:NoIndiff}, the demand of
alternative $j$ (defined in (\ref{ProbaDetermination}) above) corresponds to the fraction of consumers who prefer
weakly \emph{or} strictly alternative $j$ to any other one:
\begin{equation}
\sigma _{j}\left( \delta \right) :=P\left( \varepsilon \in \Omega :\mathcal{U%
}_{\varepsilon j}\left( \delta _{j}\right) \geq \max_{j^{\prime }\in
\mathcal{J}_{0}}\mathcal{U}_{\varepsilon j^{\prime }}\left( \delta
_{j^{\prime }}\right) \right) =P\left( \varepsilon \in \Omega :\mathcal{U}%
_{\varepsilon j}\left( \delta _{j}\right) >\max_{j^{\prime }\in \mathcal{J}%
_{0}\backslash \left\{ j\right\} }\mathcal{U}_{\varepsilon j^{\prime
}}\left( \delta _{j^{\prime }}\right) \right) .  \label{defMktShare}
\end{equation}%
Importantly, the uniqueness of the vector of market shares associated with a given utility vector $\delta $ does not imply that the demand inversion problem has a unique solution. There may be multiple vectors $\delta $ such that $%
\sigma (\delta )=s$.
Under Assumptions~\ref{ass:Increasing} and~\ref{ass:NoIndiff}, the vector of
market shares $s=\sigma \left( \delta \right) $ is a probability vector on $%
\mathcal{J}_{0}$, which prompts us to introduce $\mathcal{S}_{0}$, the set
of such probability vectors as
\begin{equation*}
\mathcal{S}_{0}:=\left\{ s\in \mathbb{R}_{+}^{\mathcal{J}_{0}}:\sum_{j\in
\mathcal{J}_{0}}s_{j}=1\right\} .
\end{equation*}

We formalize the definition of the demand map.

\begin{definition}[Demand map]
\label{def:MktShareMap}Under Assumption~\ref{ass:Increasing} and \ref{ass:NoIndiff}, the \emph{demand
map} is the map~$\sigma :\mathbb{R}^{\mathcal{J}_{0}}\rightarrow \mathcal{S}%
_{0}$ defined by expression~(\ref{defMktShare}).
\end{definition}

\subsection{Normalization}

\label{sec: normalization}

Any discrete choice model requires some normalization, because the choice
probabilities result from the comparison of the relative utility payoffs
from each alternative. Throughout the paper we normalize the
systematic utility associated to the default alternative to zero:
\begin{equation}
\delta _{0}=0,  \label{normalization}
\end{equation}%
and we use
\begin{equation}
\tilde{\sigma}:\mathbb{R}^{\mathcal{J}}\rightarrow \mathbb{R}^{\mathcal{J}}
\label{defMktShareNormalized}
\end{equation}%
to denote the map induced by this normalization.

In the special ARUM case where $\mathcal{U}%
_{\varepsilon _{i}j}\left( \delta _{j}\right) =\delta _{j}+\varepsilon _{ij}$%
, imposing normalization~(\ref{normalization}) is innocuous because the vector of systematic utilities $\left( \delta _{j}\right) $ yields
the same choice problem as the vector $\left( \delta _{j}+c\right) $
where $c$ is a constant; hence, for ARUMs, any normalization will
yield the same identified utility vectors $\delta $ up to an additive
constant. However, this is no longer true in nonadditive models, for which the
normalization~(\ref{normalization}) entails some loss of generality.\footnote{While it is possible to reparameterize any NARUM into an ARUM $\mathcal{U}_{\varepsilon j}(\delta)=\gamma_j+\eta_j$ by defining $\gamma_j\equiv \int \mathcal{U}_{\varepsilon j}(\delta) dP_{\varepsilon}$ and $\eta_j= \mathcal{U}_{\varepsilon j}(\delta)-\gamma_j$, it is not clear whether this reparameterization simplifies the demand inversion problem of recovering $\delta$, which is the main point of our paper (in some cases, the reparameterization obfuscates the demand inversion problem).} We explore this below in Section 5.

\subsection{Examples}

Next, we consider several examples of random utility models falling within
our framework.  Since we assume that market shares are generated by the mapping in Eq. (\ref{defMktShare}), we implicitly assume that the random element $\varepsilon$ is independently and identically distributed across all consumers in the market.   However, we make no restrictions on $\varepsilon$ across alternatives: as the examples below show, $\varepsilon$ can be individual-specific, choice-specific, or some combination of the two.  

\begin{example}[ARUM]
\label{ex:ARUM}In the additive random utility model (ARUM) one sets, $\Omega
=\mathbb{R}^{\mathcal{J}_{0}}$, so that $P$ is a probability distribution on
$\mathbb{R}^{\mathcal{J}_{0}}$, and%
\begin{equation*}
\mathcal{U}_{\varepsilon _{i}j}\left( \delta _{j}\right) =\delta
_{j}+\varepsilon _{ij}.
\end{equation*}

There are several well-known instances of ARUMs:

\underline{Logit model:} if $P$ is the distribution of a vector of size $%
\left\vert \mathcal{J}_{0}\right\vert $ of i.i.d. type 1-Extreme value
random variables, then the demand map is
given by $\sigma _{j}\left( \delta \right) =\exp \left( \delta _{j}\right)
/\left( \sum_{j^{\prime }\in \mathcal{J}_{0}} \exp(\delta _{j^{\prime }})\right) $.\footnote{For the logit model, the use of aggregate shares in forming moment conditions has been long recognized in the literature, going back to \citeasnoun{berkson1955maximum}; \citeasnoun{berry1994estimating} and \citeasnoun{berry1995automobile} showed that this connection applies more generally to static discrete-choice models beyond the simple logit model.}
The demand map is analytically invertible, and yields the familiar 
\textquotedblleft log-odds ratio\textquotedblright\ formula:

\begin{equation}
\delta_{j}=\log \left( s_{j}/s_{0}\right) .
\label{logOddsRatio}
\end{equation}


\underline{Pure characteristics model:} In this model, consumers value product $j$ only through its
measurable characteristics $x_{j}\in \mathbb{R}^{d}$, a vector of dimension $%
d$ associated to each alternative $j$, and the utility shock vector $%
\varepsilon _{i}$ is such that
\begin{equation}
\varepsilon _{ij}=\nu _{i}^{\intercal }x_{j}=\sum_{k=1}^{d}\nu
_{i}^{k}x_{j}^{k}\label{utility_purechar}
\end{equation}%
where $\nu _{i}$ is consumer $i$'s vector of taste-shifters, drawn from a
distribution $P_{\nu }$ on $\mathbb{R}^{d}$. In this case, there is no
closed-form expression for the demand map\footnote{See \citeasnoun{song2007measuring} and \citeasnoun{nosko2010competition} for two empirical applications of the pure characteristics demand model.  \citeasnoun{pang2015constructive} provide computational algorithms for estimating this model.}. We shall revisit this model in
the simulations and empirical application in sections \ref{sec:numerical experiments} and \ref{sec:empirical}.

The case where $d=1$ and there is only one characteristic, the price $p_j$, is the {\em vertical differentiation} or {\em quality ladder} model\footnote{%
See, among others, \citeasnoun{prescott1977sequential}, %
\citeasnoun{bresnahan1981departures}, \citeasnoun{esteban2007durable}.}, which has the utility specification 
\begin{equation*}
\mathcal{U}_{\varepsilon j}(\delta_j) = \delta _{j}-\nu_i p_{j},\quad \forall j.
\end{equation*}%
Here $\delta _{j}$ is interpreted as the quality of brand $j$, while the
nonlinear random utility shock $\nu _{i}$ measures household $i$'s
willingness-to-pay for quality. Below, in Section \ref{sec:numerical
experiments}, we consider a numerical example based on this framework
which is non-invertible.

\underline{Random coefficient logit model:} In the random coefficient logit
model popularized by BLP and~\citeasnoun{mcfadden2000mixed}, the random-utility shock is given by:%
\begin{equation*}
\varepsilon _{ij}=\nu _{i}^{\intercal }x_{j}+ \zeta _{ij}.
\end{equation*}%
This is the sum of two independent terms: one logit term $\zeta _{ij}$ and one pure characteristics
term $\nu _{i}^{\intercal }x_{j}$.  

\end{example}

\begin{example}[Risk aversion]
\label{Example 2: Discrete choice under uncertainty}Consider a market where
consumers are not fully aware of the attributes of a product at the time of
purchase. This may characterize consumers' choices in online markets, where
they have no opportunity to physically examine the goods under
consideration. Let $\varepsilon _{i}$ denote the relative risk aversion
parameter (under CRRA utility), and that the price of good $j$ is $p_{j}$.
Choosing option $j$ yields a consumer surplus of $\delta _{j}-p_{j}+\eta
_{j}$ where $\log \eta _{j}\sim N(0,1)$ is a quality shock unobservable at
the time of the purchase, and $\delta _{j}$ is the willingness to pay (in
dollar terms) associated to alternative $j$. At the time of the purchase,
the consumer's expected utility is%
\begin{equation*}
\mathcal{U}_{\varepsilon _{i}j}\left( \delta _{j}\right) =\mathbb{E}_{\eta
_{j}}\left[ \frac{\left( \delta _{j}-p_{j}+\eta _{j}\right) ^{1-\varepsilon
_{i}}}{1-\varepsilon _{i}}\right] ,
\end{equation*}%
where the expectation is taken over $\eta _{j}$ holding $\varepsilon _{i}$
constant. These kind of models are typically non-additive in $\varepsilon$.\footnote{See \citeasnoun{cohen2007estimating} and \citeasnoun{apesteguia2014discrete} for examples.}
\end{example}

\section{Equivalence of Discrete-Choice and Two-Sided Matching\label%
{sec:DemandInversionNoTies}}

In this section, we show a central result of this paper; namely, an equivalence between discrete-choice models and two-sided matching problems.
This equivalence is noteworthy as discrete choice problems are traditionally considered to be ``one-sided'' problems.
However, we will demonstrate that they are equivalent to a two-sided
\textquotedblleft marriage problem\textquotedblright\ between
consumers and yogurts, where both sides of the market must assent to be
matched. 

{Specifically, our main theorem demonstrates an equivalence between the discrete-choice framework described in the previous section and a two-sided matching market with imperfectly transferable utility 
(see~\citeasnoun{galichon2014empirical}). 
}
A consequence of this equivalence is that the demand inversion problem for estimating discrete-choice models can be equivalently formulated as solving for equilibrium utility payoffs from the corresponding two-sided matching problem, a well-understood exercise for which a number of algorithms are available.  


We begin by formally defining the object of interest for demand inversion, which is to recover the identified utility set:
\begin{definition}[Identified utility set]
\label{def:identifiedUtSet}Given a demand map $\tilde{\sigma}$ defined as
in~(\ref{defMktShareNormalized}) where Assumptions~\ref{ass:Increasing} and~%
\ref{ass:NoIndiff} are met, and given a vector of market shares $s$ that
satisfies $s_{j}>0$ and$~\sum_{j\in \mathcal{J}_{0}}s_{j}=1$, the \emph{%
identified utility set} associated with $s$ is defined by%
\begin{equation}
\tilde{\sigma}^{-1}\left( s\right) =\left\{ \delta \in \mathbb{R}^{\mathcal{J%
}}:\tilde{\sigma}\left( \delta \right) =s\right\} .  \label{identifiedSet}
\end{equation}
\end{definition}

\noindent Requiring  non-zero market shares is a standard assumption in demand inversion of discrete-choice models; see, e.g., Lemma 1 of \citeasnoun{berry2014identification}.\footnote{ In the ARUM case, a sufficient condition for this is that $\varepsilon$ has full-support (a nowhere vanishing density) on $\mathbb{R}^{\mathcal{J}_{0}}$ (see, e.g., \citeasnoun{GalichonHsieh2019}), which is satisfied in models such as logit, probit, and mixed-logit.  However, it may fail to hold in, e.g.,  pure characteristic models (see, e.g., \citeasnoun{song2007measuring}), without sufficient variation in the choice-specific characteristics. 
Moreover, we assume throughout that the demand model is correctly specified, so that the identified utility set in (\ref{identifiedSet}) is non-empty. 
}

\subsection{The Equivalence Theorem\label{par:noIndiff}}
Next we introduce a matching game between consumers and yogurts, which is essentially that of~\citeasnoun{demange1985strategy}; our
presentation of this model is inspired by the presentation in chapter 9 of~%
\citeasnoun{roth1992two}\footnote{%
Demange and Gale's model is discrete and extends the model of %
\citeasnoun{shapley1971assignment} beyond the transferable utility setting.
See also~\citeasnoun{crawford1981job}, \citeasnoun{kelso1982job}, %
\citeasnoun{hatfield2005matching}. We formulate a slight variant here in
that we (1) we allow for multiple agents per type and (2) do not allow for
unmatched agents. However, this leaves analysis essentially unchanged.}.
In this matching model, one side of the market consists of a continuum of consumers, distinguished by type $\varepsilon$, while the other side is an equi-massed continuum of jars of yogurt, distinguished by alternative (brand) $j\in\mathcal{J}_0$.
We start by introducing some terminology.  Let $\mathcal{M}\left( P,s\right) $ be the set of probability distributions on $%
\Omega \times \mathcal{J}_{0}$ with marginal distributions $P$ and $s$;
namely, $\pi \in \mathcal{M}\left( P,s\right) $ if and only if $\pi \left(
B\times \mathcal{J}_{0}\right) =P\left( B\right) $ for all $B$ (Borel-measurable
subsets of $\Omega$), and $\pi \left( \Omega \times \left\{ j\right\} \right)
=s_{j}$ for all $j\in \mathcal{J}_{0}$.

{Let $u_{\varepsilon}$ (resp. $v_j$) denote utility payoffs for each consumer $\varepsilon$ (resp. yogurt $j$) from the matching game.  These payoffs are endogenously determined in equilibrium, as will be clear below.}
Let $f_{\varepsilon j}\left(
u\right) $ be the transfer (positive or negative) needed by a consumer $%
\varepsilon $ in order to reach utility level $u\in \mathbb{R}$ when matched
with a yogurt $j$. Symmetrically, let $g_{\varepsilon j}\left( v\right) $ be
the transfer needed by a yogurt $j$ in order to reach utility level $v\in
\mathbb{R}$ when matched with a consumer $\varepsilon $.
(The connection between $f$ and $g$ and the primitives of the discrete choice model will be clarified below, in Eq.~\eqref{conversion}.) The functions $%
f_{\varepsilon j}\left( .\right) $ and $g_{\varepsilon j}\left( .\right) $
are assumed increasing for every $\varepsilon $ and $j$.  

{
We describe this game in the context of a dance party where consumers (indexed by $\varepsilon$)  seek out jars of yogurt $j$ to dance with.
%
Each consumer $\varepsilon$ charges a price of $u_{\varepsilon}$ utils for dancing; similarly, yogurt $j$ charges a price of  $v_j$ for dancing.  Thus, a dance between consumer $\varepsilon$ and yogurt $j$ involves a payment of $g_{\varepsilon j}(v_j)$ from consumer $\varepsilon$ to yogurt $j$, and a payment of $f_{\varepsilon j}(u_\varepsilon)$ from the yogurt to the consumer.  These transfers can be negative: for instance, if yogurt $j$ already provides consumer $\varepsilon$ with utility exceeding $u_{\varepsilon}$ by dancing with her, then $f_{\varepsilon j}(u_\varepsilon)<0$ and involves a payment from consumer $\varepsilon$ to yogurt $j$. 
}

\begin{definition}[Equilibrium outcome]
\label{def:equilibrium}An equilibrium outcome in the matching problem is an
element $\left( \pi ,u,v\right) $, where $\pi $ is a joint probability measure on $%
\Omega \times \mathcal{J}_{0}$, $u$ and $v$ are Borel-measurable functions on $%
\left( \Omega ,P\right) $ and $\left( \mathcal{J}_{0},s\right) $
respectively, such that:

(i) $\pi $ has marginal distributions $P$ and $s$: $\pi \in \mathcal{M}%
\left( P,s\right) $.

(ii) there is no blocking pair: $f_{\varepsilon j}\left( u_{\varepsilon
}\right) +g_{\varepsilon j}\left( v_{j}\right) \geq 0$ for all $\varepsilon
\in \Omega $ and $j\in \mathcal{J}_{0}$.

(iii) pairwise feasibility holds: if $\left( \varepsilon ,j\right) \in
Supp\left( \pi \right) $, then $f_{\varepsilon j}\left( u_{\varepsilon
}\right) +g_{\varepsilon j}\left( v_{j}\right) =0$.
\end{definition}

  Condition (i) implies that if a random vector $\left( \varepsilon ,j\right) $ has
distribution $\pi \in \mathcal{M}\left( P,s\right) $, then $\varepsilon \sim
P$ and $j\sim s$. Hence, $\pi $ is interpreted as the probability
distribution that a consumer with utility shock $\varepsilon $ is matched
with a yogurt of type $j$; in other words, $\pi \left( j|\varepsilon \right) $
denotes the conditional probability that an individual with utility shock $%
\varepsilon $ chooses yogurt $j$, which is degenerate (equaling 0 or 1) under Assumption 2. To understand condition (ii), 
consider that if there exists a consumer
$\varepsilon $ and a yogurt of type $j$ for which
$f_{\varepsilon j}\left( u_{\varepsilon }\right) +g_{\varepsilon
j}\left( v_{j}\right) <0$, then there exists $u^{\prime }>u_{\varepsilon }$
and $v^{\prime }>v_{j}$ such that $f_{\varepsilon j}\left( u^{\prime
}\right) +g_{\varepsilon j}\left( v^{\prime }\right) =0$. In other words, $(u', v')$ are feasible for $\left( \varepsilon
,j\right) $ and strictly improve upon the equilibrium payoffs $%
u_{\varepsilon }$ and $v_{j}$, which is ruled out in equilibrium. { The feasibility condition (iii) rules out matchings involving net positive transfers ($f_{\varepsilon j}(u_{\varepsilon})+g_{\varepsilon j}(v_j)>0$) which, intuitively, are those where one (or both) of the agents have equilibrium payoffs which are inachievably high compared to the utility they supply each other from matching.
}

The next theorem establishes that the demand inversion problem is
equivalent to a matching problem. The proofs for this and all subsequent
claims are in the appendix.

\begin{theorem}[Equivalence theorem]
\label{thm:equivalenceNoTies}Under Assumptions~\ref{ass:Increasing} and~\ref%
{ass:NoIndiff}, consider a vector of market shares $s$ that satisfies $%
s_{j}>0$ and$~\sum_{j\in \mathcal{J}_{0}}s_{j}=1$. Consider a vector $\delta
\in \mathbb{R}^{\mathcal{J}}$. Then, the two following statements are
equivalent:

(i) $\delta $ belongs to the identified utility set $\tilde{\sigma}%
^{-1}\left( s\right) =\left\{ \delta \in \mathbb{R}^{\mathcal{J}}:\tilde{%
\sigma}\left( \delta \right) =s\right\} $ associated with the market shares $%
s$ in the sense of Definition~\ref{def:identifiedUtSet} in the discrete
choice problem with $\varepsilon \sim P$;

(ii) there exists $\pi \in \mathcal{M}\left( P,s\right) $ and $%
u_{\varepsilon}=\max_{j\in \mathcal{J}_{0}} \mathcal{U}_{\varepsilon
j}\left( \delta _{j}\right)$ such that $\left( \pi ,u,-\delta \right) $ is
an equilibrium outcome in the sense of Definition~\ref{def:equilibrium} in
the matching problem with transfer functions
\begin{equation}
f_{\varepsilon j}\left( u\right) =u\text{ and }g_{\varepsilon j}\left(
-\delta \right) =-\mathcal{U}_{\varepsilon j}\left( \delta \right) .
\label{conversion}
\end{equation}
\end{theorem}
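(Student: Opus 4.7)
The plan is to unfold the equilibrium conditions in Definition~\ref{def:equilibrium} using the specific transfer functions in~(\ref{conversion}) and observe that they are precisely the assertion that $j$ maximizes $\mathcal{U}_{\varepsilon j}(\delta_j)$ wherever consumer $\varepsilon$ is matched to $j$. With $f_{\varepsilon j}(u)=u$ and $g_{\varepsilon j}(-\delta_j)=-\mathcal{U}_{\varepsilon j}(\delta_j)$, the no-blocking-pair condition (ii) becomes $u_\varepsilon\geq\mathcal{U}_{\varepsilon j}(\delta_j)$ for every $(\varepsilon,j)$, and pairwise feasibility (iii) says this inequality holds with equality on $\operatorname{Supp}(\pi)$. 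Jointly, these force $u_\varepsilon=\max_{j\in\mathcal{J}_0}\mathcal{U}_{\varepsilon j}(\delta_j)$ for each matched $\varepsilon$ and restrict the support of $\pi$ to the graph of the essentially unique (by Assumption~\ref{ass:NoIndiff}) argmax map $j^{*}(\varepsilon)$.

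For the direction (i)$\Rightarrow$(ii), starting from $\tilde\sigma(\delta)=s$, I would set $u_\varepsilon:=\max_{j}\mathcal{U}_{\varepsilon j}(\delta_j)$, which is Borel-measurable by Assumption~\ref{ass:Increasing}(a), and take $\pi$ to be the pushforward of $P$ under $\varepsilon\mapsto(\varepsilon,j^{*}(\varepsilon))$. The $\Omega$-marginal of $\pi$ equals $P$ by construction, and its $\mathcal{J}_0$-marginal at $j$ equals $P(j^{*}(\varepsilon)=j)=\sigma_j(\delta)=s_j$ by~(\ref{defMktShare}) and the hypothesis. Conditions (ii) and (iii) then follow directly from the definition of $u$ and from the fact that $\pi$ concentrates on the graph of $j^{*}$.

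For the direction (ii)$\Rightarrow$(i), I would take the given equilibrium outcome $(\pi,u,-\delta)$ and use (ii) to conclude $u_\varepsilon\geq\max_{j}\mathcal{U}_{\varepsilon j}(\delta_j)$ pointwise. Since $\pi\in\mathcal{M}(P,s)$, for $P$-almost every $\varepsilon$ there exists some $j$ with $(\varepsilon,j)$ in the support of $\pi$, so condition (iii) yields $u_\varepsilon=\mathcal{U}_{\varepsilon j}(\delta_j)$ for such a $j$; combined with (ii), this pins down $u_\varepsilon=\max_{j'}\mathcal{U}_{\varepsilon j'}(\delta_{j'})$ and selects $j=j^{*}(\varepsilon)$, which by Assumption~\ref{ass:NoIndiff} is $P$-a.s.\ unique. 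Consequently $\pi$ is the law of $(\varepsilon,j^{*}(\varepsilon))$, so its $\mathcal{J}_0$-marginal coincides with $\sigma(\delta)$, and equating it with $s$ gives $\delta\in\tilde\sigma^{-1}(s)$.

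The main technical obstacle I anticipate is the careful measure-theoretic handling of ``$\operatorname{Supp}(\pi)$'' in condition (iii) versus ``$\pi$-almost everywhere'', together with the assertion that $P$-almost every $\varepsilon$ is matched to some $j$ under $\pi$. The cleanest route is to disintegrate $\pi$ into regular conditional probabilities $\pi(\cdot\mid\varepsilon)$ against the marginal $P$ and argue pointwise on these conditionals; Assumption~\ref{ass:NoIndiff} then ensures they are $P$-a.s.\ Dirac masses at $j^{*}(\varepsilon)$. Once that bookkeeping is in place, both directions reduce to the elementary observation that the transfer functions in~(\ref{conversion}) encode exactly the utility-maximization structure of the discrete-choice problem.
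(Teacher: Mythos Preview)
Your proposal is correct and follows essentially the same approach as the paper's own proof: both directions proceed by unfolding the equilibrium conditions with the specific transfer functions~(\ref{conversion}), defining $\pi$ as the pushforward of $P$ along the argmax map $\varepsilon\mapsto(\varepsilon,j^*(\varepsilon))$ for (i)$\Rightarrow$(ii), and reading off $\sigma(\delta)=s$ from the $\mathcal{J}_0$-marginal of $\pi$ for (ii)$\Rightarrow$(i). Your remark on disintegration versus ``$\operatorname{Supp}(\pi)$'' is a reasonable tightening of a point the paper handles more informally via $\operatorname{Supp}(\pi(\cdot\mid\varepsilon))$, but the underlying argument is the same.
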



We offer some intuition of the equivalence here. 
A matching equilibrium requires that, given the transfer functions, the equilibrium transfers be set such that {\em both} sides of the market (consumers and yogurts) are happy with their matched partners.   On the consumer side,
consumer $%
\varepsilon $ seeks the yogurt $j$ which offers her the highest payoff $\mathcal{U}_{\epsilon}$.   On the yogurt side, each yogurt $j$ also seeks to maximize its payoff, which is equivalent to {\em minimizing} the mean utility $\delta_j$ that consumers receive from them; for that reason, we have the yogurt's payoff $v_j=-\delta_j$.  To understand the intuition for the transfer functions (\ref{conversion}), note that if we plug these into  Definition 3, the no-blocking pair condition becomes
\begin{equation}\label{eq:nbp1}
\forall\ \varepsilon,\ j: \quad u_{\varepsilon }=\max\limits_{j^{\prime }\in
\mathcal{J}_{0}}\mathcal{U}_{\varepsilon j^{\prime }}\left( \delta
_{j^{\prime }}\right) \geq \mathcal{U}_{\varepsilon j}\left( \delta
_{j}\right);
\end{equation}
and the feasibility pair condition becomes
\begin{equation}\label{eq:nbp2}
\text{if $%
\left( \varepsilon ,j\right) \in Supp(\pi ):$}\quad  u_{\varepsilon
}=\max\limits_{j^{\prime }\in \mathcal{J}_{0}}\mathcal{U}_{\varepsilon
j^{\prime }}\left( \delta _{j^{\prime }}\right) =\mathcal{U}_{\varepsilon
  j}\left( \delta _{j}\right)
\end{equation}
which correspond to the conditions characterizing optimal consumer choices in the discrete-choice problem.

Our equivalence result states that the identified set of utilities in the discrete-choice demand problems corresponds to the equilibrium set of some matching problem. This matching equivalence result also raises the possibility of multiplicity of the identified set of utilities.  Assumption \ref{ass:NoIndiff} implies a unique demand map (Eq. (\ref{ProbaDetermination})), and hence a unique allocation in the matching problem.  However, just as in \citeasnoun{shapley1971assignment}, there may be multiple payoffs (corresponding to the $\delta$'s here) which support the equilibrium allocation.  We will return to this below.

In the special case when the random utility model is additive (ARUM) as in example~\ref{ex:ARUM}, one has $f_{\varepsilon j}\left( u\right) =u$ and $g_{\varepsilon j}\left(
-\delta \right) =-\mathcal{U}_{\varepsilon j}\left( \delta \right)
=-\varepsilon _{j}-\delta _{j}$, so that the stability conditions become $%
u_{\varepsilon }+v_{j}\geq \varepsilon _{j}$ with equality for $\left(
\varepsilon ,j\right) \in Supp(\pi )$. As noted initially in~%
\citeasnoun{galichon2017cupid}, this problem is now equivalent to a matching problem
with transferable utility, where the joint surplus of a match between a
consumer $\varepsilon $ and a yogurt $j$ is $\varepsilon _{j}$.  We will discuss these in more detail in section \ref{sec:algo_arum} below.

\subsection{Lattice structure of the identified utility set\label{par:latticeStructure}}

Next, we show that the set-valued function $s\rightarrow \tilde{\sigma}^{-1}\left( s\right) $
is isotone\footnote{Isotone has the meaning of (monotone) increasing, and is a standard term used in the literature on lattices (see \citeasnoun{topkissupermodularity} for a reference on lattices and isotonicity in economics).} (in a sense to be made precise) and that $\tilde{\sigma}^{-1}\left(
s\right) $ has a lattice structure. 

The lattice structure is very useful for deriving algorithms for computing the identified utility set $\tilde{\sigma}^{-1}\left(
s\right) $ when demand is non-invertible.   The literature on the estimation of discrete choice models has favored an
approach based on imposing conditions guaranteeing invertibility of demand,
or equivalently situations in which $\tilde{\sigma}^{-1}\left( s\right) $ is
restricted to a single point. In particular, \citeasnoun{berry2013connected} (hereafter BGH)
provide conditions under which $\tilde{\sigma}^{-1}\left( s\right) $ should
contain at most one point, from which it also follows that the map $%
s\rightarrow \tilde{\sigma}^{-1}\left( s\right) $ is isotone on its domain.
In contrast, our approach here imposes minimal assumptions, and one must
consider {\em non-invertible} models, in which the demand map (\ref{ProbaDetermination}) is
not one-to-one and $\tilde{\sigma}^{-1}\left( s\right) $ is a set. In this
case, we need to generalize the notion of isotonicity which applies to the identified set $\tilde{\sigma}^{-1}(.)$. The next theorem states that the correct generalization is the notion of
isotonicity with respect to {\em Veinott's strong set order}.\footnote{See e.g. %
  \citeasnoun{veinottlectures}.  Veinott's strong set order provides an ordering over sets.   Let $\mathcal{X}$ and $\mathcal{X}'$ be two subsets in $\mathbb{R}^d$; we say that $\mathcal{X} < \mathcal{X}'$ in the Veinott strong set order iff $\forall x\in \mathcal{X}, x'\in\mathcal{X}'$, the ``join'' (or componentwise minimum) $x\wedge x' \in \mathcal{X}$ and the ``meet'' (or componentwise maximum) $x\vee x' \in \mathcal{X}'$.}
For the following, recall
the lattice \textquotedblleft join\textquotedblright\ and \textquotedblleft
meet\textquotedblright\ operators ($\wedge $ and $\vee $) are defined by $%
\left( \delta \wedge \delta ^{\prime }\right) _{j}:=\min \left\{ \delta
_{j},\delta _{j}^{\prime }\right\} $ (componentwise minimum) and $\left( \delta \vee \delta ^{\prime
}\right) _{j}:=\max \left\{ \delta _{j},\delta _{j}^{\prime }\right\} $ (componentwise maximum).

\begin{theorem}
\label{thm:InverseIsotoneNoTies}The set-valued function $s\rightarrow \tilde{%
\sigma}^{-1}\left( s\right) $ is isotone in Veinott's strong set order, i.e.
if $\delta \in \tilde{\sigma}^{-1}\left( s\right) $ and $\delta ^{\prime
}\in \tilde{\sigma}^{-1}\left( s^{\prime }\right) $ with $s\leq s^{\prime }$%
, then $\delta \wedge \delta ^{\prime }\in \tilde{\sigma}^{-1}\left(
s\right) $ and $\delta \vee \delta ^{\prime }\in \tilde{\sigma}^{-1}\left(
s^{\prime}\right) $.
\end{theorem}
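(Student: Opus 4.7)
The plan is to proceed by a direct combinatorial argument using only Assumption~\ref{ass:Increasing}(b) (monotonicity of $\mathcal{U}_{\varepsilon j}$ in $\delta_{j}$) and Assumption~\ref{ass:NoIndiff} (which ensures that the argmax is a.s.~a singleton). The matching equivalence of Theorem~\ref{thm:equivalenceNoTies} is not needed here, though one could alternatively recast the argument as a lattice-of-equilibria statement in the corresponding matching game. Write $\tilde{\delta}=\delta\wedge\delta'$ and $\bar{\delta}=\delta\vee\delta'$ (extended to $\mathcal{J}_{0}$ via the normalization $\delta_{0}=\delta'_{0}=0$), and for each $\eta\in\mathbb{R}^{\mathcal{J}}$ let $C(\varepsilon;\eta)$ denote the (a.s.~unique) maximizer of $j\mapsto \mathcal{U}_{\varepsilon j}(\eta_{j})$ over $\mathcal{J}_{0}$.

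The main step is the following selection-monotonicity lemma. Partition $\mathcal{J}_{0}=\mathcal{J}^{-}\cup\mathcal{J}^{+}$, where $\mathcal{J}^{-}=\{j:\delta_{j}\leq\delta'_{j}\}$ (so $0\in\mathcal{J}^{-}$) and $\mathcal{J}^{+}=\{j:\delta_{j}>\delta'_{j}\}$, so that $\tilde{\delta}_{j}=\delta_{j}$ on $\mathcal{J}^{-}$ and $\tilde{\delta}_{j}=\delta'_{j}$ on $\mathcal{J}^{+}$. I claim: for every $j\in\mathcal{J}^{-}$, $\{C(\varepsilon;\delta)=j\}\subseteq\{C(\varepsilon;\tilde{\delta})=j\}$, and symmetrically for every $j\in\mathcal{J}^{+}$, $\{C(\varepsilon;\delta')=j\}\subseteq\{C(\varepsilon;\tilde{\delta})=j\}$. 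Indeed, if $C(\varepsilon;\delta)=j\in\mathcal{J}^{-}$, then $\tilde{\delta}_{j}=\delta_{j}$, and for any $j'\neq j$ the monotonicity of $\mathcal{U}_{\varepsilon j'}(\cdot)$ together with $\tilde{\delta}_{j'}\leq\delta_{j'}$ yields
\begin{equation*}
\mathcal{U}_{\varepsilon j}(\tilde{\delta}_{j})=\mathcal{U}_{\varepsilon j}(\delta_{j})>\mathcal{U}_{\varepsilon j'}(\delta_{j'})\geq \mathcal{U}_{\varepsilon j'}(\tilde{\delta}_{j'}),
\end{equation*}
so $j$ remains the unique maximizer under $\tilde{\delta}$; the symmetric claim follows by swapping the roles of $\delta,\delta'$.

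Combining these two inclusions gives $\sigma_{j}(\tilde{\delta})\geq s_{j}$ on $\mathcal{J}^{-}$ and $\sigma_{j}(\tilde{\delta})\geq s'_{j}\geq s_{j}$ on $\mathcal{J}^{+}$, where the last step crucially uses the hypothesis $s\leq s'$. Summing over $\mathcal{J}_{0}$, both sides equal $1$, hence every inequality is an equality: $\sigma(\tilde{\delta})=s$, i.e., $\tilde{\delta}\in\tilde{\sigma}^{-1}(s)$. The join $\bar{\delta}\in\tilde{\sigma}^{-1}(s')$ is obtained by the mirror-image argument: for $j\in\mathcal{J}^{+}$, $\bar{\delta}_{j}=\delta_{j}$ and $\bar{\delta}_{j'}\geq\delta_{j'}$, so $C(\varepsilon;\bar{\delta})=j$ implies $C(\varepsilon;\delta)=j$, giving $\sigma_{j}(\bar{\delta})\leq s_{j}\leq s'_{j}$; the symmetric inclusion on $\mathcal{J}^{-}$ yields $\sigma_{j}(\bar{\delta})\leq s'_{j}$ there; summing to $1$ again forces equality.

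The only potential obstacle is bookkeeping: one must carefully track which of the two sets ($\mathcal{J}^{-}$ or $\mathcal{J}^{+}$) the index $j$ lies in, and which of $\delta$ or $\delta'$ the meet/join coincides with, in order to apply monotonicity in the right direction. The hypothesis $s\leq s'$ enters only once, to upgrade $\sigma_{j}(\tilde{\delta})\geq s'_{j}$ to $\sigma_{j}(\tilde{\delta})\geq s_{j}$ on $\mathcal{J}^{+}$ (and dually for $\bar{\delta}$), and is exactly what is needed so that the sum-to-one constraint converts componentwise inequalities into equalities. A minor measure-theoretic point is that $C(\cdot;\eta)$ is well-defined $P$-a.s.~only for each fixed $\eta$; we intersect the three full-measure sets where $C(\cdot;\delta)$, $C(\cdot;\delta')$, and $C(\cdot;\tilde{\delta})$ (or $\bar{\delta}$) are singletons, which is still of full measure by Assumption~\ref{ass:NoIndiff}.
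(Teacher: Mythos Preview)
Your proof is correct and follows essentially the same route as the paper's: partition $\mathcal{J}_{0}$ according to which of $\delta_{j},\delta'_{j}$ is larger, use monotonicity of $\mathcal{U}_{\varepsilon j}(\cdot)$ to obtain componentwise inequalities $\sigma_{j}(\delta\wedge\delta')\geq s_{j}$ (resp.\ $\sigma_{j}(\delta\vee\delta')\leq s'_{j}$), and then invoke the sum-to-one constraint to upgrade these to equalities. Your version is in fact a bit more explicit than the paper's, phrasing the monotonicity step at the level of choice events $\{C(\varepsilon;\cdot)=j\}$ rather than directly at the level of probabilities, and carefully noting that $0\in\mathcal{J}^{-}$ so that the hypothesis $s\leq s'$ (on $\mathcal{J}$) applies precisely where needed.
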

In the special case where $s\rightarrow \tilde{\sigma}^{-1}\left( s\right) $ is a singleton, we recover the isotonicity of the inverse demand as in BGH.  Going further, by
taking $s=s^{\prime }$ in Theorem~\ref{thm:InverseIsotoneNoTies}, we obtain that, whenever it is non-empty, the set $\tilde{\sigma%
}^{-1}\left( s\right) $ is a \emph{lattice}\footnote{%
Whether the identified set is empty is considered in the Section \ref{sec:additional}.}:

\begin{corollary}
\label{cor:latticeStructureNoTies}Under Assumption~\ref{ass:Increasing} and~%
\ref{ass:NoIndiff}, if $\tilde{\sigma}^{-1}\left( s\right) $ is non-empty,
it is a lattice.  That is,
if $%
\delta, \delta^{\prime}\in \tilde{\sigma}^{-1}\left( s\right)$, then both $(\delta \wedge \delta ^{\prime
}), (\delta \vee \delta ^{\prime})\in \tilde{\sigma}^{-1}\left( s\right) $.
\label{cor:lattice}
\end{corollary}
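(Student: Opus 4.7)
The plan is to obtain the lattice property as an immediate specialization of Theorem~\ref{thm:InverseIsotoneNoTies} to the diagonal case $s' = s$. The key conceptual observation is that Veinott's strong set order, when applied to a nonempty set compared with itself, is exactly the defining closure property of a sublattice of $\mathbb{R}^{\mathcal{J}}$ under the componentwise partial order: the condition $\mathcal{X} \leq \mathcal{X}$ in the strong set order unpacks to the statement that $\delta \wedge \delta'$ and $\delta \vee \delta'$ belong to $\mathcal{X}$ whenever $\delta, \delta'$ do.

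Concretely, I would pick two arbitrary elements $\delta, \delta' \in \tilde{\sigma}^{-1}(s)$; this is meaningful because $\tilde{\sigma}^{-1}(s)$ is assumed nonempty. Setting $s' := s$ in Theorem~\ref{thm:InverseIsotoneNoTies}, the hypothesis $s \leq s'$ holds trivially (with equality), so the conclusion of that theorem applies and delivers $\delta \wedge \delta' \in \tilde{\sigma}^{-1}(s)$ together with $\delta \vee \delta' \in \tilde{\sigma}^{-1}(s)$. Since these are precisely the componentwise minimum and maximum of $\delta$ and $\delta'$, and since the pair was arbitrary, $\tilde{\sigma}^{-1}(s)$ is closed under both lattice operations and is therefore a lattice in $\mathbb{R}^{\mathcal{J}}$.

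There is no real obstacle here: all the substantive work has already been carried out in Theorem~\ref{thm:InverseIsotoneNoTies}, which in turn relies on the equivalence with two-sided matching established in Theorem~\ref{thm:equivalenceNoTies}. The corollary is essentially the motivation for phrasing the isotonicity in Theorem~\ref{thm:InverseIsotoneNoTies} in terms of Veinott's strong set order rather than any weaker notion of set-valued monotonicity\textemdash precisely so that the diagonal specialization $s = s'$ yields the lattice structure of the identified utility set as a costless byproduct.
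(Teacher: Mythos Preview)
Your proposal is correct and matches the paper's own argument: the paper likewise deduces the corollary directly from Theorem~\ref{thm:InverseIsotoneNoTies} by taking $s'=s$, so that $\delta\wedge\delta'$ and $\delta\vee\delta'$ both lie in $\tilde{\sigma}^{-1}(s)$. One minor inaccuracy in your commentary: the paper's proof of Theorem~\ref{thm:InverseIsotoneNoTies} is a direct market-share argument and does not actually invoke the matching equivalence of Theorem~\ref{thm:equivalenceNoTies}, though this has no bearing on the validity of your derivation of the corollary.
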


This result is similar to~%
\citeasnoun{demange1985strategy}, who showed that the set of payoffs which
ensures a stable allocation is a lattice whenever it is non-empty.
This implies that the set of identified
utilities has a \textquotedblleft maximal\textquotedblright\ (resp.
\textquotedblleft minimal\textquotedblright ) element which is composed of
the \emph{component-wise} upper- (resp. lower-) bounds among all the utility
vectors in $\tilde\sigma^{-1}(s)$.   The upper bound corresponds to the unanimously most
preferred stable allocation for the consumers (\textquotedblleft
consumer-optimal\textquotedblright ) and the unanimously least preferred
stable allocation for the yogurts; 
conversely, the lower bound corresponds to the unanimously most preferred stable allocation for the
yogurts (\textquotedblleft yogurt-optimal\textquotedblright ) and least preferred for the consumers.  
%
Formally, define
\begin{equation*}
\tilde{\delta}_{j}^{\min }\left( s\right) =\min \left\{ \delta _{j}:\delta
\in \tilde{\sigma}^{-1}\left( s\right) \right\} \text{ and }\tilde{\delta}%
_{j}^{\max }\left( s\right) =\max \left\{ \delta _{j}:\delta \in \tilde{%
\sigma}^{-1}\left( s\right) \right\} .
\end{equation*}

Then the lattice property implies:
\begin{itemize}
\item[(i)] The set $\tilde{\sigma}^{-1}\left( s\right) $ has a minimal and a
maximal element:
\begin{equation*}
\tilde{\delta}^{\min }\left( s\right) \in \tilde{\sigma}^{-1}\left( s\right)
\text{ and }\tilde{\delta}^{\max }\left( s\right) \in \tilde{\sigma}%
^{-1}\left( s\right) .
\end{equation*}

\item[(ii)] Any $\delta \in \tilde{\sigma}^{-1}\left( s\right) $ is such that
  \begin{equation*}
\tilde{\delta}^{\min }\left( s\right) \leq \delta \leq \tilde{\delta}^{\max
}\left( s\right).
\end{equation*}

\item[(iii)] $\tilde{\sigma}^{-1}\left( s\right) $ is a singleton if and only
if
\begin{equation*}
\tilde{\delta}^{\min }\left( s\right) =\tilde{\delta}^{\max }\left( s\right)
.
\end{equation*}
\end{itemize}

Practically, most applications of partially
identified models focus on computing the component-wise upper and lower
bounds of the identified set of parameters; for general partially identified models, the vector of component-wise bounds will typically lie {\em outside} the (joint) identified set of parameters.   In contrast, our
lattice result here implies that these component-wise upper and lower bounds constitute {\em sharp} upper and lower bounds for the parameter vector as a whole, in the sense that they are attainable for selection mechanisms which place all probability on the highest (for upper bound) or lowest (for lower bound) payoffs for consumers.\footnote{In addition, the upper- and lower-lattice bounds here are the ``widest" possible in the sense that for each $j$, no utility lower than ${\delta^{\min}}_j$, or higher than  $\delta^{\max}_j$, can rationalize the observed set of market shares.   {Moreover, the equivalence theorem also implies that the identified utility set $\sigma^{-1}(s)$, while not necessarily convex, is connected, which derives from the properties of the core of matching games (section 9.2 of \citeasnoun{roth1992two}).} 
} 


In addition, the matching literature provides
algorithms to compute these extremal elements, which can be directly used to assess multiplicity: indeed, $\tilde{\sigma}^{-1}\left( s\right) $ is a single element (point-identified) if and only if its minimal and maximal
elements coincide. This flexibility in handling models for which the researcher may not know {\em a priori} whether model parameters are point- or partially-identified is an important contribution of the matching approach
developed in this paper.  We turn to these algorithms next.

\section{Matching-based Algorithms\label{sec:algorithms}}
The equivalence established in Theorem~\ref{thm:equivalenceNoTies} between
matching and discrete-choice models allows us to leverage several matching algorithms for both NARUMs and ARUMs. The use of these algorithms in the empirical discrete-choice literature is new.
In addition, matching-based algorithms have several advantages over existing procedures: (i) they can handle non-invertibility of the demand map, as all  of these algorithms allow for the case where multiple utility vectors can rationalize the same set of market shares; and (ii) these algorithms do not require smoothness of the demand map and therefore can handle some well-known models\textemdash like the pure characteristics model\textemdash which have non-smooth demand maps.  In contrast, existing algorithms for demand inversion  often rely on directly solving
the demand map $s_j=\sigma_j(\delta)$ for $\delta$ using fixed-point iterations or nonlinear-equation solvers, which typically requires smoothness of the demand map, and also rules out non-invertibility of the demand map.


We introduce three matching-based algorithms in this section.  The first is an algorithm for matching models with imperfectly transferable
utility (ITU)  which can be used for demand inversion in both  ARUMs or NARUMs.  This algorithm, called {\em Market Share Adjustment}, is essentially an ``accelerated'' version of the classic deferred acceptance algorithm (\citeasnoun{gale1962college}, \citeasnoun{crawford1981job}), which  iteratively adjusts the payoffs of the potential partners to achieve equilibrium. The second and third algorithms, in Section 4.2, are methods for computing stable allocations in two-sided matching models with transferable utility (TU), and apply only to ARUMs.  We consider a {\em linear-programming} approach based on the classic \citeasnoun{shapley1971assignment} assignment game, and a version of the {\em auction algorithm} of \citeasnoun{bertsekas1992auction}, augmented to produce bounds for partially-identified settings.

While the model in this paper  assumes a continuum of agents on each side of the market, for computational purposes we approximate this with a finite market populated by an equal (and large but finite) number, denoted $N$, of consumers and jars of yogurt.\footnote{In Appendix \ref{sec:semidiscrete} we present the {\em semi-discrete} algorithm for ARUMs, which is exact in that it computes the continuum problem directly.  However, as we explain there, its use  requires the unobserved taste vector to be (jointly) uniformly distributed over a polyhedron, which limits its general application.}    On the consumer side, each consumer $i\in \left\{1,...,N\right\} $ is characterized by a value of the utility shock $\varepsilon_i$ drawn i.i.d. (across $i$) from $P$, the distribution of the utility shocks. For a given vector of market shares $\left(s_0, s_1,\ldots, s_J\right)$, the number of jars of each brand $j$ of yogurt are set proportionately to the observed market share; that is,
$m_{j}\approx Ns_{j}\in \mathbb{N}$ of yogurts of type $j$ and, if needed, $m_{j}$ has
been rounded to an adjacent integer so that $\sum_{j\in \mathcal{J}_{0}}m_{j}=N$.
Throughout we maintain the utility normalization $\delta _{0}=0$.

\subsection{Deferred-acceptance algorithm (for both NARUMs and ARUMs)}

Theorem~\ref{thm:equivalenceNoTies} establishes an equivalence between the identified utility set and equilibrium payoffs in a two-sided matching game with imperfectly transferable utility.  Hence, for computing the identified utilities
one could use the deferred-acceptance algorithms developed in \citeasnoun{crawford1981job} and %
\citeasnoun{kelso1982job} which are generalizations of Gale and Shapley's \citeyear{gale1962college}
deferred-acceptance algorithm.  But these algorithms are very slow and inefficient,
especially in the common situation where there are fewer products (i.e. ``brands of yogurt'') than consumers. 

\subsubsection{Market shares adjusting algorithm (MSA)\label{par:MSA}}


As with all deferred-acceptance algorithms, there are two versions of the algorithm -- the ``consumer-proposing'' and ``yogurt-proposing'' versions --
return (resp.)  the lattice upper bound or lower bound on the utility parameters. In the case when the model is point identified, the upper bound and lower bound will coincide; hence running both versions of this algorithm yields a data-driven assessment of whether the model is point- or partially-identified.

In the ``consumer proposing'' version, the utilities ($\delta$'s) start at a high level, and consumers choose, in successive rounds, jars of yogurts which maximize their utilities. 
Between rounds, the utilities pertaining to the brands of yogurts in excess demand (ie. chosen by more consumers than available jars) are decreased by an adjustment factor.   Bidding continues until a round is
reached where the reference brand of yogurt ($j=0$) is in excess demand: that is, when the number of consumers choosing jars of brand 0 is greater or equal to the number of its available jars.
  In the original \citeasnoun{kelso1982job} version, the adjustment is done for {\em each jar} of yogurt separately, leading to very slow convergence with a large number of consumers and jars.
The MSA algorithm speeds this up by adjusting the utilities for {\em all jars of the same brand} of yogurt simultaneously.   Since this accelerated process can lead to \textquotedblleft
overshooting\textquotedblright\ (in which utilities move below their
equilibrium values), the MSA algorithm involves running a deferred-acceptance procedure multiple times with successively smaller adjustment factors.\footnote{
  This adjustment factor plays a role analogous to the step size in optimization procedures.   One typically chooses a larger step size in initial \emph{exploration} phases to move parameters away from regions where the optimum is unlikely to be. Choosing a larger step size speeds up the routine, but if the step size is too large, one might miss (``overshoot'') the optimum. Therefore, in later \emph{exploitation} phases, one decreases the step size to achieve a better accuracy.  Similar heuristics are used also to set the temperature parameter in simulated annealing, or the ``learning rate'' in machine learning procedures.
  }

We present below the pseudo code for the consumer-proposing version of the MSA algorithm, which obtains the lattice upper bound; Appendix \ref{algo for the lower bound} contains a version of the algorithm which yields the lattice lower bound.
Define $\bar{\delta}_{j}=\sup_{i\in \left\{ 1,...,N\right\} }\mathcal{U}%
_{\varepsilon_ij}^{-1}\left( \mathcal{U}_{i0}\left( \delta _{0}\right) \right) $.
Clearly, $\bar{\delta}$ is an upper bound for the stable payoffs and for the lattice upper bound.   Let $\eta^{tol}$ denote a small adjustment factor, which is a design parameter for the algorithm.

\begin{algorithm}[Consumer-proposing MSA]
\label{algo:msa}
\linespread{1}\small\normalsize
\par
\hfill \break
\noindent Define $\delta_{j}^{init}=\sup_{i\in \left\{ 1,...,N\right\} }\mathcal{U}%
_{\varepsilon_ij}^{-1}\left( \mathcal{U}_{i0}\left( \delta _{0}\right) \right) $
\Comment{\# Starting values above upper-bound}
\newline

\noindent\underline{Begin {\em\bfseries Adjustment (outer) Loop}}

Initialize $\eta ^{init}>>\eta^{tol}$ and $\delta _{j}^{init}=\delta^{return}_{j}$.

Repeat:

\qquad Call {\em\bfseries Deferred-Acceptance Loop} with $\left( \delta_{j}^{init},\eta ^{init}\right) $  which returns $\left( \delta', \eta'\right)$

\qquad Set $\delta ^{init}= \delta'+2\eta'$ and $%
\eta ^{init}= \eta'$

Until $\delta _{j}^{return}<\delta _{j}^{init}$ for all $j \in \mathcal{J}$. \Comment{\# Algo stops if all $\delta$ have decreased}

\noindent\underline{End {\em\bfseries Adjustment Loop}}

\bigskip

\noindent\underline{Begin {\em\bfseries Deferred-acceptance (inner) Loop}}

Require $\left( \delta _{j}^{init},\eta ^{init}\right) .$

Set $\eta =\eta ^{init}$ and $\delta=\delta ^{init}$.

While  $\eta \geq \eta ^{tol}$  \Comment{\# Run as long as tol. factor above threshold $\eta^{tol}$}

\qquad If $j\in \arg \max_{j}\mathcal{U}%
_{\varepsilon j}(\delta _{j})$  then $\pi _{ij}=1$ else $\pi _{ij}=0$  \Comment{\#  i is matched to optimizing j}

\qquad If $\sum\limits_{i}\pi _{i0} < m_{0}$, then for all $j\in \mathcal{J}$ \Comment{\# If brand 0 in excess supply then}

\qquad \qquad if $\sum\limits_{i}\pi _{ij}>m_{j}$ then $\delta _{j}= \delta _{j}-\eta$ . \Comment{\# decrease $\delta_j$, for brands $j$ w/ excess demand}

\qquad Else \Comment{\# Else if brand 0 in excess demand (overshooting)}

\qquad \qquad $\delta _{j}=
\delta _{j}+2\eta $ for all $j\in \mathcal{J}$ \Comment{\# Increase all $\delta$ (except $\delta_0$)}

\qquad \qquad $\eta = \eta /4.$ \Comment{\# Decrease the tol. factor for next loop}

End While

Return $\delta ^{return}=\delta $ and $\eta ^{return}=\eta $.

\noindent\underline{End {\em\bfseries Deferred-acceptance Loop}}

\end{algorithm}

As shown above, the consumer-proposing MSA consists of a deferred-acceptance loop nested inside of an adjustment loop.   In the deferred-acceptance loop, two cases can occur.  In the ``good case'', the deferred-acceptance loop starts with values $\delta_{j \in \mathcal{J}}$ above the lattice upper bound and an adjusting factor $\eta$ small enough, then all $\delta_{j \in \mathcal{J}}$ will reach the lattice upper bound without overshooting.  In the ``bad case'', that is when the deferred-acceptance loop starts with one $\delta_{j \in \mathcal{J}}$ below its upper bound and an adjusting factor $\eta$ small enough, then this $\delta_j$ will not decrease during the loop.
  The outer adjustment loop repeatedly calls the deferred-acceptance loop for decreasing values of the increment $\eta$.   It terminates when the utilities outputted by the deferred-acceptance loop have all decreased (indicating that the approximation loop is in the ``good case''); otherwise, the utilities are increased and the deferred-acceptance loop is called again.   

While we have not yet formally proven convergence of the MSA algorithm, we find that it runs remarkably quickly in all our simulations, relative to the \citeasnoun{crawford1981job} algorithm. In practice, one can assess the convergence of the algorithm in the outer loop by comparing the actual market shares with the predicted market shares evaluated using the $\delta$'s returned in each call to the inner loop.


\subsection{Transferable Utility Matching algorithms (for ARUMs)}

\label{sec:algo_arum}

For ARUMs, we can use algorithms for matching models with transferable utility (TU).   A
well-known result in optimal transport theory (see \citeasnoun{galichon2015optimal}) states that the equilibrium matching under
transferable utility maximizes the total surplus $\mathbb{E}\left[
\varepsilon _{\tilde{j}}\right] $ over all the distributions of $\left(
\varepsilon ,\tilde{j}\right) $ such that $\varepsilon \sim P$ and $\tilde{j}%
\sim s$, that is $\pi $ solves%
\begin{equation}\label{eq: OT_primal}
\max_{\left( \varepsilon ,\tilde{j}\right) \sim \pi \in \mathcal{M}\left(
P,s\right) }\mathbb{E}_{\pi }\left[ \varepsilon _{\tilde{j}}\right] .
\end{equation}%
The equilibrium payoffs $u$ and $\delta $ are the parameters which optimize the dual problem:
\begin{eqnarray}
\label{eq: OT_dual}
&&\inf\limits_{u,\delta :\delta _{0}=0}\left\{ \mathbb{E}_{P}\left[
u_{\varepsilon }\right] -\mathbb{E}_{s}\left[ \delta _{\tilde{j}}\right]
\right\}  \label{semi-discreteTransport} \\
s.t. &&u_{\varepsilon }-\delta _{j}\geq \varepsilon _{j}~\forall \varepsilon
\in \Omega ,~j\in \mathcal{J}_{0}.  \notag
\end{eqnarray}%

These are  convex programs which can be solved efficiently by linear programming (LP) or auction algorithms.  We contribute two novel modifications to the literature. For LP, we introduce a formulation that \emph{simultaneously} inverts multiple demand maps. Moreover, we augment both the LP and auction algorithms to compute lattice bounds for the identified utility set $\tilde\sigma^{-1}$. 

\subsubsection{Linear programming (Shapley-Shubik)\label{par:linprogr}}
For the large but finite discretization described above, the linear program (\ref{eq: OT_dual}) becomes \begin{eqnarray}
\label{eq: LP}
\inf\limits_{u_{i},\delta _{j}} &&\sum_{i=1}^{N}\frac{1}{N}u_{i}-\sum_{j\in
\mathcal{J}_{0}}s_{j}\delta _{j}  \\
s.t. &&u_{i}-\delta _{j}\geq \varepsilon _{j}^{i},\quad \forall i,j.\notag
\end{eqnarray}
which coincides with the dual of Shapley and Shubik's \citeyear{shapley1971assignment} classic assignment game.
While Shapley and Shubik showed that the set of optimizers in (\ref{eq: LP}) is a lattice, with bounds equal to the consumer-optimal and yogurt-optimal payoffs, they did not discuss how to compute these bounds.
  In principle, the upper (resp. lower) bounds can be obtained from the following problem:
  $$
\underset{u,\delta,\pi}{\max} \ \text{(resp. $\underset{u,\delta,\pi}{\min}$)}  \sum_{j=0}^J\delta_j\quad \text{s.t. $(u,\delta)\in \left\{\text{arginf}\ (\ref{eq: LP})\right\}$.}
  $$
This is a ``bilevel'' program, as the solutions to the LP in (\ref{eq: LP}) are used as the inputs into a second LP.\footnote{See \citeasnoun{Book_Bilevel_Program}.}   As is well-known, we can collapse a bilevel LP into a regular LP by replacing the lower-level LP (corresponding to (\ref{eq: LP})) with its optimality conditions.   That is, the upper (resp. lower) bounds can be obtained from the following LP:
\begin{align}
\begin{array}{rl}\label{eq: LP_bound_MPEC}
\underset{u,\delta,\pi}{\max} \ \text{(resp. $\underset{u,\delta,\pi}{\min}$)} & \sum_{j=0}^J\delta_j\\
\textrm{s.t.}& \sum_{i=1}^{N}\pi_{ij}=s_{j},\quad\forall j\\[10pt]
&\sum_{j=0}^{J}\pi_{ij}=\frac{1}{N},\quad\forall i\\[10pt]
&\pi_{ij}\geq 0,\quad\forall i, j\\[10pt]
&u_{i}-\delta_{j}\geq \epsilon_{ij},\quad\forall i, j\\[10pt]
&\sum_{i=1}^{N}\sum_{j=0}^{J}\pi_{ij}\epsilon_{ij}=\frac{1}{N}\sum_{i=1}^N u_{i}-\sum_{j=0}^J s_{j}\delta_{j},\\[10pt]
&\delta_{0}=0. \\
\end{array}
\end{align}
Constraints 1-3 in (\ref{eq: LP_bound_MPEC}) are the constraints of the primal assignment game (\ref{eq: OT_primal}), whereas constraint 4 appears in the dual problem (\ref{eq: OT_dual}).  Constraint 5 equates the primal and dual objectives at the optimum.  Taken together, these 5 constraints characterize the optimizing $(u,\delta)$ from (\ref{eq: OT_dual}).\footnote{See, for instance, \citeasnoun{Mangasarian1969}.}  Constraint 6 is our maintained normalization.

{\bfseries Combining LP problems.} A further benefit of the LP approach is that multiple demand inversion problems for different markets can be {\em combined} and solved simultaneously.   Specifically,
suppose there are $m=1,\dots,M$ markets. Instead of solving Problem (\ref{eq: LP}) for each of the $M$ markets separately, we combine these $M$ problems into \emph{one} problem to invert {\em all} demand maps \emph{simultaneously}:
\begin{eqnarray}
\label{eq: combined_LP}
\inf\limits_{u_{mi},\delta_{mj}} &&\sum_{m=1}^{M}\left(\sum_{i=1}^{N}\frac{1}{N}u_{mi}-\sum_{j\in
\mathcal{J}_{0}}s_{mj}\delta _{mj}\right)  \\
s.t. &&u_{mi}-\delta _{mj}\geq \varepsilon _{mj}^{i},\quad\forall i,j,m \notag
\end{eqnarray}

\noindent where all of the subscripts are augmented by the market index $m$. Since the decision variables $(u_{mi},\delta_{mj})$ and the associated constraints are market-specific, the resulting constraint coefficient matrix has a block-diagonal structure, which enables the use of efficient parallel sparse matrix routines for modern LP solvers.  We utilize this simultaneous demand inversion approach in the empirical application below, and confirm how one large but sparse problem (involving a larger number of parameters and constraints) is solved much more quickly than $M$ small problems.   

\subsubsection{Auction algorithms}\label{pgh:auctions}

Auction-type algorithms \`{a} la \citeasnoun{bertsekas1992auction} provide
an alternative approach to linear programming methods for solving TU-matching models. In these algorithms, unassigned
persons bid simultaneously for objects, decreasing their systematic
utilities (or equivalently raising their prices). Once all bids are in,
objects are assigned to the highest bidder. The procedure is iterated until
no one is unassigned. The description here follows 
\citeasnoun{bertsekas1989auction}.   We let $\kappa\in\left\{1,\ldots,N\right\}$ index jars of yogurt, where $j(\kappa)\in\mathcal{J}_0$ denotes the brand identity for the $\kappa$-th jar of yogurt.

We define the prices $p_{\kappa}$ as negative systematic utilities: $p_{\kappa} = -\delta_{\kappa}$.

\begin{algorithm}[Auction]
\label{algo:auction SO} Start with an empty assignment and a given vector of
prices $p_{\kappa}$ and set a scale parameter $\eta>0$.

\underline{Bidding phase}

(a) Each currently unassigned consumer $i$ chooses the jar $\kappa^{\star}$
to maximize utility:
\begin{equation}
\mathcal{U}_{\varepsilon_i \kappa^{\star}}(-p_{\kappa^{\star}}) =
\max_{\kappa} \mathcal{U}_{\varepsilon_i \kappa}(-p_{\kappa}).
\end{equation}

(b) Consumer $i$'s bid is set to:
\begin{equation}\label{eq:bidauction}
b_{i\kappa^{\star}} = p_{\kappa^{\star}} + \mathcal{U}_{\varepsilon_i
\kappa^{\star}}(-p_{\kappa^{\star}}) - w_i + \eta
\end{equation}
where $w_i$ denotes the utility from consumer $i$'s second-best choice:
\begin{equation}
w_i = \max_{j(\kappa) \neq j(\kappa^{\star})} \mathcal{U}_{\varepsilon_i
\kappa}(-p_{\kappa})
\end{equation}

\underline{Assignment phase}

Jar $\kappa$ is assigned to its highest bidder $i^{\star}$ and its price is raised to bidder $i^{\star}$'s bid:
\begin{equation}
p_{\kappa} : = b_{i^{\star}\kappa}
\end{equation}

\underline{Final step}

When no one is left unassigned the solution $\delta_j$ is recovered as:
\begin{equation}
\delta_j = - \min_{\kappa \in j(\kappa)} p_{\kappa}
\end{equation}
\end{algorithm}

Intuitively, the algorithm implements a Walrasian-style bidding procedure.
In each round, each unassigned
consumer bids for his favorite jar of yogurt. His bid (Eq. (\ref{eq:bidauction})) is equal to the difference
between the utilities from his most-preferred and second-most-preferred brands of yogurt (plus an extra $\eta>0$ factor to ensure that prices are increasing each round). The
consumer which makes the highest bid for a jar is assigned to it, and its
price is increased by the amount of the bid; a consumer previously assigned to this jar becomes unassigned and bids in the next round. The algorithms stops when all
individuals are assigned.
The performance of the algorithm is considerably improved by
applying it several times, starting with a large value of $\eta$ and
gradually decreasing it.

{\bfseries\em Computing utility bounds.}
The auction algorithm as described above works for the case when the demand map is invertible (so that the identified set of utilities is a singleton).   In practice, we do not know {\em a priori} whether the demand map is invertible or not; hence, we must extend the algorithm to allow for multiplicity of solutions.   To do this, we note that the auction algorithm described above 
returns the optimal allocation  $\pi_{ij}$  (for $i=1,\ldots, N$ and $j\in\mathcal{J}_0$) which is equal to one if consumer $i$ matches with a jar of yogurt brand $j$, and zero otherwise. 
In principle the utility bounds could be solved from the earlier linear programming problem (\ref{eq: LP_bound_MPEC}), with the matching indicators $\pi_{ij}$ fixed at the optimal allocation.
However, that would be inefficient as it doesn't fully exploit our knowledge of the optimal allocation.   Here we present a simpler alternative which is much quicker as it converges monotonically to the bounds of $\delta$.

{For the problem (\ref{eq: LP_bound_MPEC}), given the optimal allocation $\left\{ \pi_{ij}\right\}_{i,j}$, 
the solutions $(u,\delta)$ 
satisfy the inequalities 
$$u_{i}\geq \mathcal{U}_{\varepsilon _{i}j}(\delta _{j})\equiv \delta_j+\varepsilon_{ij}\ \forall i,j; \quad \text{
with equality for $\pi _{ij}>0$}.$$ 
 Accordingly, we construct an operator $T : \mathbb{R}^{\mathcal{I} \cup \mathcal{J} } \to\  \mathbb{R}^{\mathcal{I} \cup \mathcal{J} }$ given by
\begin{equation}\label{eq:bellmanford}
T_i(u,\delta) = \max(u_i, \max_{j\in\mathcal{J}_0} (\mathcal{U}_{\varepsilon_ij}(\delta_j))),\quad
T_j(u,\delta) = \max(\delta_j, \max_{i:\pi_{ij}>0} \mathcal{U}^{-1}_{\varepsilon_ij}(u_i)),\quad
\delta_0 =0.
\end{equation}
The set of fixed points of this operator  $u_i = T_i(u,\delta),\delta_j = T_j(u,\delta)$ satisfy the inequalities above. 
To see this, notice $\max (u_{i},\max_{j\in \mathcal{J}_{0}}(\mathcal{U}_{\varepsilon
_{i}j}(\delta _{j})))=u_{i}$  if and only if 
$u_{i}\geq \mathcal{U}_{\varepsilon _{i}j}(\delta _{j})$ for all $j\in \mathcal{J}_{0}$. Similarly, $\max (\delta _{j},\max_{i:\pi _{ij}>0} \mathcal{U}_{\varepsilon _{i}j}^{-1}(u_{i}))=\delta _{j}$ holds if and only if $\delta _{j}\geq \max_{i:\pi _{ij}>0}\mathcal{U}_{\varepsilon_{i}j}^{-1}(u_{i})$ $\Leftrightarrow$  $u_{i}\leq \mathcal{U}_{\varepsilon _{i}j}(\delta_{j})$) for all $i$ such that $\pi _{ij}>0$. Combining these, we obtain that the fixed point satisfies  
$u_{i}=\mathcal{U}_{\varepsilon _{i}j}(\delta
_{j})$ for all $i,j$ for which $\pi _{ij}>0$. }
%
Hence, by iterating on (\ref{eq:bellmanford}) beginning from the initial lowest possible values  $\left\{u_i = -\infty;\ \delta_j = -\infty,\ j \neq 0;\ \delta_0 = 0\right\}$, we will obtain a non-decreasing sequence of $\delta$'s converging to $\underline{\delta}$.

Analogously, starting with values of $+\infty $ and iterating on the
following operator we will obtain a monotonic non-increasing sequence of $\delta$'s converging to the upper bounds $\bar\delta$: 
\begin{equation}
\delta _{j}=\min (\delta _{j},\min_{i\in \mathcal{I}}\mathcal{U}%
_{\varepsilon _{i}j}^{-1}(u_{i})),\quad u_{i}=\min (u_{i},\min_{j:\pi
_{ij}>0}(\mathcal{U}_{\varepsilon _{i}j}(\delta _{j})),\quad \delta _{0}=0.
\label{eq:auctionupperbound}
\end{equation}
{ Indeed, $\delta _{j}=\min (\delta _{j},\min_{i\in \mathcal{I}}\mathcal{U}%
_{\varepsilon _{i}j}^{-1}(u_{i}))$ holds if and only if $\delta _{j}\leq
\min_{i\in \mathcal{I}}\mathcal{U}_{\varepsilon _{i}j}^{-1}(u_{i})$ $\Leftrightarrow$  $\mathcal{U}_{\varepsilon _{i}j}\left( \delta _{j}\right)
\leq u_{i}$, for all $i\in \mathcal{I}$. Similarly, $u_{i}=\min
(u_{i},\min_{j:\pi _{ij}>0}(\mathcal{U}_{\varepsilon _{i}j}(\delta _{j}))$
holds if and only if $u_{i}\leq \min_{j:\pi _{ij}>0}(\mathcal{U}%
_{\varepsilon _{i}j}(\delta _{j}))$ $\Leftrightarrow$  $u_{i}\leq 
\mathcal{U}_{\varepsilon _{i}j}(\delta _{j})$, for all $j$ such that $%
\pi _{ij}>0$.}\footnote{The iterations over the isotone operators (\ref{eq:bellmanford}) and (\ref{eq:auctionupperbound}) are instances of the {\em Bellman-Ford algorithm}, used for solving certain types of network flow problems, cf. \citeasnoun{sedgewick2011algorithms}.}

\subsection{Implementation}

We have developed \textsc{R} packages containing fast and efficient implementations of the algorithms described in this section. They are designed to enable user-friendly access for researchers to popular matching methods, and are used in the next section to benchmark the relative performance of each algorithm.
 \citeasnoun{github_auction} collects several auction and linear programming algorithms into an R package, utilizing \textsc{C++} code provided by \citeasnoun{walsh2017general}. Finally, a parallelized implementation of the MSA 
is contained in a separate package \citeasnoun{github_yogurts}. Links to the packages and installation instructions are provided in the bibliography references.

\section{Numerical experiments\label{sec:numerical experiments}}

We test our algorithms on two different models: the first one is the additive pure characteristics model, and the second one is a special case of a pure characteristics model which is {\em not} invertible, so that multiple values of utilities are consistent with a set of market shares.\footnote{While it is beyond the scope of this paper to consider formal statistical testing of equivalence between the bounds $\delta_{\min}=\delta_{max}$, in all our computational and empirical results below we computed the upper and lower bounds for different and increasing values of $N$, to ensure that any distance between $\delta_{min}$ and $\delta_{max}$ is not driven by approximation error due to small $N$.}

\subsection{The pure characteristics model}\label{sec: numerical pure char}
In this section we evaluate the performance of our matching-based algorithms in computing the pure characteristics model. \citeasnoun{berry2007pure} (p.
1193) underline that this model is appealing on theoretical grounds,\footnote{Models with logit errors have properties that may be undesirable for welfare analysis: They restrain substitution patterns and utility grows without bounds as the number of products in the market grows.  See \citeasnoun{berry2007pure} and \citeasnoun{ackerberg2005unobserved}.}   but it is infrequently used in empirical work, arguably due to the computational challenges associated with the non-smooth demand map.

We consider the following specification, which is adapted from \citeasnoun{dube2012improving}: The utility of consumer $i$ who chooses product $j$ in market $m$ is generated by:
\begin{equation}
    u_{mij}=\beta_0-\beta_p p_{mj}+\beta_1 x_{mj1}+\beta_2 x_{mj2}+\beta_3x_{mj3}+\xi_{mj},
\end{equation}
where $x_{mjk}, k=1,2,3$ are observed, exogenous product attributes, and $p_{mj}$ is the price of product $j$ in market $m$ that is correlated with the unobserved product attribute $\xi_{mj}$. {Instruments are constructed as noisy nonlinear functions of the exogenous $x$'s. The random coefficients are $(\beta_p,\beta_1,\beta_2,\beta_3)$,  distributed independently from normal distributions with means equal to $\bar\beta_p, \bar\beta_1,\bar\beta_2,\bar\beta_3$ and unit variances. Following \citeasnoun{berry2007pure}, the mean price coefficient $\bar\beta_p$ is normalized to -1. Additional details on the simulation setup and implementation are provided in Appendix \ref{sec:DGP_pure_char}.

We compare and contrast the performance of several algorithms.  Our algorithm, which we call  ``Matching-LP'', takes the form of a nested iterative procedure a la BLP (\citeasnoun{berry1995automobile}): that is, the procedure alternates between an outer and inner loop.   In the outer loop, a GMM objective function is optimized with respect to the model parameters, while the inner loop performs the demand inversion to recover the mean utilities ($\delta$) at the current candidate values of the model parameters.   In the inner loop, we utilize the linear programming (LP) algorithm from Section \ref{par:linprogr} to perform the demand inversion.} 

To benchmark performance, we compare our ``Matching-LP'' approach to two alternatives from the literature which reflect the current thinking on estimating the pure characteristics model. The first, denoted ``BLP-MPEC'', comes from \citeasnoun{berry2007pure}, who suggest  smoothing out the non-smooth pure characteristics demand map by adding small logit errors to each alternative, and then using the BLP estimation procedure.\footnote{In implementation, we utilize Dub\'{e}, Fox, and Su's (2012) MPEC algorithm,  which has  demonstrated speed advantages over nested methods. 
This speed advantage of MPEC arises in part from the use of Newton-type (gradient-based) iterations in the optimization procedure.  As \citeasnoun{LeeSeo2016} point out, Newton iterations have speed gains relative to BLP's original algorithm, which utilizes contraction mapping iterations.} {The second alternative, denoted ``PSL'', is the algorithm proposed in \citeasnoun{pang2015constructive} based on reformulating the model as a linear complementarity problem.} 

In Table \ref{table:pure_char_alg_overall}, we report the RMSE, bias, proportion of runs converged, and the average runtime across 20 Monte Carlo repetitions. There are 100 markets and 5 products (including the outside goods). In our simulations, we consider discretizations of the market into both  $N=500$ and $N=1000$ agents on each side of the market. We consider two model specifications: In Model I, we estimate the location parameters and fix all scale parameters. In Model II, we estimate the location parameter and the scale parameter associated with the endogenous price, fixing the rest of the scale parameters.

\begin{table}[!htbp] \centering
\caption{Numerical Performances: Estimating the Demand Parameters in Pure Characteristics Models}
\centering
\fontsize{7}{9}\selectfont
\begin{tabular}{llllllllrrrrr}
\toprule
\multicolumn{13}{c}{Panel I: Number of Consumers $N=500$}\\
\multicolumn{3}{c}{ } & \multicolumn{5}{c}{RMSE} & \multicolumn{5}{c}{Bias} \\
\cmidrule(l{3pt}r{3pt}){4-8} \cmidrule(l{3pt}r{3pt}){9-13}
algorithm & converge & runtime & $\beta_0$ & $\sigma_p$ & $\bar\beta_1$ & $\bar\beta_2$ & $\bar\beta_3$ & $\bar\beta_0$ & $\sigma_p$ & $\bar\beta_1$ & $\bar\beta_2$ & $\bar\beta_3$\\
& (\%)& (secs.)&&&&&&&&&&\\
\midrule
\colorbox{mygray}{Model I}&&&&&&&&&&&&\\
Matching-LP & 100 & 2.31 & 0.08 &  & 0.10 & 0.09 & 0.08 & -0.02 &  & 0.00 & -0.02 & 0.01\\
PSL & 85 & 134.42 & 0.08 &  & 0.11 & 0.10 & 0.08 & 0.00 &  & 0.00 & -0.02 & 0.01\\
BLP-MPEC & 100 & 32.88 & 0.19 &  & 0.17 & 0.18 & 0.14 & 0.09 &  & 0.01 & -0.02 & -0.04\\
&&&&&&&&&&&&\\
\hdashline
&&&&&&&&&&&&\\
\colorbox{mygray}{Model II}&&&&&&&&&&&&\\
Matching-LP & 100 & 76.52 & 0.09 & 0.06 & 0.11 & 0.11 & 0.07 & -0.02 & 0.01 & 0.00 & -0.02 & 0.00\\
PSL & 90 & 354.72 & 0.11 & 0.26 & 0.10 & 0.09 & 0.12 & -0.06 & -0.18 & 0.03 & 0.01 & 0.06\\
BLP-MPEC & 100 & 100.83 & 0.19 & 0.11 & 0.19 & 0.19 & 0.13 & 0.07 & -0.07 & 0.02 & 0.00 & -0.01\\
\midrule
\midrule
\multicolumn{13}{c}{Panel II: Number of Consumers $N=1000$}\\
\multicolumn{3}{c}{ } & \multicolumn{5}{c}{RMSE} & \multicolumn{5}{c}{Bias} \\
\cmidrule(l{3pt}r{3pt}){4-8} \cmidrule(l{3pt}r{3pt}){9-13}
algorithm & converge & runtime & $\beta_0$ & $\sigma_p$ & $\bar\beta_1$ & $\bar\beta_2$ & $\bar\beta_3$ & $\bar\beta_0$ & $\sigma_p$ & $\bar\beta_1$ & $\bar\beta_2$ & $\bar\beta_3$\\
& (\%)& (secs.)&&&&&&&&&&\\
\midrule
\colorbox{mygray}{Model I}&&&&&&&&&&&&\\
Matching-LP & 100 & 3.09 & 0.08 &  & 0.07 & 0.07 & 0.06 & -0.03 &  & -0.01 & -0.02 & 0.01\\
PSL & 90 & 409.65 & 0.08 &  & 0.07 & 0.07 & 0.06 & 0.01 &  & -0.01 & -0.02 & 0.01\\
BLP-MPEC & 100 & 79.38 & 0.27 &  & 0.16 & 0.17 & 0.16 & 0.14 &  & 0.02 & 0.01 & -0.09\\
&&&&&&&&&&&&\\
\hdashline
&&&&&&&&&&&&\\
\colorbox{mygray}{Model II}&&&&&&&&&&&&\\
Matching-LP & 100 & 120.19 & 0.09 & 0.07 & 0.07 & 0.07 & 0.07 & -0.02 & 0.00 & -0.01 & -0.01 & 0.01\\
PSL & 85 & 834.48 & 0.14 & 0.35 & 0.09 & 0.10 & 0.11 & -0.09 & -0.28 & 0.06 & 0.05 & 0.08\\
BLP-MPEC & 100 & 260.03 & 0.26 & 0.13 & 0.17 & 0.17 & 0.16 & 0.12 & -0.06 & 0.04 & 0.03 & -0.07\\
\bottomrule

\end{tabular}
\floatfoot{\linespread{1}\scriptsize{\emph{Note: }The numbers are averages across 20 Monte Carlo repetitions. There are 100 markets and 5 products.  The data-generating process is adapted from \citeasnoun{dube2012improving}, and complete details are contained in Appendix \ref{sec:DGP_pure_char}. $\beta_0$ denotes the constant in utility, while $\bar\beta_1,\bar\beta_2, \bar\beta_3$ correspond to the means of the random coefficients.  $\sigma_p$ is the scale parameter of price.  The true values are: $\beta_0=1,\ \bar\beta_1=0.5,\ \bar\beta_2=0.5,\ \bar\beta_3=0.2,\  \sigma=1.$ 
The inner loop of Matching-LP is based on the combined LP defined in Eq. (\ref{eq: combined_LP}).}}
\label{table:pure_char_alg_overall}
\end{table}

In Model I, Matching-LP and PSL have similar RMSE, which is roughly half of that of BLP-MPEC. In Model II, Matching-LP clearly dominates the other two alternatives: particularly, it delivers nearly an unbiased estimate for the standard deviation of the random coefficient of price ($\sigma_p$), while the other two approaches falter. 
In terms of computational speed, our method is on average 130 times faster than PSL and 26 times faster than BLP-MPEC in Model I. In model II, our method outperforms PSL by a factor of 7 and outperform BLP-MPEC by a factor of 2. These simulations demonstrate the superior performance of the matching-based approach for estimating the pure characteristics model.

While all three algorithms require random draws by simulating $N$ consumers from the distribution of the random coefficients, there is a fundamental difference in how these algorithms utilize the random draws. Both our matching-based approach and PSL use the random draws to \emph{discretize} the distribution, whereas the traditional BLP approach \emph{averages over} the random draws to approximate the market shares. Since simulation can potentially create bias, it is importantly to investigate the relationship between the simulation errors and the number of draws. By comparing Panel I and Panel II of Table \ref{table:pure_char_alg_overall}, we find that increasing the number of draws noticeably improves the RMSE of both our matching-based algorithm as well as PSL. For BLP-MPEC, however, there is little noticeable improvement.

As a supplementary exercise, we hone in on the performance of our matching algorithm in the demand inversion step itself, 
and compare the numerical accuracy of our matching-based algorithms to alternative existing approaches.  
Table \ref{table:pure_char_alg_inner} summarizes the numerical performance of three matching-based algorithms: (i) LP,
(ii) Auction, and (iii) MSA; the (iv) BLP contraction mapping (which adds logit errors to the utilities of each choice to smooth the demand map) is also included as a benchmark.\footnote{In our simulations we use the most common version of the BLP-contraction mapping algorithm, which utilizes iterations based on successive approximations.   \citeasnoun{LeeSeo2016} show that the performance of the algorithm can be improved by utilizing Newton iterations instead. However, both approaches require smoothness in the market-share mapping, which is not satisfied in the pure characteristics model considered here.}

\begin{table}[!htbp]
\centering
\begin{tabular}{l|cccc}
 Algorithms & Draws & Brands & RMSE & runtime (secs)\\ \toprule
 BLP contract. map. & 1,000 & 5 & 0.070 & 0.032  \\
  LP  & 1,000 & 5 & 0.029 & 0.083 \\
  Auction & 1,000 & 5 & 0.029 & 0.010 \\
  MSA & 1,000 & 5 & 0.029 & 18.776 \\
   \hline
BLP contract. map. & 1,000 & 50 & 0.045 & 0.283  \\
  LP  & 1,000 & 50 & 0.013 & 0.313 \\
  Auction & 1,000 & 50 & 0.013 & 0.046  \\
   \hline
BLP contract. map. & 1,000 & 500 & 0.018 & 2.774  \\
  LP  & 1,000 & 500 & 0.004 & 3.869  \\
  Auction & 1,000 & 500 & 0.005 & 0.426  \\
   \hline
BLP contract. map. & 10,000 & 5 & 0.072 & 0.331 \\
  LP  & 10,000 & 5 & 0.014 & 0.304  \\
  Auction & 10,000 & 5 & 0.014 & 0.117  \\
  MSA & 10,000 & 5 & 0.014 & 2.608  \\
   \hline
BLP contract. map. & 10,000 & 50 & 0.044 & 2.890  \\
  LP  & 10,000 & 50 & 0.006 & 4.446  \\
  Auction & 10,000 & 50 & 0.006 & 0.659  \\
   \hline
BLP contract. map. & 10,000 & 500 & 0.017 & 38.519 \\
  LP  & 10,000 & 500 & 0.002 & 64.061 \\
  Auction & 10,000 & 500 & 0.002 & 5.185  \\
 \hline
  \end{tabular}
\begin{tablenotes}
\footnotesize{
\item Note: The numbers are averaged from $50$ Monte-Carlo replications. Demand inversion for the pure characteristics model with $5$, $50$ and $500$ brands of yogurt and $1,000$ and $10,000$ draws of taste shocks.  The column "RMSE" corresponds to the root mean squared errors of the estimated $\delta_j$. {The MSA becomes computationally infeasible when the number of brands becomes large, so we report it only for scenarios with 5 brands.}}
\end{tablenotes}
\caption{Numerical Performances: Demand Inversion in Pure Characteristics Models}
\label{table:pure_char_alg_inner}
\end{table}

The numerical accuracy is almost identical across all three matching algorithms (LP, Auction, MSA). In comparison, the RMSE of the BLP contraction mapping is about three times larger. Similar to the finding in Table \ref{table:pure_char_alg_overall}, we find that increasing the number of draws $N$ improves the RMSE of our matching algorithms but not the BLP contraction mapping -- this arises from the additional approximation error due to the introduction of additive logit errors to smooth the mapping. 

For computational speed, the auction algorithm is the fastest by a wide margin: even under the most demanding scenario (500 brands and 10,000 simulated consumers), it only takes 5 seconds, which far outstrips other approaches.\footnote{{The computational speed reported in Table \ref{table:pure_char_alg_overall} and \ref{table:pure_char_alg_inner} are not directly comparable for several reasons: First, Table \ref{table:pure_char_alg_inner} only measures the computational time in the demand-inversion step\textemdash the ``inner loop'' in Table \ref{table:pure_char_alg_inner}. Second, in Table \ref{table:pure_char_alg_overall}, the differences between NFXP and MPEC for searching the structural parameters can contribute to the performance difference. Third, in Table \ref{table:pure_char_alg_overall}, there are 100 markets, whereas in Table \ref{table:pure_char_alg_inner}, there is only one market. The scalability of different demand-inversion methods also affects the overall computational speed.}} 

On the other hand, while MSA is the slowest algorithm, and does not scale up well with the number of brands, it is a general-purpose algorithm that also applies to NARUMs.\footnote{Interestingly, with 5 brands, MSA is faster with 10,000 draws than with 1,000: this may appear counter-intuitive, but increasing the number of draws improves accuracy during the loop and therefore reduces total computation times.} It is therefore not surprising that MSA is not as fast as the other methods since it does not exploit the additive separability in ARUMs. 

\subsection{A non-invertible pure characteristics model}\label{sec: numerical_non_invertible}

{For the second simulation exercise, we consider a pure characteristics model which is non-invertible\textemdash that is, there are multiple utility vectors which rationalizes the same set of market shares.  It underscores
an important contribution of our approach, namely its ability to
handle models which are not invertible.   Moreover, it also suggests that non-invertibility may be a typical feature of the pure characteristics model. \citeasnoun[pg. 654]{pang2015constructive} likewise note a problem of multiple solutions in  results from their estimation procedure for pure characteristics models.

There are three goods $y=1,2,3$, and the unknown parameters are the quality
of each good are $\delta_1,\delta_2,\delta_3$, with the normalization
$\delta_1=0< \delta_2,\delta_3$.
Consumers fall into two segments, which differ in the prices that they face.  In segment 1, prices are given by the vector $\mathbf{p}_1$, with $p^1_1<p^1_2<p^1_3$, while in segment 2, prices are given by the vector $\mathbf{p}_2$, with $p^2_1\leq p^2_3<p^2_2$.
The utility function is given by $$\mathcal{U}_{\varepsilon j}(\delta_j)=\delta_j - \frac{\varepsilon^b}{\varepsilon^a} p_j^1 +  \frac{(1-\varepsilon^b)}{\varepsilon^a} p_j^2,$$
which is a pure characteristics specification.   Consumer heterogeneity $\varepsilon\equiv (\varepsilon^a, \varepsilon^b)$ consists of both $\varepsilon^a\sim U[0,1]$, interpreted as willingness-to-pay for quality, and  $\varepsilon^b$, an indicator for whether they are in the first segment, which equals 0 or 1 with equal probability.}

  Let $s^d_j$ denote the
(unobserved) market share of good $j$ among segment $d (\in \left\{ 1,2\right\})$ customers. The observed market
shares are mixtures of market shares across both customer segments:
\begin{equation}  \label{1}
s_j=0.5(s^1_j+s^2_j),\quad j=1,2,3.
\end{equation}
This example typifies a common problem in supermarket scanner datasets, that consumers' transactions prices often differ markedly by products' list prices.\footnote{See \citeasnoun{erdem1998missing} for detailed discussions and modelling approaches to this problem.} 
In this case, the segment 1 prices $\mathbf{p^1}$ are analogous to "list" prices (ie., the prices posted on the supermarket shelves) while segment 2 customers have a coupon which gives them a discount on good 3. 
 
In the simulation exercise, we set $p^1_1=1$, $p^1_2=2$, and $p^1_3=3$ be the prices in segment 1, and $%
p^2_1=1$, $p^2_2=2$, and $p^2_3=1$ in segment 2. The observed market shares
are given by $(s_1,s_2,s_3)=(0.25,0.25,0.5)$. Given $%
\delta_1=0$, the identified utility set contains multiple values of the quality parameters $(\delta_2,\delta_3)$: 
\begin{equation}
\tilde\sigma^{-1}(s)=\{(\delta_2,\delta_3):\delta_2=2,\delta_3\in[1,3]\}
\end{equation}
which is a lattice  with minimal element $(0,2,1)$ and maximal element $(0,2,3)$.

Computational results using the LP algorithm are given in Table \ref{table:model2}. The algorithm
performs as expected; utilizing the linear programming algorithm (\ref{eq: LP}) along with the supplemental program to compute the upper and lower bound of payoffs (\ref{eq: LP_bound_MPEC}) accurately recovers the upper and lower bounds.  For comparison, we also performed the demand inversion exercise using the BLP contraction mapping approach, which requires the demand inverse to be unique.   Not surprisingly, we found that the contraction mapping algorithm would not converge for this problem.\footnote{On each trial run, the maximum number of function evaluations was reached without convergence.}

\begin{table}[h]
{
\centering
\resizebox{1\linewidth}{!}{
\begin{threeparttable}
\begin{tabular}{lc||cc|c||cc|c}
\hline
& True & \multicolumn{3}{c}{$N=100$} &
\multicolumn{3}{c}{$N=1000$} \\
& $\delta_j$ & LP - $\bar{\delta}_j$  & LP - $\underline{\delta}_j$ & BLP contr. map. & LP - $\bar{\delta}_j$  & LP - $\underline{\delta}_j$ & BLP contr. map. \\
\hline
$\delta_{2}$ & 2     & 2.106 \footnotesize{(0.305)} & 2.017 \footnotesize{(0.284)} & ---${}^{\dagger}$
& 2.002 \footnotesize{(0.071)}& 1.992 \footnotesize{(0.086)} & ---${}^{\dagger}$
\\
$\delta_{3}$ & [1,3] & 3.127 \footnotesize{(0.319)} & 0.997 \footnotesize{(0.286)} & ---${}^{\dagger}$
& 3.004 \footnotesize{(0.071)}& 0.990 \footnotesize{(0.086)} & ---${}^{\dagger}$
\\ \hline
\end{tabular}
\begin{tablenotes}
\linespread{1}\footnotesize{
\item We normalized $\delta_1=0$. The aggregate market shares are $(s_1,s_2,s_3)=(0.25,0.25,0.5)$, and prices: store 1 $(1,2,3)$, store 2 $(1,2,1)$. Lower and upper bounds are computed as the solution to the Linear Programming algorithm. For comparison, the BLP contraction mapping, an algorithm which requires the solution to be unique, failed to converge in each trial run. $N$ denotes the number of discretization points used. Standard deviations (across 50 replications) reported in parentheses.}
\item ${}^{\dagger}$: the BLP contraction mapping algorithm failed to converge.
\end{tablenotes}
\end{threeparttable}}
\caption{Non-invertible model: 3 goods}
\label{table:model2}
}
\end{table}

To continue, we consider an expanded version of this model involving 5
segments, and 8 goods. The prices and market shares for this example are given
in Table \ref{summary8}. Since, these market shares were chosen arbitrarily,
we do not know \emph{a priori} whether the identified utility set $\tilde\sigma^{-1}(s)$ is a singleton or contains multiple values.\footnote{%
That is, unlike the two-segment example used in Table \ref{table:model2}, we did not start by
computing a market equilibrium for given parameter values. Rather we chose
the prices and market shares in Table \ref{summary8} arbitrarily, and use our
approach to determine $\tilde\sigma^{-1}$.} 

In the two right-most columns of Table \ref{summary8}, we report the recovered upper and lower bounds for the $\delta$'s.
Clearly, for all entries, the lower and upper bounds coincide at the second or third decimal place, suggesting that the identified utility set $\tilde\sigma^{-1}$ is a singleton.  The results here were computed using $N=50000$, a very large value to ensure that the results are not driven by approximation error. In Table \ref{tab:approxN} in the appendix, we report results for different values of $N$, demonstrating that these results are robust and that, even with smaller values of $N$, the conclusions do not change.

\begin{table}[h]
{
\centering
\resizebox{\linewidth}{!}{
\begin{threeparttable}
  \begin{tabular}{l||ccccc|c||cc}
Brand & $p_1$ & $p_2$ & $p_3$ & $p_4$ & $p_5$ & MktShare & Lower-bound $\underline\delta$ & Upper-bound $\bar\delta$\\ \hline
A & 3.32 & 3.36 & 3.45 & 3.37 & 3.35 & 0.07 &  0.000 \footnotesize{(ref.)} &  0.000\footnotesize{(ref.)}   \\
B & 3.88 & 3.60 & 3.53 & 3.39 & 3.07 & 0.06 & -0.815 \footnotesize{(0.374)} & -0.805\footnotesize{(0.379)}   \\
C & 3.70 & 3.30 & 4.16 & 4.31 & 4.25 & 0.20 &  3.410 \footnotesize{(0.307)} &  3.414\footnotesize{(0.308)}   \\
D & 3.98 & 4.12 & 4.06 & 3.11 & 4.09 & 0.39 &  3.122 \footnotesize{(0.302)} &  3.125\footnotesize{(0.303)}   \\
E & 4.20 & 4.34 & 4.21 & 4.29 & 4.35 & 0.16 &  3.345 \footnotesize{(0.307)} &  3.349\footnotesize{(0.307)}   \\
F & 4.49 & 4.82 & 4.25 & 3.73 & 4.86 & 0.08 &  1.842 \footnotesize{(0.317)} & 1.848 \footnotesize{(0.317)}   \\
G & 7.13 & 7.92 & 7.95 & 7.99 & 7.71 & 0.01 &  6.798 \footnotesize{(0.312)} &  6.802\footnotesize{(0.312)}   \\
H & 8.34 & 8.37 & 8.59 & 8.62 & 8.67 & 0.05 &  8.018 \footnotesize{(0.313)} &  8.022\footnotesize{(0.313)}   \\ \hline
\end{tabular}
\begin{tablenotes}
\item \linespread{1}\footnotesize{
Estimates are LP solution computed with Gurobi 8.1. $50$ Monte-Carlo estimations were made with different draws of $\varepsilon^{a}$ (standard deviation across 50 replications reported  in parentheses). $50,000$ discretization points were used for each Monte-Carlo simulation ($i.e$ $50,000$ draws of $\varepsilon^{a}$ in the uniform). Brand A is the reference and the systematic utility is normalized to $0$.  
}
\end{tablenotes}

\end{threeparttable}
}
\caption{Multisegment price heterogeneity model: 5 segments and 8 goods}
\label{summary8}
}
\end{table}

\section{Empirical Application: Voting in European Parliament Elections \label{sec:empirical}}
Finally, we use our matching-based algorithm to estimate an aggregate spatial voting model using data from the 1999 Parliamentary Elections in the European Union countries, following \citeasnoun{MerloDePaula2017}.

\subsection{Model}
We consider a spatial voting framework in which both voters and political parties are characterized by their ``location'' in the political spectrum, which is the Cartesian plane $\mathbb{R}^2$. Voter $i$ is characterized by her ideal point $t_i\in \mathbb{R}^2$ within this space; likewise, candidate (political party) $j$ has an ideological position $C_j\equiv (C_{j1}, C_{j2})\in \mathbb{R}^2$. 

Voters are ideological: voter $i$ from electoral precinct $m$ votes for the party with the platform closest to her ideal point $t_{mi}$. Specifically, her preferred party is:
\begin{equation}
D_{mi}=\textrm{argmin}_{j\in J_{m}} d(t_{mi},C_{mj}),\label{eq: voter_DC}
\end{equation}
\noindent where $d(,)$: $\mathbb{R}^2\rightarrow \mathbb{R}_{+}$ is the distance function, which is specified as a quadratic function:
\begin{equation}\label{eq:quadratic}
d(t,C)=(t-C)^{'}W(t-C).
\end{equation}
$W$ is a $2\times 2$ weighting matrix, which for simplicity we assume to be identity: $W=\mathbb{I}_2$.  (This is restrictive as it implies that voters weigh both dimensions of the political spectrum equally.)
Unlike \citeasnoun{MerloDePaula2017} who consider the nonparametric identification and estimation of the distribution of voter ideal points $t$, we assume that they come from a bivariate normal distribution, as follows:  
\begin{align}
\begin{array}{l}
t_{mi1}=\left(X_m'\alpha + \nu_{mi1}\right),\\
t_{mi2}=\left(X_m'\beta + \nu_{mi2}\right),\label{eq: spec_latent_type}
\end{array} &\\
(\nu_{mi1},\nu_{mi2}) \sim N(\mathbf{0}, \Sigma%
),\ & \text{i.i.d. over $i$, $j$, $m$}
\end{align}
\noindent where $X_m=(x_{mk}$, $k=1,\dots,K)$, are aggregate demographic and economic variables in precinct $m$, and $\alpha$ and $\beta$ denote $K\times 1$ parameter vectors of interest.\footnote{As usual, we normalize the variance of $\nu_{mi1}$ to one. }  

By substituting (\ref{eq: spec_latent_type}) into (\ref{eq:quadratic}), the ``disutility'' that voter $i$ gets from voting for candidate $j$ becomes:
\begin{align}\nonumber
U_{mij} =\   & d(t_{mi}, C_{mj}) = (t_{mi1}-C_{mj1})^2 + (t_{mi2}-C_{mj2})^2 \\\nonumber
=\ &(X_m'\alpha)^2  + C_{mj1}^2 -  2\left(C_{mj1}*(X_m'\alpha)\right)+(X_m'\beta)^2  + C_{mj2}^2 -  2\left(C_{mj2}*(X_m'\beta)\right)\\
  &+\nu_{mi1}^2+2\left(X_m'\alpha-C_{mj1}\right)\nu_{mi1}+\nu_{mi2}^2+2\left(X_m'\beta-C_{mj2}\right)\nu_{mi2}\label{eq:purechar_distance0}
   \end{align}
   
 In the discrete-choice setting, only utility components which vary across candidates $j$ will affect choices.  From the above, 
 the terms $(\nu^2_{i1},\nu^2_{i2})$  and $(X_{m}'\alpha)\nu_{i1},(X_{m}'\beta)\nu_{i2}$ are individual-specific, and hence do not affect $i$'s choice problem. Therefore, our specification can be simplified into the following pure characteristics form:\footnote{\citeasnoun{MerloDePaula2017} also pointed out the equivalence of the spatial voting and pure characteristics models.} 
\begin{align}\label{eq:purechar_distance}
U_{mij} \equiv    \delta_{mj} +\varepsilon_{mij}& 
\\\nonumber
\text{where:}\quad\quad \delta_{mj} &=  C_{mj1}^2 + C_{mj2}^2-  2\left(C_{mj1}*(X_m'\alpha)\right)   -  2\left(C_{mj2}*(X_m'\beta)\right);\\\nonumber
\varepsilon_{mij}&=
-2C_{mj1}\nu_{mi1}-2C_{mj2}\nu_{mi2} 
\end{align}
The parameters in this specification are collectively denoted by $\theta\equiv \left( \alpha, \beta, \Sigma\right)$.

 As we pointed out above, the computational difficulties of the pure characteristics model has hampered empirical work utilizing it; indeed, despite the theoretical importance of the spatial voting model in political economy (ever since the pioneering work by \citeasnoun{downs1957economic}), empirical specifications have been relatively sparse.  Thus the application here, while primarily illustrative, does have a methodological contribution in showing how the matching-based algorithms introduced in this paper can be used to estimate a work-horse model from political science.

As data we use precinct-level vote shares from the 1999 European Parliament elections.
Altogether we have voting data from 822 electoral precincts in 22 regions, which are typically countries but may be sub-national regions distinguished by different sets of political parties.\footnote{The 22 regions are: Austria, Finland, France, Germany, Greece, Italy-Center, Italy-Islands, Italy-Northeast, Italy-Northwest, Italy-South, Portugal, Spain, Sweden, Netherlands, UK-East Midlands, UK-Eastern, UK-London, UK-Northwest, UK-Southeast, UK-Southwest, UK-West Midlands, and UK-Yorkshire.}
Parties' ideological positions are taken from \citeasnoun{HixNouryRoland2006}, who computed two ideological positions for each party: $C=(C_{j1}, C_{j2})$, with $C_{j1}$ denoting position on a left-right spectrum, and $C_{j2}$ denoting party's stance on the EU (with larger values denoting, resp. a more right-wing position and more pro-EU stance).   We use $K=3$ precinct-specific socio-economic and demographic variables: the female-to-male ratio, the proportion of the population older than 35 years, and the unemployment rate.\footnote{\citeasnoun{MerloDePaula2017} include additionally GDP as a regressor, and hence exclude Austria and Italy from their analysis due to missing values of this variable.}  All the data we use are available from the {\em Review of Economic Studies} website.\footnote{\url{https://doi.org/10.1093/restud/rdw046}}

As the quadratic spatial voting model is a pure characteristics model, we use our ``Matching-LP'' algorithm for estimation, which we also used for the simulations in Section 5.1.\footnote{We focus on the LP algorithm here in order to leverage combining and simultaneously solving the demand inversion problems across different precincts, which saves substantially on computational time.   Other algorithms, including the auction algorithm, are not amenable to combining the demand inversion problems.} 
Essentially, this resembles the \citeasnoun{berry1995automobile} estimation algorithm except that the inner loop utilizes a matching-based algorithm in place of the contraction mapping in BLP.  We describe the outer and inner loops in turn.


\subsubsection{Estimation: Inner loop}
In each call to the inner loop, for a given parameter vector $\theta$, we solve the demand inversion problems across all precincts simultaneously (see the earlier discussion associated with Eq. (\ref{eq: combined_LP})), by combining all the linear programs across all precincts into the following single large linear program:

\begin{eqnarray}\label{eq:combinedLP}
\inf\limits_{u_{mi},\delta _{mj}, m\in\mathcal{M}} &&\sum_{m=1}^{M}\left[\sum_{i=1}^{N}\frac{1}{N}u_{mi}-\sum_{j\in
\mathcal{J}_m}s_{mj}\delta _{mj}\right] \\
s.t. &&u_{mi}-\delta _{mj}\geq -\varepsilon_{mij}\quad\forall m\in\mathcal{M}
\end{eqnarray}
Let $\left\{\hat\delta_{mj}(\theta)\right\}_{m,j}$ denote the optimized values of the $\delta_{mj}$'s from this problem.  



\subsubsection{Estimation: Outer loop}
For the outer loop, we minimize the least-square differences between the $\hat\delta_{mj}(\theta)$'s emerging from the inner loop and the functional form for $\delta_{mj}$, as implied in Eq. (\ref{eq:purechar_distance}), to estimate $\theta$:
\begin{equation}\label{eq:outerloop}
\min_{\theta} \sum_{m,j} \left[ \hat\delta_{mj}(\theta) -  C_{mj1}^2 - C_{mj2}^2-  2C_{mj1}*(X_m'\alpha)   -  2C_{mj2}*(X_m'\beta)\right]^2
\end{equation}

\subsection{Results}
\begin{table}[!htbp] \centering
  \caption{Estimation Results: 1999 European Parliamentary Elections}
{
\begin{tabular}{@{\extracolsep{5pt}}lcc}
\toprule
&Model I    &Model II\\
\midrule
\colorbox{mygray}{dim. 1: right-leaning}&&\\
                                        &               & \\
 female-to-male ratio                   &    0.81     &      0.44 \\
                                        &    (0.50,1.12) &    (0.19,0.68)  \\
 above 35-year-old($\%$)                &   -0.07     &     -0.29  \\
                                        &  (-0.56,0.49) &   (-0.68,0.12)  \\
 unemployment rate                      &   -2.62     &     -1.76  \\
                                        & (-3.02,-2.17) &  (-2.16,-1.33)  \\
\hdashline
\colorbox{mygray}{dim. 2: pro-EU}       &               &  \\
                                        &               &  \\
 female-to-male ratio                   &   0.62      &     -0.48 \\
                                        &  (0.29,0.96)  &  (-0.85,-0.10)  \\
 above 35-year-old($\%$)                &  -0.02  &     0.80 \\
                                        &  (-0.60,0.55)  &   (0.16,1.39)  \\
 unemployment rate                      &  -2.95      &    -1.25 \\
                                        &(-3.53,-2.28)  & (-1.96,-0.62)  \\
\hdashline
$\rho$                                  & 0               & -0.66   \\
                                        & (fixed)              &(-0.70,-0.40)    \\
{\em Checking multiplicity:} $\max_{m,j}\{\delta^{UB}_{mj}-\delta^{LB}_{mj}\}$ & $1.52 \times 10^{-9}$&$9\times 10^{-4}$   \\

\bottomrule
\end{tabular}
}
\floatfoot{\emph{Note: }\scriptsize{
    The inner loop utilizes the linear programming algorithm (Section 4.2.1). We report the 95\% bootstrap confidence interval constructed from 500 bootstrap samples. We use $N=100$ voters in each precinct. $\delta_{mj}^{UB}$ and $\delta_{mj}^{LB}$ refer, respectively, to estimates of the upper and lower bound of the identified utility parameters, computed using the algorithm in Eq. (\ref{eq: LP_bound_MPEC}).\\
}
}\label{tab:empirical}
\end{table}

Table \ref{tab:empirical} contains our estimation results for two specifications.\footnote{For this ARUM, without loss of generality, we normalized $\delta_{m1}=0$, corresponding to the party listed first in alphabetical order in each precinct $m$.  We implemented this normalization by subtracting, in each market, the covariates $C_{j1}^2 + C_{j2}^2-  2\left(C_{j1}*(X_m'\alpha)\right)   -  2\left(C_{j2}*(X_m'\beta)\right)$ for candidate 1 from the utilities for the other candidates $j\neq 1$ (cf. Eq. (\ref{eq:purechar_distance})).  The pure characteristics model also requires a location normalization (cf.\citeasnoun[footnote 13]{berry2007pure}) which is automatically satisfied as the spatial voting model implies that the constant term in the utility is equal to the squared ideological positions $C_{j1}^2+C_{j2}^2$ which is known by the researcher.} In Model I, we assume that the covariance matrix  $\Sigma=\mathbb{I}_2$, thus ruling out correlation between the two ideal point dimensions conditional on precinct characteristics $X_m$.  Model II eliminates this restriction and allows for correlation between the dimensions, so that $\Sigma=
\begin{pmatrix}
1&\rho\\
\rho&1
\end{pmatrix}
$.
The estimate for $\rho$ in Model II is negative (coef. -0.66) and significantly different from zero, implying that, after controlling for precinct-level characteristics, voters who are right-leaning (dimension 1) tend to be {\em less} in favorable towards the EU (dimension 2).\footnote{In comparison, the richer model in \citeasnoun{MerloDePaula2017} allows the correlation between voter dimensions to differ by country, and they estimate this correlation to be negative in 4 out of the 10 countries, including France and Germany, the two largest EU countries (cf. Table 3 in \citeasnoun{MerloDePaula2017}).} Given this result, we focus the discussion here to the results from Model II.

The estimated coefficients summarize the contribution of demographic variables on the two dimensions of voters' ideal points.   For the first dimension, the coefficient on unemployment rate is negative (-1.76), indicating that left-leaning precincts tend to have higher unemployment rates.  Precincts with larger female-to-male ratio tend to be more right-leaning (0.44), while there is no significant relation between age (measured by the proportion of population older than 35 years) on the tendency to be pro-conservative.   

For the second dimension, we find that precincts with higher unemployment rate are significantly less supportive of the EU (coef. -1.25).
Precincts with higher percentage of older voters are significantly more pro-EU, while those with a higher female-male ratio are less supportive of the EU. 
Altogether, economic considerations -- as exemplified in the unemployment rate -- appear to be the strongest and most consistent explanator of voters' preferences across European regions. 

We also use our algorithms to check whether the demand map is invertible by solving for the upper and lower bounds on the precinct/party qualities $\delta_{mj}$ using the linear program in Eq. (\ref{eq: LP_bound_MPEC}).  At the bottom of Table \ref{tab:empirical}, we report the maximal (across all precincts and parties) difference between the estimated upper and lower bounds in the $\delta_{mj}$'s, evaluated at the parameter values reported in Table~\ref{tab:empirical}.  As is evident, the difference is minuscule, suggesting that multiplicity of these parameters is not an issue for this model.\footnote{
  With multiple $\delta$'s, the identification and estimation of the structural parameters (the $\beta$'s and the parameters in the distribution of random coefficients) is an open question, and we do not consider it here.  Estimating these parameters typically relies upon instruments, and the associated moment conditions. The literature on identification and inference in moment condition models with possibly partially-identified parameters is still nascent; see \citeasnoun{chen2018monte} for one recent paper.  In ongoing work (\citeasnoun{HsiehMonteroShum}) we are tackling this issue in the context of discrete-choice demand models as here, but both the issue regarding identification of $\beta$, as well as the computational and inferential issues associated with estimating $\beta$ in this setting are challenging.
  }
Finally, the combined linear program in Eq. (\ref{eq:combinedLP}) is a big time-saver, as executing the demand inversion problem {\em simultaneously} across all precincts is ten times faster than performing the demand inversion separately for each precinct.  This suggests that simultaneous solution of demand inversion problems constitutes a very practical advantage of the linear programming approach.

\section{Concluding remarks}\label{sec: conclusion}

In this paper we have explored the intimate connection between discrete
choice models and two-sided matching models, and used results from the
literature on matching under imperfectly transferable utility to derive
procedures for demand inversion and estimation of discrete choice models based on
the non-additive random utility specification.
Although the microeconomics literature distinguishes
between \textquotedblleft one-sided\textquotedblright\ and \textquotedblleft
two-sided\textquotedblright\ demand problems, our results show that this
distinction is immaterial for the purpose of estimating discrete-choice models. Given the matching equivalence, it is
as appropriate to consider the discrete choice problem of
consumers choosing yogurts as one in which
\textquotedblleft yogurts choose consumers\textquotedblright .

The connection between discrete choice and two-sided matching is a rich one,
and we are exploring additional implications. For instance, the phenomenon
of \textquotedblleft multiple discrete choice\textquotedblright\ (consumers
who choose more than one brand, or choose bundles of products on a purchase
occasion) is challenging and difficult to model in the discrete choice
framework\footnote{%
See \citeasnoun{hendel1999estimating}, \citeasnoun{dube2004multiple}, %
\citeasnoun{fox2013measuring} for some applications.} but is quite natural
in the matching context, where \textquotedblleft
one-to-many\textquotedblright\ markets are commonplace -- perhaps the most
prominent and well-studied being the National Residents Matching Program for
aspiring doctors in the United States (cf. \citeasnoun{roth1984evolution}). We are exploring
this connection in ongoing work.


\bibliographystyle{ECMA}
\bibliography{mybib}
\newpage \appendix

{\bfseries (Intended for online publication)}

\linespread{1.2}

\section{Additional theoretical results}\label{sec:additional}
Here we derive additional theoretical results for the matching model in Section \ref{par:noIndiff}.

\subsection{Existence\label{par:existence}}
As our main theoretical results explore a new equivalence between the identified utility set $\tilde\sigma^{-1}$ and the equilibrium payoffs in a two-sided matching game,  we start by considering the existence of the equilibrium payoff set in this game under our assumptions.\footnote{
We note that this result is a contribution
to matching theory per se, as it implies the existence of a solution to
the equilibrium transport problem, as introduced in~%
\citeasnoun{galichon2015optimal}, Definition 10.1. }
In order to show that $\tilde{\sigma}^{-1}\left( s\right) $ is non-empty, we
need to make slightly stronger assumptions than the ones that were
previously imposed. In particular, Assumption~\ref{ass:Increasing} will be
replaced by the following:

\begin{assumption}[Stronger regularity of $\mathcal{U}$]
\label{ass:ContinuousMeasurable} Assume:

(a) for every $\varepsilon \in \Omega $, the map\ $\varepsilon \mapsto
\mathcal{U}_{\varepsilon j}\left( \delta _{j}\right) $ is integrable, and

(b) the random map $\delta _{j}\mapsto \mathcal{U}_{\varepsilon j}\left(
\delta _{j}\right) $ is stochastically equicontinuous.
\end{assumption}

We also need to keep track of the behavior of $\mathcal{U}_{\varepsilon
j}\left( \delta \right) $ when $\delta $ tends to $-\infty $ or $+\infty $,
and for this, we introduce the following assumption:

\begin{assumption}[Left and right behavior]
\label{ass:leftRightBehav}Assume that:

(a) There is $a>0$ such that $\mathcal{U}_{\varepsilon j}\left( \delta \right) $
converges in probability as $\delta \rightarrow -\infty $ towards a random
variable dominated by $-a$, that is: for all $\eta >0$, there is $\delta
^{\ast }\in \mathbb{R}$ such that $\Pr \left( \mathcal{U}_{\varepsilon j}\left(
\delta ^{\ast }\right) >-a\right) <\eta $.

(b) $\mathcal{U}_{\varepsilon j}\left( \delta \right) $ converges in probability
as $\delta \rightarrow +\infty $ towards $+\infty $, that is: for all $\eta
>0$ and $b\in \mathbb{R}$, there is $\delta ^{\ast }\in \mathbb{R}$ such
that $\Pr \left( \mathcal{U}_{\varepsilon j}\left( \delta ^{\ast }\right)
<b\right) <\eta $.
\end{assumption}

We define $\mathcal{S}_{0}^{int}=\left\{ s\in \mathcal{S}_{0}:s_{j}>0,~%
\forall j\in \mathcal{J}_{0}\right\}$. We can now prove the existence
theorem.

\begin{theorem}
\label{thm:existenceNoTies}Under Assumptions~\ref{ass:Increasing},~\ref%
{ass:NoIndiff},~\ref{ass:ContinuousMeasurable}, and~\ref{ass:leftRightBehav}%
, $\tilde{\sigma}^{-1}\left( s\right) $ is non-empty for all $s\in \mathcal{S%
}_{0}^{int}$.
\end{theorem}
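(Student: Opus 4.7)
The plan is to combine the equivalence in Theorem~\ref{thm:equivalenceNoTies} with a Tarski fixed-point argument applied to a monotone best-response operator. By Theorem~\ref{thm:equivalenceNoTies}, producing $\delta^{*}\in \tilde{\sigma}^{-1}(s)$ is equivalent to exhibiting an equilibrium outcome in the associated ITU matching game. Before constructing this, I would first use the stronger regularity of Assumption~\ref{ass:ContinuousMeasurable}---stochastic equicontinuity together with integrability---combined with Assumption~\ref{ass:NoIndiff} and dominated convergence, to upgrade $\tilde{\sigma}$ to a continuous map on all of $\mathbb{R}^{\mathcal{J}}$.

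Next, for each $j\in\mathcal{J}$ I would define the ``market-clearing'' best response
\[
\psi_{j}(\delta_{-j}) \;:=\; \inf\bigl\{\delta_{j}'\in\mathbb{R}:\sigma_{j}(\delta_{j}',\delta_{-j})\geq s_{j}\bigr\},
\]
and set $\Psi(\delta)_{j}:=\psi_{j}(\delta_{-j})$. The map $\delta_{j}\mapsto \sigma_{j}(\delta_{j},\delta_{-j})$ is continuous and non-decreasing (Assumption~\ref{ass:Increasing}(b)); it approaches zero as $\delta_{j}\to-\infty$, because Assumption~\ref{ass:leftRightBehav}(a) makes alternative $j$ asymptotically inferior to the outside option, and it approaches one as $\delta_{j}\to+\infty$ with $\delta_{-j}$ fixed, because Assumption~\ref{ass:leftRightBehav}(b) forces $\mathcal{U}_{\varepsilon j}(\delta_{j})$ to diverge while the competing $\mathcal{U}_{\varepsilon j'}(\delta_{j'})$ remain bounded. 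Since $s_{j}\in(0,1)$, the intermediate value theorem makes $\psi_{j}(\delta_{-j})$ finite; moreover $\psi_{j}$ is isotone in $\delta_{-j}$, because raising any $\delta_{j'}$ with $j'\neq j$ shrinks the event $\{\mathcal{U}_{\varepsilon j}(\delta_{j})\geq \mathcal{U}_{\varepsilon j'}(\delta_{j'})\}$, hence lowers $\sigma_{j}$, forcing $\psi_{j}$ upward in compensation. So $\Psi$ is an isotone self-map of $\mathbb{R}^{\mathcal{J}}$, and any fixed point $\delta^{*}$ satisfies $\sigma_{j}(\delta^{*})=s_{j}$ for every $j\in\mathcal{J}$ (and then also for $j=0$ by the simplex constraint), yielding $\delta^{*}\in \tilde{\sigma}^{-1}(s)$. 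A valid lower bound for Tarski is immediate: take each $\underline{\delta}_{j}$ sufficiently negative that $\sigma_{j}(\underline{\delta})<s_{j}$, giving $\underline{\delta}\leq \Psi(\underline{\delta})$.

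The main obstacle is constructing an upper bound $\bar{\delta}$ with $\Psi(\bar{\delta})\leq \bar{\delta}$ that closes the Tarski argument on the complete sublattice $[\underline{\delta},\bar{\delta}]$: unlike the ARUM case, NARUMs admit no translation invariance, so pushing all components of $\bar{\delta}$ simultaneously to large values may redistribute mass across $j\in\mathcal{J}$ asymmetrically and leave some $\sigma_{j}<s_{j}$. The resolution I expect to pursue combines Assumption~\ref{ass:leftRightBehav}(b) with the simplex identity $\sum_{j\in\mathcal{J}_{0}}\sigma_{j}=1$ and an iterative application of $\Psi$ on expanding boxes, using isotonicity to dominate the iterates. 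A conceptually cleaner alternative would replace the Tarski step by a topological-degree argument: continuity of $\tilde{\sigma}$ together with Assumption~\ref{ass:leftRightBehav}---which forces divergent $\delta$-sequences to have images tending to $\partial \mathcal{S}_{0}$---makes $\tilde{\sigma}$ a proper map between equi-dimensional manifolds, so a Brouwer-degree computation on a sufficiently large ball yields surjectivity onto $\mathcal{S}_{0}^{int}$.
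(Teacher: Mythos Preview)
Your proposal identifies a legitimate alternative route---a monotone best-response iteration in the spirit of Tarski---and you correctly isolate the hard step. But as written the argument is not closed: you offer two candidate resolutions (an iterative expanding-box argument, or a proper-map degree argument) and complete neither. This is a genuine gap.

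More specifically, the ``pure Tarski on a box $[\underline{\delta},\bar{\delta}]$'' strategy is more delicate than you acknowledge. Requiring $\Psi(\bar{\delta})\leq\bar{\delta}$ is the same as requiring $\sigma_{j}(\bar{\delta})\geq s_{j}$ for every $j\in\mathcal{J}$, and there is no cheap way to manufacture such a point in a NARUM: pushing all coordinates of $\bar{\delta}$ large only forces $\sigma_{0}\to 0$, not $\sigma_{j}\geq s_{j}$ for each $j$. The degree-theory alternative also needs work: you assert properness of $\tilde{\sigma}$, but along a sequence where several $\delta_{j}\to+\infty$ at comparable rates, the image need not drift to $\partial\mathcal{S}_{0}$ without further argument.

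The clean way to finish your own approach is to drop the search for an explicit upper box and instead iterate upward from $\underline{\delta}$. Along the monotone sequence $\delta^{n}=\Psi(\delta^{n-1})$ one has $\sigma_{j}(\delta^{n})\leq s_{j}$ for all $j\in\mathcal{J}$ (because $\sigma_{j}(\delta^{n}_{j},\delta^{n-1}_{-j})=s_{j}$ and $\delta^{n}_{-j}\geq\delta^{n-1}_{-j}$), hence $\sigma_{0}(\delta^{n})\geq s_{0}>0$. Since $\sigma_{0}(\delta^{n})\leq P\bigl(\mathcal{U}_{\varepsilon 0}(0)\geq \mathcal{U}_{\varepsilon j}(\delta^{n}_{j})\bigr)$ for each $j$, Assumption~\ref{ass:leftRightBehav}(b) forces each $\delta^{n}_{j}$ to stay bounded. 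Monotone convergence plus the continuity of $\sigma$ (which you established from Assumption~\ref{ass:ContinuousMeasurable}) then delivers a fixed point. This is exactly the structure of the paper's Lemmas~\ref{lem:2}--\ref{lem:4}, except that the paper runs the iteration on a logit-smoothed surrogate $G^{T}_{j}$ and passes to the limit $T\to 0$, whereas you would be working directly with $\sigma$. The regularization buys the paper strict monotonicity and differentiability of the surrogate at each step; your unsmoothed version is more elementary but leans harder on continuity of $\Psi$, which needs a short argument (e.g., via Berge's theorem or the fact that $\sigma_{j}(\cdot,\delta_{-j})$ is continuous and eventually exceeds $s_{j}$).
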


\subsection{Uniqueness and convergence\label{par:uniqueness}}

Next we consider uniqueness. Assume the random maps $\delta \mapsto
\mathcal{U}_{\varepsilon j}\left( \delta \right) $ are invertible for each $%
\varepsilon \in \Omega $ and $j\in \mathcal{J}$, and define $Z_0$ to be the
random vector such that $Z_{j0}=\mathcal{U}_{\varepsilon j}^{-1}\left(
\mathcal{U}_{\varepsilon 0}\left( \delta _{0}\right) \right) $. $Z$ is a
random vector valued in $\mathbb{R}^{\mathcal{J}}$; let $P_{Z_0}$ be the
probability distribution of $Z_0$.
We will consider the following assumption on $P_{Z_0}$.

\begin{assumption}
\label{ass:noHole}Assume that:

(i) the map $\delta _{j}\mapsto \mathcal{U}_{\varepsilon j}\left( \delta
_{j}\right) $ is invertible for each $\varepsilon \in \Omega $ and $j\in
\mathcal{J}$, and

(ii) $P_{Z_0}$ has a non-vanishing density over $\mathbb{R}^{J}$.
\end{assumption}

Given these assumptions, the following result is a direct consequence of BGH's result on the invertibility of demand systems
\begin{theorem}
\label{thm:uniquenessNoTies}Under Assumptions~\ref{ass:Increasing},~\ref%
{ass:NoIndiff} and~\ref{ass:noHole}, $\tilde{\sigma}^{-1}\left( s\right) $
has a single element for all $s\in \mathcal{S}_{0}^{int}$.
\end{theorem}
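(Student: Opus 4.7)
The plan is to combine the lattice structure of $\tilde{\sigma}^{-1}(s)$ already established in Corollary~\ref{cor:latticeStructureNoTies} with a strict monotonicity of the outside-good share $\sigma_0$ induced by the non-vanishing density in Assumption~\ref{ass:noHole}. The paper signals this is a direct application of BGH's invertibility theorem; concretely, I would verify the connected strict substitutes hypothesis by showing that $\sigma_0$ strictly responds to every $\delta_j$, and then invoke that theorem. Equivalently, because the lattice property already gives us a convenient reduction, I would give the short direct argument below, which is really BGH's logic specialized to our setting.

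First I would rewrite the outside-good share in the normalized random-utility problem. Under Assumptions~\ref{ass:Increasing}(b) and~\ref{ass:noHole}(i), each $\mathcal{U}_{\varepsilon j}$ is strictly increasing and invertible in $\delta_j$, so
\begin{equation*}
\sigma_0(\delta) \;=\; P\!\left(\mathcal{U}_{\varepsilon 0}(0)\ge \mathcal{U}_{\varepsilon j}(\delta_j)\ \forall j\in\mathcal{J}\right) \;=\; P\!\left(Z_{j0}\ge \delta_j\ \forall j\in\mathcal{J}\right),
\end{equation*}
with $Z_{j0}=\mathcal{U}_{\varepsilon j}^{-1}(\mathcal{U}_{\varepsilon 0}(0))$ as defined before the statement. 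Second, I would deduce strict monotonicity: for $\delta\le \delta'$ componentwise with $\delta_{j^\star}<\delta'_{j^\star}$ for some $j^\star\in\mathcal{J}$,
\begin{equation*}
\sigma_0(\delta)-\sigma_0(\delta')\;=\;\int_{\{z\ge\delta,\ z_{j^\star}<\delta'_{j^\star}\}} g(z)\,dz\;>\;0,
\end{equation*}
where $g$ is the density of $P_{Z_0}$, positive by Assumption~\ref{ass:noHole}(ii); the integration region has positive Lebesgue measure (it contains a nondegenerate slab), so the probability is strictly positive. Third, I would close the argument by contradiction: if $\delta,\delta'\in\tilde{\sigma}^{-1}(s)$ with $\delta\ne\delta'$, Corollary~\ref{cor:latticeStructureNoTies} puts $\delta\vee\delta'$ in $\tilde{\sigma}^{-1}(s)$ as well; replacing $\delta'$ with $\delta\vee\delta'$, we have $\delta\le\delta'$ with strict inequality in at least one coordinate, hence $\sigma_0(\delta)>\sigma_0(\delta')$, contradicting $\sigma_0(\delta)=s_0=\sigma_0(\delta')$. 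Therefore $\tilde{\sigma}^{-1}(s)$ is a singleton for every $s\in\mathcal{S}_0^{int}$.

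The only nontrivial step is the strict monotonicity in the second bullet, and the main care needed there is simply that the ``slab'' $\{z:z\ge\delta,\ z_{j^\star}<\delta'_{j^\star}\}$ has positive Lebesgue measure so that the non-vanishing density argument bites; this is immediate once the outside-good share is written in terms of $Z_0$. Viewed through the BGH lens, Step~2 is exactly what certifies that $\sigma_0$ strictly depends on every $\delta_{j}$, $j\in\mathcal{J}$, which together with the weak gross-substitutes structure implied by Assumption~\ref{ass:Increasing}(b) delivers the connected strict substitutes property of BGH and hence injectivity by their theorem. The lattice reduction via Corollary~\ref{cor:latticeStructureNoTies} is what lets us avoid re-proving BGH's connectivity apparatus and get uniqueness in one clean contradiction.
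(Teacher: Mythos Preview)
Your argument is correct and shares the paper's key observation: expressing the outside-good share as $\sigma_0(\delta)=P\big(Z_{j0}\ge\delta_j\ \forall j\in\mathcal{J}\big)$ and using the non-vanishing density of $P_{Z_0}$ to obtain strict monotonicity of $\sigma_0$ in each $\delta_j$. Where you diverge is in the closing step. The paper simply observes that this strict monotonicity means alternative $0$ is a strict substitute to every $j\in\mathcal{J}$ in the sense of \citeasnoun{berry2013connected} and invokes their invertibility theorem directly. You instead give a self-contained argument: use Corollary~\ref{cor:latticeStructureNoTies} to pass to comparable elements of $\tilde{\sigma}^{-1}(s)$, and then the strict monotonicity of $\sigma_0$ yields an immediate contradiction. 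Your route is a bit more elementary in that it does not require the reader to unpack BGH's connected-substitutes machinery, and it nicely leverages the paper's own lattice result; the paper's route is shorter on the page but outsources the final step.

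Two minor polish points, neither affecting correctness. First, the displayed identity in your second step is an equality only when $j^\star$ is the sole coordinate with $\delta_{j^\star}<\delta'_{j^\star}$; in general it is an inequality $\sigma_0(\delta)-\sigma_0(\delta')\ge \int_{\{z\ge\delta,\ z_{j^\star}<\delta'_{j^\star}\}}g(z)\,dz$, which is all you need. Second, in your third step, ``replace $\delta'$ by $\delta\vee\delta'$'' can collapse to $\delta=\delta'$ when the original $\delta'\le\delta$; the clean fix is to work with the pair $\delta\wedge\delta'$ and $\delta\vee\delta'$, which are comparable and distinct whenever $\delta\neq\delta'$.
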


Assumption~\ref{ass:noHole} is quite natural. In the case of
additive random utility models, the map $\delta _{j}\mapsto \mathcal{U}%
_{\varepsilon j}\left( \delta _{j}\right) =\delta _{j}+\varepsilon _{j}$ is
indeed continuous, \ and $Z_{j}=\delta _{0}+\varepsilon _{0}-\varepsilon
_{j} $ has a non-vanishing density over $\mathbb{R}^{J}$ when $\left(
\varepsilon _{0}-\varepsilon _{j}\right) $ does.   On the other hand, for the model in Section 5.2, Assumption 5(ii) is violated, because the consumer heterogeneity $\varepsilon^a\in [0,1]$, the random vector $Z$ can only have bounded support.

Given uniqueness, we also consider convergence properties.
While we do not focus on statistical inference in this paper, the next result may be useful for showing asymptotic properties of our procedures.
In practice, the vector of market shares $s$ may contain sample
uncertainty, and we may approximate $P$ by discretization. This will
provide us with a sequence $\left( P^{n},s^{n}\right) $ which converges
weakly toward $\left( P,s\right) $, where $P$ is the true distribution of $%
\varepsilon $, and $s$ is the vector of market shares in the population.
Under assumptions slightly weaker than for Theorem~\ref{thm:uniquenessNoTies}, we
establish that if $P^{n}$ and $s^{n}$ converge weakly to $P$ and $s$,
respectively, then any $\delta ^{n}\in \tilde{\sigma}^{-1}\left(
P^{n},s^{n}\right) $ will also converge.

\begin{assumption}
\label{ass:consistency}Assume that:

(i) the map $\delta _{j}\mapsto \mathcal{U}_{\varepsilon j}\left( \delta
_{j}\right) $ is invertible for each $\varepsilon \in \Omega $ and $j\in
\mathcal{J}$, and

(ii) for each $\delta \in \mathbb{R}^{\mathcal{J}}$, the random vector $%
\left( U_{\varepsilon j}\left( \delta _{j}\right) \right) _{j\in \mathcal{J}%
} $ where $\varepsilon \sim P$ has a non-vanishing continuous density $%
g\left( u;\delta \right) $ such that $g:\mathbb{R}^{\mathcal{J}}\times
\mathbb{R}^{\mathcal{J}}\rightarrow \mathbb{R}$ is continuous.
\end{assumption}

Note that Assumption~\ref{ass:consistency} is stronger than Assumption~\ref%
{ass:noHole}.

\begin{theorem}
\label{thm:consistency}Under Assumptions~\ref{ass:Increasing},~\ref%
{ass:NoIndiff} and~\ref{ass:consistency}, assume that $P^{n}$ and $s^{n}$
converge weakly to $P$ and $s$, respectively. By theorem~\ref%
{thm:uniquenessNoTies}, $\tilde{\sigma}^{-1}\left( P,s\right) $ is a
singleton, denoted $\left\{ \delta \right\} $. If $\delta ^{n}\in \tilde{%
\sigma}^{-1}\left( P^{n},s^{n}\right) $ for all $n$, then $\delta
^{n}\rightarrow \delta $ as $n\to \infty $.
\end{theorem}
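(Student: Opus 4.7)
The plan is to combine boundedness of $\{\delta^n\}$, a subsequence argument, and the uniqueness result of Theorem~\ref{thm:uniquenessNoTies} to conclude convergence of the full sequence. Since Assumption~\ref{ass:consistency} is stronger than Assumption~\ref{ass:noHole}, Theorem~\ref{thm:uniquenessNoTies} gives $\tilde\sigma^{-1}(P,s) = \{\delta\}$. It is therefore enough to show that every convergent subsequence of $\{\delta^n\}$ limits to $\delta$; the subsequence principle then yields $\delta^n \to \delta$.

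First I would establish boundedness of $\{\delta^n\}$. If some component $\delta^n_j$ escaped to $+\infty$ along a subsequence, monotonicity (Assumption~\ref{ass:Increasing}(b)) combined with the continuous, non-vanishing density in Assumption~\ref{ass:consistency}(ii) forces $\sigma_j(\delta^n;P^n)\to 1$, contradicting $s^n_j\to s_j<1$; the symmetric argument using $s_j>0$ rules out $\delta^n_j\to -\infty$. By Bolzano--Weierstrass one then extracts a subsequence $\delta^{n_k}\to \delta^{\ast}$.

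Next I would prove $\sigma(\delta^{\ast};P) = s$, which together with uniqueness forces $\delta^{\ast} = \delta$. Writing $\sigma_j(\delta;Q) = Q(A_j(\delta))$ with $A_j(\delta) = \{\varepsilon : \mathcal{U}_{\varepsilon j}(\delta_j) > \mathcal{U}_{\varepsilon j'}(\delta_{j'}),\ \forall j'\neq j\}$, I decompose
\[
\sigma_j(\delta^{n_k};P^{n_k}) - \sigma_j(\delta^{\ast};P) = \bigl[\sigma_j(\delta^{n_k};P^{n_k}) - \sigma_j(\delta^{\ast};P^{n_k})\bigr] + \bigl[\sigma_j(\delta^{\ast};P^{n_k}) - \sigma_j(\delta^{\ast};P)\bigr].
\]
The second bracket vanishes by the Portmanteau theorem applied to $P^{n_k}\Rightarrow P$, since Assumption~\ref{ass:NoIndiff} implies $P(\partial A_j(\delta^{\ast}))=0$, making $A_j(\delta^{\ast})$ a $P$-continuity set. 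For the first bracket I would use the continuous density $g(u;\delta)$ from Assumption~\ref{ass:consistency}(ii) to construct, for each $\eta>0$, a ``boundary band'' $B_\eta = \{\varepsilon : |\mathcal{U}_{\varepsilon j}(\delta^{\ast}_j) - \mathcal{U}_{\varepsilon j'}(\delta^{\ast}_{j'})| \leq \eta \text{ for some } j' \neq j\}$ with $P(B_\eta)\downarrow 0$ as $\eta\downarrow 0$ and $\partial B_\eta$ a $P$-null set (again by continuity of $g$). For $k$ large enough, $A_j(\delta^{n_k})\triangle A_j(\delta^{\ast})\subseteq B_\eta$ by continuity of $\delta\mapsto \mathcal{U}_{\varepsilon j}(\delta_j)$, so $P^{n_k}(A_j(\delta^{n_k})\triangle A_j(\delta^{\ast}))\leq P^{n_k}(B_\eta)\to P(B_\eta)\leq \eta$. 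Combining the two brackets with $\sigma(\delta^{n_k};P^{n_k})=s^{n_k}\to s$ yields $\sigma(\delta^{\ast};P)=s$, hence $\delta^{\ast}=\delta$ by uniqueness.

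The hardest part will be making the first bracket rigorous, i.e.\ establishing joint continuity of $\sigma$ in $(\delta,P)$ when $P^n$ converges only weakly. Continuity of $\sigma(\cdot;P)$ in $\delta$ for \emph{fixed} $P$ follows immediately from the jointly continuous density in Assumption~\ref{ass:consistency}(ii), and continuity in $P$ for \emph{fixed} $\delta$ follows from Portmanteau together with Assumption~\ref{ass:NoIndiff}; the subtlety is transferring both simultaneously. The bookkeeping must ensure $B_\eta$ has a $P$-null boundary and contains the symmetric differences uniformly for $k$ large, which is where the joint continuity of $g(u;\delta)$ in $(u,\delta)$---and not merely in $u$---is essential.
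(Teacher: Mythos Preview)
Your overall architecture---boundedness, subsequence extraction, identification of the limit via uniqueness---matches the paper exactly. Two points are worth flagging.

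\textbf{Boundedness.} Your argument that $\delta^n_j\to+\infty$ forces $\sigma_j(\delta^n;P^n)\to 1$ is incomplete: if several coordinates diverge together, no single $\sigma_j$ need tend to $1$. The paper instead bounds $\delta_j^n$ via the share of the \emph{reference} alternative: from $s_0^n \le P^n\bigl(\mathcal{U}_{\varepsilon j}^{-1}(\mathcal{U}_{\varepsilon 0}(\delta_0))\ge \delta_j^n\bigr)=1-F_{j0}^n(\delta_j^n)$ one gets an upper bound on $\delta_j^n$ directly, and from $s_j^n\le F_{j0}^n(\delta_j^n)$ a lower bound. Your argument is easily repaired along the same lines (use $\delta_0=0$ and $s_0^n\to s_0>0$), but as written it has a gap.

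\textbf{Continuity step.} Here the paper takes a shorter route than your two-bracket decomposition. Rather than separating variation in $\delta$ from variation in $P$, it observes that the deterministic sequence $\delta^{n_k}$ together with $\varepsilon^{n_k}\sim P^{n_k}$ gives joint weak convergence $(\varepsilon^{n_k},\delta^{n_k})\Rightarrow(\varepsilon,\delta^\ast)$, and then applies an extended continuous mapping theorem (Lemma~2.2(vii) in van der Vaart) to the indicator $\mathbf 1\{\mathcal{U}_{\varepsilon j}(\delta_j)\ge \mathcal{U}_{\varepsilon k}(\delta_k)\ \forall k\}$, which is $P$-a.s.\ continuous at $(\varepsilon,\delta^\ast)$ by Assumption~\ref{ass:NoIndiff}. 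This handles both sources of variation simultaneously and sidesteps exactly the uniformity issue you flag in your last paragraph: your symmetric-difference containment $A_j(\delta^{n_k})\triangle A_j(\delta^\ast)\subseteq B_\eta$ for all large $k$ requires equicontinuity of $\delta\mapsto\mathcal{U}_{\varepsilon j}(\delta)$ in $\varepsilon$, which is not directly supplied by Assumption~\ref{ass:consistency}. Your route can be completed, but the joint-weak-convergence argument is both shorter and avoids the bookkeeping you correctly anticipate.
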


\section{\protect\small Proofs}

\label{app:proofs}

\subsection{{\protect\small Proof of theorem~\protect\ref%
{thm:equivalenceNoTies}}}

\begin{varproof}
{\small \underline{(a) From demand inversion to equilibrium matching}:
Consider $\delta \in \tilde{\sigma}^{-1}\left( s\right) $ a solution to the
demand inversion problem. Then $s_{j}=P\left( \varepsilon \in \Omega :%
\mathcal{U}_{\varepsilon j}\left( \delta _{j}\right) \geq u\left(
\varepsilon \right) \right) $, where
\begin{equation}
u\left( \varepsilon \right) =\max_{j\in \mathcal{J}_{0}}\mathcal{U}%
_{\varepsilon j}\left( \delta _{j}\right) .\label{max-consumer-surplus}
\end{equation}%
Let us show that we can construct $\pi $ and set $v=-\delta $ such that $%
\left( \pi ,u,v\right) $ is an equilibrium outcome, which is to say it
satisfies the three conditions of Definition~\ref{def:equilibrium}. 

Let us show that the solution of problem~(\ref{max-consumer-surplus}) is attained with probability one by one element denoted $j(\varepsilon)$. Indeed, otherwise, $\left\{
j,j^{\prime }\right\} \subseteq J\left( \varepsilon \right) $ would arise
with positive probability for some pair $j\neq j^{\prime }$, which would
imply that there is a positive probability of indifference between $j$ and $%
j^{\prime }$, in contradiction with~(\ref{defMktShare}). Hence we have $ \{j\left( \varepsilon \right) \} =\arg \max_{j\in
\mathcal{J}_{0}}\left\{ \mathcal{U}_{\varepsilon j}\left( \delta _{j}\right)
\right\} $ with probability one, and the random variable $j\left( \varepsilon \right)$ has probability distribution  $s$. 

Introduce $\pi$ as the joint distribution of $\left( \varepsilon,j\left( \varepsilon \right) \right)$. We have by definition that $\pi $ has marginal distributions $P$ and $s$: 
$$\pi \in \mathcal{M}%
\left( P,s\right). $$

Next, introducing $v\left( j\right) =-\delta _{j}$, $g_{\varepsilon
j}\left( v\left( j\right) \right) =-\mathcal{U}_{\varepsilon _{i}j}\left(
\delta _{j}\right) $, and $f_{\varepsilon j}\left( x\right) =x$, we see that expression~(\ref{max-consumer-surplus}) implies that for all $\left( \varepsilon ,j\right) $,
\begin{equation}
f_{\varepsilon j}\left( u\left( \varepsilon \right) \right) +g_{\varepsilon
j}\left( v\left( j\right) \right) \geq 0.\label{stabil-again}
\end{equation}%
However, for $(\varepsilon,j)$ in the support of $\pi $ we have that $j$ is optimal in~(\ref{max-consumer-surplus}), and therefore equality in~(\ref{stabil-again}) holds. 

Hence, conditions (i), (ii) and (iii) in Definition~\ref{def:equilibrium}
are met, and $\left( \pi ,u,v\right) $ is an equilibrium outcome. 

\underline{(b) From equilibrium matching to demand inversion}: Let $%
\left( \pi ,u,v\right) $ be an equilibrium matching in the sense of
Definition~\ref{def:equilibrium}, where $f_{\varepsilon j}\left( x\right) =x$
and $g_{\varepsilon j}\left( y\right) =-\mathcal{U}_{\varepsilon
_{i}j}\left( -y\right) $. Then letting $\delta =-v$, one has by condition
(ii) that for any $\varepsilon \in \Omega $ and $j\in \mathcal{J}_{0}$,
\begin{equation*}
u\left( \varepsilon \right) -\mathcal{U}_{\varepsilon j}\left( \delta
_{j}\right) \geq 0
\end{equation*}%
thus $u\left( \varepsilon \right) \geq \max_{j\in \mathcal{J}_{0}}\mathcal{U}%
_{\varepsilon j}\left( \delta _{j}\right) $. But by condition (iii), for $%
j\in Supp\left( \pi \left( .|\varepsilon \right) \right) $, one has $u\left(
\varepsilon \right) =\mathcal{U}_{\varepsilon j}\left( \delta _{j}\right) $,
thus%
\begin{equation*}
u\left( \varepsilon \right) =\max_{j\in \mathcal{J}_{0}}\mathcal{U}%
_{\varepsilon j}\left( \delta _{j}\right) .
\end{equation*}%
Condition (iii) implies that if $\left( \tilde{\varepsilon},\tilde{j}\right)
\sim \pi $, then $\Pr \left( j\left( \tilde{\varepsilon}\right) =
\tilde{j} \right) =1$, thus
\begin{equation*}
\sigma _{j}\left( \delta \right) =P\left( \varepsilon \in \Omega :j\left(
\tilde{\varepsilon}\right) = j \right) =\Pr \left( \tilde{j}%
=j\right) =s_{j}.
\end{equation*}%
Hence $s = \tilde{\sigma}\left( \delta \right) $.]\hfill QED }
\end{varproof}

\subsection{{\protect\small Proof of theorem~\protect\ref%
{thm:InverseIsotoneNoTies} }}

\begin{proof}

{\small

Assume $s_{j}\leq s_{j}^{\prime }$ for all $j\in \mathcal{J}$ (which implies 
$s_{0}\geq s_{0}^{\prime }$).

Let $s^{\wedge }=\sigma \left( \delta \wedge \delta ^{\prime }\right) $, and
show that $s^{\wedge }=s$. 

We have for all $j\in \mathcal{J}_{0}$, 
\[
s_{j}^{\wedge }=\Pr \left( j\in \arg \max_{j\in \mathcal{J}_{0}}\mathcal{U}%
_{\varepsilon j}\left( \delta _{j}\wedge \delta _{j}^{\prime }\right)
\right) 
\]

So for $j\in \mathcal{J}_{0}$ with $\delta _{j}\leq \delta _{j}^{\prime }$.
Then $\delta _{j}\wedge \delta _{j}^{\prime }=\delta _{j}$ and $\delta
_{k}\wedge \delta _{k}^{\prime }\leq \delta _{k}$ for $k\neq j$, hence when
moving from $\delta $ to $\delta \wedge \delta ^{\prime }$, the utility
associated with $j$ has remained unchanged, while the utilities associated
with alternatives have weakly decreased, and hence in that case, $%
s_{j}^{\wedge }\geq s_{j}$.

Now assume $\delta _{j}>\delta _{j}^{\prime }$. Then the same logic implies $%
s_{j}^{\wedge }\geq s_{j}^{\prime }$, but because $s_{j}^{\prime }\geq s_{j}$
as $j\neq 0$, we have also $s_{j}^{\wedge }\geq s_{j}$.

As a result $s_{j}^{\wedge }\geq s_{j}$ for all $j\in \mathcal{J}_{0}$. But
as these two vectors both sum to one, we in fact get equality.

\bigskip 

Now let $s^{\vee }=\sigma \left( \delta \vee \delta ^{\prime }\right) $, and
show that $s^{\vee }=s^{\prime }$. \ We have for all $j\in \mathcal{J}_{0}$, 
\[
s_{j}^{\vee }=\Pr \left( j\in \arg \max_{j\in \mathcal{J}_{0}}\mathcal{U}%
_{\varepsilon j}\left( \delta _{j}\vee \delta _{j}^{\prime }\right) \right) 
\]

So for $j\in \mathcal{J}_{0}$ with $\delta _{j}\leq \delta _{j}^{\prime }$.
Then $\delta _{j}\vee \delta _{j}^{\prime }=\delta _{j}^{\prime }$ and $%
\delta _{k}\vee \delta _{k}^{\prime }\leq \delta _{k}^{\prime }$ for $k\neq j
$, hence when moving from $\delta $ to $\delta \vee \delta ^{\prime }$, the
utility associated with $j$ has remained unchanged, while the utilities
associated with alternatives have weakly increased, and hence in that case, $%
s_{j}^{\vee }\leq s_{j}$.
Now assume $\delta _{j}>\delta _{j}^{\prime }$. Then the same logic implies $%
s_{j}^{\vee }\leq s_{j}$, but because $s_{j}\leq s_{j}^{\prime }$ as $j\neq 0
$, we have also $s_{j}^{\vee }\leq s_{j}^{\prime }$.

As a result $s_{j}^{\vee }\leq s_{j}^{\prime }$ for all $j\in \mathcal{J}_{0}
$. But as these two vectors both sum to one, we in fact get equality.

}
\end{proof}

{\small Corollary~\ref%
{cor:latticeStructureNoTies} now follows from theorem~\ref{thm:InverseIsotoneNoTies}. Indeed, if $\delta \in \tilde{\sigma}%
^{-1}\left( s\right) $ and $\delta ^{\prime }\in \tilde{\sigma}^{-1}\left(
s\right) $, then }$\delta \wedge \delta ^{\prime }\in \tilde{\sigma}%
^{-1}\left( s\right) $ and $\delta \vee \delta ^{\prime }\in \tilde{\sigma}%
^{-1}\left( s\right) $\hfill QED.

\begin{remark}
{\small To the best of our knowledge, this result is novel in the theory of
two-sided matchings with imperfectly transferable utility. While in the case
of matching with (perfectly) transferable utility, it follows easily from
the fact that the value of the optimal assignment problem is a supermodular
function in $\left( P,-s\right) $, (see e.g. \citeasnoun{vohra2004advanced},
theorem~7.20),  is appears to be novel beyond that case%
\footnote{\citeasnoun{demange1985strategy} show (in their lemma 2) isotonicity in the strong
set order with respect to reservation utilities, which is a different result.%
}. \hfill $\square $ }
\end{remark}
\subsection{{\protect\small Proof of theorem~\protect\ref%
{thm:existenceNoTies}}}

{\small The proof of theorem~\protect\ref%
{thm:existenceNoTies} is based on the additional lemmas \ref{lem:1}-\ref{lem:4}, which we first state and prove. }

{\small Assumption~\ref{ass:leftRightBehav} implies that $\forall\ \eta >0,\
\nu >0$, there is $\delta ^{\ast }$ s.t. $\delta >\delta ^{\ast }$
implies $\Pr \left( \left\vert X_{\delta }-X_{\delta }^{\ast }\right\vert
>\nu \right) <\eta $. }

\begin{lemma}
{\small \label{lem:1}There is a $T^{\ast }$ such that for $T<T^{\ast }$
there exists $\underline{\delta }_{j}^{T}$ such that
\begin{equation}
\int \frac{\exp \left( \frac{\mathcal{U}_{\varepsilon j}\left( \underline{%
\delta }_{j}\right) }{T}\right) }{1+\exp \left( \frac{\mathcal{U}%
_{\varepsilon j}\left( \underline{\delta }_{j}\right) }{T}\right) }P\left(
d\varepsilon \right) =s_{j}  \label{solveForDelta}
\end{equation}%
and for all $T<T^{\ast }$, $\underline{\delta }_{j}^{T}\geq \underline{%
\delta }_{j}$ where $\underline{\delta }_{j}$\ does not depend on $T$. }
\end{lemma}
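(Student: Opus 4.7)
The plan is to define the auxiliary function
$$F_j(\delta, T) := \int \sigma\!\left(\mathcal{U}_{\varepsilon j}(\delta)/T\right) P(d\varepsilon), \qquad \sigma(x) := \frac{e^x}{1+e^x},$$
so that equation \eqref{solveForDelta} reads $F_j(\underline{\delta}_j^T, T) = s_j$, and then to produce $\underline{\delta}_j^T$ by an intermediate value argument applied to $\delta \mapsto F_j(\delta, T)$. The first step is routine: for each fixed $T > 0$, the integrand is bounded by $1$, is nondecreasing in $\delta$ (since $\mathcal{U}_{\varepsilon j}(\cdot)$ is increasing by Assumption \ref{ass:Increasing}(b) and $\sigma$ is increasing), and is continuous in $\delta$, so dominated convergence delivers that $F_j(\cdot, T)$ is continuous and nondecreasing on $\mathbb{R}$.

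Next I would pin down the asymptotic behavior of $F_j(\cdot, T)$ at the two ends. As $\delta \to +\infty$, Assumption \ref{ass:leftRightBehav}(b) furnishes, for any $\eta > 0$, a $\delta^{\sharp}$ with $\Pr(\mathcal{U}_{\varepsilon j}(\delta^{\sharp}) < b) < \eta$; taking $b$ large enough that $\sigma(b/T) > 1 - \eta$ yields $F_j(\delta^{\sharp}, T) \geq (1-\eta)^2$, hence $F_j(\delta, T) \to 1 > s_j$ as $\delta \to +\infty$. The candidate lower bound $\underline{\delta}_j$ (independent of $T$) is delivered by Assumption \ref{ass:leftRightBehav}(a): pick $\underline{\delta}_j$ such that $\Pr(\mathcal{U}_{\varepsilon j}(\underline{\delta}_j) > -a) < s_j/2$. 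Using the pointwise limit $\sigma(x/T) \to \mathbf{1}\{x > 0\}$ for $x \neq 0$ together with dominated convergence,
$$\limsup_{T \to 0^{+}} F_j(\underline{\delta}_j, T) \;\leq\; \Pr\!\left(\mathcal{U}_{\varepsilon j}(\underline{\delta}_j) \geq 0\right) \;\leq\; \Pr\!\left(\mathcal{U}_{\varepsilon j}(\underline{\delta}_j) > -a\right) \;<\; \tfrac{s_j}{2},$$
where the first inequality absorbs the ambiguous boundary event $\{\mathcal{U}_{\varepsilon j}(\underline{\delta}_j) = 0\}$ into the weak inequality and the second uses $a > 0$.

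Consequently there is $T_j^{*} > 0$ such that $F_j(\underline{\delta}_j, T) < s_j$ for every $T < T_j^{*}$; setting $T^{*} := \min_{j} T_j^{*}$ gives the uniform threshold. With both endpoints in hand, the intermediate value theorem applied to the continuous nondecreasing map $F_j(\cdot, T)$ produces some $\underline{\delta}_j^T \in (\underline{\delta}_j, +\infty)$ solving \eqref{solveForDelta}, which automatically satisfies $\underline{\delta}_j^T \geq \underline{\delta}_j$. The step I expect to be the most delicate is the $T \to 0^{+}$ control of $F_j(\underline{\delta}_j, T)$: one has to handle the fact that the sigmoid $\sigma(\cdot/T)$ degenerates to an indicator whose value on the boundary $\{\mathcal{U}_{\varepsilon j}(\underline{\delta}_j) = 0\}$ is $1/2$, and one has to exploit the strict buffer $-a < 0$ in Assumption \ref{ass:leftRightBehav}(a) carefully enough that the limiting upper bound $\Pr(\mathcal{U}_{\varepsilon j}(\underline{\delta}_j) \geq 0)$ sits strictly below $s_j$. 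Everything else (continuity, monotonicity, and the limit at $+\infty$) is straightforward from Assumptions \ref{ass:Increasing} and \ref{ass:leftRightBehav}.
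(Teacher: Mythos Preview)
Your proposal is correct and follows essentially the same route as the paper's proof: define the sigmoid-smoothed share $F_j^T(\delta)$, establish its continuity and monotonicity in $\delta$, use Assumption~\ref{ass:leftRightBehav}(a) to bound it strictly below $s_j$ at a fixed $\underline{\delta}_j$ for small $T$ and Assumption~\ref{ass:leftRightBehav}(b) to push it above $s_j$ for large $\delta$, and then apply the intermediate value theorem to produce $\underline{\delta}_j^T \geq \underline{\delta}_j$. The only cosmetic difference is that the paper splits the integral explicitly over $\{\mathcal{U}_{\varepsilon j}(\delta) \lessgtr -a\}$ to obtain a closed-form threshold $T^{\ast} = a/\log(1/s_j - 1)$, whereas you invoke dominated convergence as $T \to 0^{+}$ to get $T^{\ast}$ non-constructively.
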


\begin{proof}
{\small For $T>0$, let
\begin{equation*}
F_{j}^{T}\left( \delta \right) =\int \frac{\exp \left( \frac{\mathcal{U}%
_{\varepsilon j}\left( \delta \right) }{T}\right) }{1+\exp \left( \frac{%
\mathcal{U}_{\varepsilon j}\left( \delta \right) }{T}\right) }P\left(
d\varepsilon \right) =\int \frac{1}{1+\exp \left( -\frac{\mathcal{U}%
_{\varepsilon j}\left( \delta \right) }{T}\right) }P\left( d\varepsilon
\right) .
\end{equation*}
}

{\small Assumption ~\ref{ass:ContinuousMeasurable}, part (b) implies: }

{\small \underline{Fact (a):} $F_{j}^{T}\left( .\right) $ is continuous and
strictly increasing. }

{\small Next, by Assumption ~\ref{ass:leftRightBehav}, there exists $%
\underline{\delta }_{j}$ such that $\delta <\underline{\delta }_{j}$ implies
$\Pr \left( \mathcal{U}_{\varepsilon j}\left( \delta \right) >-a\right) \leq
s_{j}/2$. \ Hence, for $\delta <\underline{\delta }_{j}$
\begin{eqnarray*}
F_{j}^{T}\left( \delta \right) &=&\int_{\left\{ \mathcal{U}_{\varepsilon
j}\left( \delta \right) <-a\right\} }\frac{\exp \left( \frac{\mathcal{U}%
_{\varepsilon j}\left( \delta \right) }{T}\right) }{1+\exp \left( \frac{%
\mathcal{U}_{\varepsilon j}\left( \delta \right) }{T}\right) }P\left(
d\varepsilon \right) +\int_{\left\{ \mathcal{U}_{\varepsilon j}\left( \delta
\right) \geq -a\right\} }\frac{\exp \left( \frac{\mathcal{U}_{\varepsilon
j}\left( \delta \right) }{T}\right) }{1+\exp \left( \frac{\mathcal{U}%
_{\varepsilon j}\left( \delta \right) }{T}\right) }P\left( d\varepsilon
\right) \\
&\leq &\frac{1}{1+\exp \left( \frac{a}{T}\right) }+s_{j}/2
\end{eqnarray*}
}

{\small and taking $T^{\ast }=a/\log \left( 1/s_{j}-1\right) $ if $\log
\left( 1/s_{j}-1\right) >0$, and $T^{\ast }=+\infty $ else, it follows that $%
T\leq T^{\ast }$ implies $\frac{1}{1+\exp \left( \frac{a}{T}\right) }\leq
s_{j}/2$, hence we get to: }

{\small \underline{Fact (b):} for $\delta <\underline{\delta }_{j}$ and $%
T\leq T^{\ast }$, one has $F_{j}^{T}\left( \delta \right) <s_{j}$. }

{\small Next, by Assumption~\ref{ass:leftRightBehav}, there exists $\delta
_{j}^{\prime } $ such that $\delta >\delta _{j}^{\prime }$ implies $\Pr
\left( \mathcal{U}_{\varepsilon j}\left( \delta \right) >0\right) \geq
2s_{j} $. Then for $\delta >\delta _{j}^{\prime }$,
\begin{equation*}
F_{j}^{T}\left( \delta \right) \geq \int_{\left\{ \mathcal{U}_{\varepsilon
j}\left( \delta \right) >b\right\} }\frac{\exp \left( \frac{\mathcal{U}%
_{\varepsilon j}\left( \delta \right) }{T}\right) }{1+\exp \left( \frac{%
\mathcal{U}_{\varepsilon j}\left( \delta \right) }{T}\right) }P\left(
d\varepsilon \right) \geq \frac{\Pr \left( \mathcal{U}_{\varepsilon j}\left(
\delta \right) >b\right) }{1+\exp \left( 0\right) }\geq \frac{2s_{j}}{2}%
=s_{j}.
\end{equation*}
}

{\small As a result, we get: }

{\small \underline{Fact (c):} for all $T>0$ and for $\delta >\delta
_{j}^{\prime }$, $F_{j}^{T}\left( \delta \right) >s_{j}$. }

{\small By combination of facts (a), (b) and (c), it follows that for $T\leq
T^{\ast }$, there exists a unique $\underline{\delta }_{j}^{T}$ such that $%
F_{j}^{T}\left( \underline{\delta }_{j}^{T}\right) =s_{j}$ and $\underline{%
\delta }_{j}^{T}\leq \delta _{j}^{\prime }$, where $\delta _{j}^{\prime }$
does not depend on $T\leq T^{\ast }$. }
\end{proof}

{\small Let
\begin{equation*}
G_{j}^{T}\left( \delta _{j};\delta _{-j}\right) :=\int \frac{P\left(
d\varepsilon \right) }{\exp \left( -\frac{\mathcal{U}_{\varepsilon j}\left(
\delta _{j}\right) }{T}\right) +\sum_{j^{\prime }\in \mathcal{J}}\exp \left(
\frac{\mathcal{U}_{\varepsilon j^{\prime }}\left( \delta _{j^{\prime
}}\right) -\mathcal{U}_{\varepsilon j}\left( \delta _{j}\right) }{T}\right) }%
.
\end{equation*}
}

\begin{lemma}
{\small \label{lem:2}For $T<T^{\ast }$, if $G_{j}^{T}\left( \delta
_{j}^{T,k};\delta _{-j}^{T,k}\right) \leq s_{j}$, then: }

{\small (i) there is a real $\delta _{j}^{T,k+1}\geq \delta _{j}^{T,k}$ such
that%
\begin{equation*}
G_{j}^{T}\left( \delta _{j}^{T,k+1};\delta _{-j}^{T,k}\right) =s_{j},
\end{equation*}
}

{\small (ii) one has $G_{j}^{T}\left( \delta _{j}^{T,k+1};\delta
_{-j}^{T,k+1}\right) \leq s_{j}$. }
\end{lemma}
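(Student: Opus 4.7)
The plan is to exploit monotonicity and limiting behavior of $G_j^T$. Pointwise in $\varepsilon$, note that the $j'=j$ term in the denominator of the integrand equals $1$, while $\exp(-\mathcal{U}_{\varepsilon j}(\delta_j)/T)$ and each cross term $\exp((\mathcal{U}_{\varepsilon j'}(\delta_{j'}) - \mathcal{U}_{\varepsilon j}(\delta_j))/T)$ for $j'\neq j$ is strictly decreasing in $\delta_j$ by Assumption~\ref{ass:Increasing}(b); hence the integrand is strictly increasing in $\delta_j$, continuous, and bounded above by $1$. Passing these properties to $G_j^T$ via bounded convergence yields that $\delta_j \mapsto G_j^T(\delta_j;\delta_{-j})$ is continuous and strictly increasing, while for each $j' \neq j$, $\delta_{j'} \mapsto G_j^T(\delta_j;\delta_{-j})$ is strictly decreasing.

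For part (i), the remaining ingredient is that $G_j^T(\delta_j;\delta_{-j}^{T,k})$ can be pushed above $s_j$ by increasing $\delta_j$. By Assumption~\ref{ass:leftRightBehav}(b), $\mathcal{U}_{\varepsilon j}(\delta_j)\to +\infty$ in probability as $\delta_j\to +\infty$, so $\exp(-\mathcal{U}_{\varepsilon j}(\delta_j)/T)$ and each $\exp((\mathcal{U}_{\varepsilon j'}(\delta_{j'}^{T,k})-\mathcal{U}_{\varepsilon j}(\delta_j))/T)$ with $j'\neq j$ tend to $0$ in probability (the random variable $\mathcal{U}_{\varepsilon j'}(\delta_{j'}^{T,k})$ is integrable by Assumption~\ref{ass:ContinuousMeasurable}(a), so its contribution is controlled by tail estimates). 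Bounded convergence then yields $G_j^T(\delta_j;\delta_{-j}^{T,k})\to 1>s_j$. Combining this with the hypothesis $G_j^T(\delta_j^{T,k};\delta_{-j}^{T,k})\leq s_j$, continuity, and strict monotonicity, the intermediate value theorem produces a real $\delta_j^{T,k+1}\geq \delta_j^{T,k}$ with $G_j^T(\delta_j^{T,k+1};\delta_{-j}^{T,k})=s_j$.

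For part (ii), I read the iteration as applying part (i) simultaneously to every coordinate $j'\in\mathcal{J}$: the hypothesis $G_{j'}^T(\delta_{j'}^{T,k};\delta_{-j'}^{T,k})\leq s_{j'}$ holds for all $j'\in\mathcal{J}$, either as a standing inductive hypothesis in the ambient proof of Theorem~\ref{thm:existenceNoTies} or as the conclusion of part (ii) at the previous iteration. Part (i) then yields $\delta_{j'}^{T,k+1}\geq \delta_{j'}^{T,k}$ for every $j'\in\mathcal{J}$, and the cross-coordinate monotonicity established above gives
\begin{equation*}
G_j^T(\delta_j^{T,k+1};\delta_{-j}^{T,k+1}) \leq G_j^T(\delta_j^{T,k+1};\delta_{-j}^{T,k}) = s_j,
\end{equation*}
where the equality is the defining property of $\delta_j^{T,k+1}$ from part (i).

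The main technical obstacle is making the limit $G_j^T(\delta_j;\delta_{-j}^{T,k})\to 1$ fully rigorous: Assumption~\ref{ass:leftRightBehav}(b) only supplies convergence in probability (not pointwise or almost-sure), so showing that the sum of cross terms vanishes in probability requires a two-epsilon argument that separates tails of $\mathcal{U}_{\varepsilon j'}(\delta_{j'}^{T,k})$ from the tail bound on $\mathcal{U}_{\varepsilon j}(\delta_j)$ coming from Assumption~\ref{ass:leftRightBehav}(b). The threshold $T<T^\ast$ inherited from Lemma~\ref{lem:1} should play no essential role in this step (it was needed there to make a lower bound work out), but this is worth verifying in the write-up.
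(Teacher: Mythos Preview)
Your proposal is correct and follows essentially the same route as the paper: monotonicity of $G_j^T$ in $\delta_j$, the intermediate value theorem for part (i), and the cross-coordinate antitonicity for part (ii). The only difference is cosmetic: where you argue $G_j^T(\delta_j;\delta_{-j}^{T,k})\to 1$ via convergence in probability plus bounded convergence, the paper writes out the two-epsilon argument you flag explicitly---it fixes $\eta<1-\sqrt{s_j}$, chooses $M$ so that $\Pr\bigl(1+\sum_{j'\neq j}\exp(\mathcal{U}_{\varepsilon j'}(\delta_{j'}^{T,k})/T)<M\bigr)>1-\eta/2$, then picks $b$ and $\delta_j^\ast$ from Assumption~\ref{ass:leftRightBehav}(b) to force $G_j^T(\delta_j;\delta_{-j}^{T,k})\geq (1-\eta)/(1+\eta)>s_j$ for $\delta_j>\delta_j^\ast$. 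Your observation that $T<T^\ast$ plays no role in this lemma is also confirmed by the paper's proof.
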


\begin{proof}
{\small Take $\eta >0$ such that $\eta <1-\sqrt{s_{j}}$. There is $M>0$ such
that
\begin{equation*}
\Pr \left( 1+\sum_{j^{\prime }\neq j}\exp \left( \frac{\mathcal{U}%
_{\varepsilon j^{\prime }}\left( \delta _{j^{\prime }}^{T,k}\right) }{T}%
\right) <M\right) >1-\eta /2.
\end{equation*}
We have%
\begin{eqnarray*}
G_{j}^{T}\left( \delta _{j};\delta _{-j}^{T,k}\right) &\geq &\int \frac{%
1\left\{ 1+\sum_{j^{\prime }\neq j}\exp \left( \frac{\mathcal{U}%
_{\varepsilon j^{\prime }}\left( \delta _{j^{\prime }}^{T,k}\right) }{T}%
\right) <M\right\} P\left( d\varepsilon \right) }{1+\exp \left( -\frac{%
\mathcal{U}_{\varepsilon j}\left( \delta _{j}\right) }{T}\right) \left(
1+\sum_{j^{\prime }\neq j}\exp \left( \frac{\mathcal{U}_{\varepsilon
j^{\prime }}\left( \delta _{j^{\prime }}^{T,k}\right) }{T}\right) \right) }
\\
&\geq &\int \frac{1\left\{ 1+\sum_{j^{\prime }\neq j}\exp \left( \frac{%
\mathcal{U}_{\varepsilon j^{\prime }}\left( \delta _{j^{\prime
}}^{T,k}\right) }{T}\right) <M\right\} P\left( d\varepsilon \right) }{1+\exp
\left( -\frac{\mathcal{U}_{\varepsilon j}\left( \delta _{j}\right) }{T}%
\right) M}
\end{eqnarray*}
}

{\small Next, by Assumption~\ref{ass:leftRightBehav}, for all $b\in \mathbb{R%
}$, there exists $\delta _{j}^{\ast }$ such that $\delta >\delta _{j}^{\ast
} $ implies $\Pr \left( \mathcal{U}_{\varepsilon j}\left( \delta \right)
>b\right) \geq 1-\eta /2$. Thus for $\delta _{j}>\delta _{j}^{\ast }$,
\begin{equation*}
G_{j}^{T}\left( \delta _{j};\delta _{-j}^{T,k}\right) \geq \int \frac{%
1\left\{ 1+\sum_{j^{\prime }\neq j}\exp \left( \frac{\mathcal{U}%
_{\varepsilon j^{\prime }}\left( \delta _{j^{\prime }}^{T,k}\right) }{T}%
\right) <M\right\} 1\left\{ \mathcal{U}_{\varepsilon j}\left( \delta
_{j}\right) >b\right\} P\left( d\varepsilon \right) }{1+\exp \left( -\frac{b%
}{T}\right) M}\geq \frac{1-\eta }{1+\exp \left( -\frac{b}{T}\right) M}.
\end{equation*}%
Choosing $b=-T\log \left( \eta /M\right) $ implies that the right hand-side
is $\frac{1-\eta }{1+\eta }\geq \left( 1-\eta \right) ^{2}$. Because $\eta
<1-\sqrt{s_{j}}$, $\left( 1-\eta \right) ^{2}>s_{j}$, and therefore for $%
\delta _{j}>\delta _{j}^{\ast }$, $G_{j}^{T}\left( \delta _{j};\delta
_{-j}^{T,k}\right) >s_{j}$. Hence, because $G_{j}^{T}\left( \delta
_{j}^{T,k};\delta _{-j}^{T,k}\right) \leq s_{j}$, by continuity of $%
G_{j}^{T}\left( .;\delta _{-j}^{T,k}\right) $, there exists $\delta
_{j}^{T,k+1}\in (\delta _{j}^{T,k},\delta _{j}^{\ast })$ such that
\begin{equation*}
G_{j}^{T}\left( \delta _{j}^{T,k+1};\delta _{-j}^{T,k}\right) =s_{j},
\end{equation*}%
which shows claim (i). To show the second claim, let us note that $%
G_{j}^{T}\left( \delta \right) $ is decreasing with respect to $\delta
_{j^{\prime }}$ for any $j^{\prime }\neq j$. Indeed,
\begin{equation*}
G_{j}^{T}\left( \delta _{j};\delta _{-j}\right) =\int \frac{P\left(
d\varepsilon \right) }{\exp \left( -\frac{\mathcal{U}_{\varepsilon j}\left(
\delta _{j}\right) }{T}\right) +1+\sum_{j^{\prime }\neq j}\exp \left( \frac{%
\mathcal{U}_{\varepsilon j^{\prime }}\left( \delta _{j^{\prime }}\right) -%
\mathcal{U}_{\varepsilon j}\left( \delta _{j}\right) }{T}\right) }
\end{equation*}%
is expressed as the expectation of a term which is decreasing in $\delta
_{j^{\prime }}$. Hence, as $\delta _{-j}^{T,k}\leq \delta _{-j}^{T,k+1}$ in
the componentwise order, it follows that%
\begin{equation*}
G_{j}^{T}\left( \delta _{j}^{T,k+1};\delta _{-j}^{T,k+1}\right) \leq
G_{j}^{T}\left( \delta _{j}^{T,k+1};\delta _{-j}^{T,k}\right) =s_{j},
\end{equation*}%
which shows claim (ii). }
\end{proof}

{\small Because of lemma~\ref{lem:2}, one can construct recursively a
sequence $\left( \delta _{j}^{T,k}\right) $ such that $\delta
_{j}^{T,k+1}\geq \delta _{j}^{T,k}$ and
\begin{equation}
G_{j}^{T}\left( \delta _{j}^{T,k}\right) \leq s_{j},  \label{ineqToSum}
\end{equation}
}

{\small From Assumption~\ref{ass:leftRightBehav}, setting $\eta =s_{0}/4$
and $b=T^{\ast }\log \left( 4/s_{0}-1\right) $, one has the existence of $%
\delta \in \mathbb{R}$ such that $\delta \geq \overline{\delta }_{j}$
implies $\Pr \left( \mathcal{U}_{\varepsilon j}\left( \delta \right)
<b\right) <\eta $. }

\begin{lemma}
{\small \label{lem:3}For all $k\in \mathbb{N}$ and $T<T^{\ast }$, one has
\begin{equation}
\delta _{j}^{k}\leq \overline{\delta }_{j}  \label{IneqQED}
\end{equation}%
where $\overline{\delta }_{j}$ is a constant independent from $T<T^{\ast }$.
}
\end{lemma}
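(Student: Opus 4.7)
The plan is to prove the lemma by contradiction. Supposing toward contradiction that $\delta_j^{T,k} > \overline{\delta}_j$ for some $j \in \mathcal{J}$, $k \in \mathbb{N}$, and $T < T^*$, I will sum the inequalities (\ref{ineqToSum}) across $j \in \mathcal{J}$ to produce a lower bound on a companion smoothed-logit share for the default alternative, then isolate the $j$-th term of the denominator to derive an incompatible upper bound.

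First I will recast each $G_j^T$ in standard logit-share form. Multiplying numerator and denominator by $\exp(\mathcal{U}_{\varepsilon j}(\delta_j)/T)$, and using the normalization $\delta_0 = 0$ together with the convention $\mathcal{U}_{\varepsilon 0}(0) = 0$, one gets
\[
G_j^T(\delta_j;\delta_{-j}) = \int \frac{\exp(\mathcal{U}_{\varepsilon j}(\delta_j)/T)}{1 + \sum_{j' \in \mathcal{J}} \exp(\mathcal{U}_{\varepsilon j'}(\delta_{j'})/T)}\, P(d\varepsilon).
\]
Introducing the companion share
\[
G_0^T(\delta) = \int \frac{P(d\varepsilon)}{1 + \sum_{j' \in \mathcal{J}} \exp(\mathcal{U}_{\varepsilon j'}(\delta_{j'})/T)},
\]
the partition-of-unity identity $G_0^T(\delta) + \sum_{j \in \mathcal{J}} G_j^T(\delta_j;\delta_{-j}) = 1$ is immediate pointwise in $\varepsilon$ and hence after integration. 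Summing (\ref{ineqToSum}) across $j \in \mathcal{J}$ then yields $G_0^T(\delta^{T,k}) \geq 1 - \sum_{j \in \mathcal{J}} s_j = s_0$.

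Next I will derive the contradicting upper bound under the contradiction hypothesis. I will split the integrand of $G_0^T$ over the event $A = \{\mathcal{U}_{\varepsilon j}(\delta_j^{T,k}) < b\}$ and its complement. The defining property of $\overline{\delta}_j$ applied at $\delta = \delta_j^{T,k} \geq \overline{\delta}_j$ gives $P(A) < \eta = s_0/4$; on $A^c$ the $j$-th summand in the denominator is at least $\exp(b/T) \geq \exp(b/T^*) = 4/s_0 - 1$, so the integrand is at most $1/(1 + \exp(b/T)) \leq s_0/4$. Bounding the integrand trivially by $1$ on $A$ and combining yields $G_0^T(\delta^{T,k}) \leq s_0/4 + s_0/4 = s_0/2 < s_0$, contradicting the lower bound from the previous step. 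Because $\overline{\delta}_j$ depends only on the law of $\mathcal{U}_{\varepsilon j}(\cdot)$, $\eta = s_0/4$, and $b = T^*\log(4/s_0 - 1)$, none of which varies with $T$ in the range $T < T^*$, the upper bound $\overline{\delta}_j$ is uniform in $T$ as the lemma requires.

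The only real obstacle is modest and purely notational: one must correctly identify the leading term $\exp(-\mathcal{U}_{\varepsilon j}(\delta_j)/T)$ in the denominator of $G_j^T$ as encoding the default alternative $0$ under the conventions $\delta_0 = 0$ and $\mathcal{U}_{\varepsilon 0}(0) = 0$, in order to recognize the logit-share reformulation and the ensuing partition-of-unity identity. Once that bookkeeping is done, the entire argument reduces to summing (\ref{ineqToSum}) and performing a two-region split of a single integral.
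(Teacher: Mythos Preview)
Your proof is correct and follows essentially the same route as the paper: both sum \eqref{ineqToSum} over $j\in\mathcal{J}$ to obtain $s_0 \le G_0^T(\delta^{T,k})$, then split the resulting integral over $\{\mathcal{U}_{\varepsilon j}(\delta_j^{T,k}) < b\}$ and its complement, and use the defining property of $\overline{\delta}_j$ together with $b = T^\ast \log(4/s_0 - 1)$ to reach the contradiction $s_0 \le s_0/2$. The only cosmetic difference is that the paper first drops all but the $j$-th summand in the denominator before splitting, whereas you split $G_0^T$ directly and drop the other summands only on $A^c$; the resulting bounds are identical.
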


\begin{proof}
{\small By summation of inequality~(\ref{ineqToSum}) over $j\in \mathcal{J}$%
, one has%
\begin{eqnarray*}
s_{0} &\leq &\int \frac{P\left( d\varepsilon \right) }{1+\sum_{j^{\prime
}\in \mathcal{J}}\exp \left( \mathcal{U}_{\varepsilon j^{\prime }}\left(
\delta _{j^{\prime }}^{T,k}\right) /T\right) }\leq \int \frac{P\left(
d\varepsilon \right) }{1+\exp \left( \mathcal{U}_{\varepsilon j}\left(
\delta _{j}^{T,k}\right) /T\right) } \\
&\leq &\Pr \left( \mathcal{U}_{\varepsilon j}\left( \delta _{j}^{T,k}\right)
<b\right) +\int_{\left\{ \mathcal{U}_{\varepsilon j}\left( \delta
_{j}^{T,k}\right) \geq b\right\} }\frac{P\left( d\varepsilon \right) }{%
1+\exp \left( \mathcal{U}_{\varepsilon j}\left( \delta _{j}^{T,k}\right)
/T\right) } \\
&\leq &\Pr \left( \mathcal{U}_{\varepsilon j}\left( \delta _{j}^{T,k}\right)
<b\right) +\frac{1}{1+\exp \left( b/T^{\ast }\right) }.
\end{eqnarray*}%
Now assume by contradiction that $\delta _{j}^{T,k}>\overline{\delta }_{j}$.
Then $\Pr \left( \mathcal{U}_{\varepsilon j}\left( \delta \right) <b\right)
<\eta =s_{0}/4$ and $\left( 1+\exp \left( b/T^{\ast }\right) \right)
^{-1}=s_{0}/4$, and thus one would have
\begin{equation*}
s_{0}\leq s_{0}/4+s_{0}/4=s_{0}/2\text{,}
\end{equation*}%
a contradiction. Thus inequality~(\ref{IneqQED}) holds. }
\end{proof}

\begin{lemma}
{\small \label{lem:4}Let $\delta _{j}^{T}=\lim_{k\rightarrow +\infty }\delta
_{j}^{T,k}$. One has%
\begin{equation}
G_{j}^{T}\left( \delta _{j}^{T};\delta _{-j}^{T}\right) =s_{j}.
\label{limitToShow}
\end{equation}
}
\end{lemma}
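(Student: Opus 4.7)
The plan is to invoke monotone boundedness and the dominated convergence theorem. First I would note that by Lemma~\ref{lem:2}(i), the sequence $(\delta_{j}^{T,k})_{k\in\mathbb{N}}$ is non-decreasing in $k$ for each $j\in\mathcal{J}$. By Lemma~\ref{lem:3}, it is bounded above by a constant $\overline{\delta}_{j}$ that does not depend on $T<T^{\ast}$. Hence $\delta_{j}^{T}=\lim_{k\to+\infty}\delta_{j}^{T,k}\in\mathbb{R}$ is well-defined and finite for every $j\in\mathcal{J}$.

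Next, I would rewrite Lemma~\ref{lem:2}(i) as the identity
\begin{equation*}
G_{j}^{T}\!\left(\delta_{j}^{T,k+1};\delta_{-j}^{T,k}\right)=s_{j}\qquad\forall\,k\in\mathbb{N},
\end{equation*}
and pass to the limit $k\to+\infty$. Observe that the integrand defining $G_{j}^{T}$,
\begin{equation*}
H_{\varepsilon}\!\left(\delta\right):=\frac{1}{\exp\!\left(-\mathcal{U}_{\varepsilon j}(\delta_{j})/T\right)+\sum_{j'\in\mathcal{J}}\exp\!\left((\mathcal{U}_{\varepsilon j'}(\delta_{j'})-\mathcal{U}_{\varepsilon j}(\delta_{j}))/T\right)},
\end{equation*}
is bounded in $[0,1]$ for every $\varepsilon$ and every $\delta$. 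Moreover, by Assumption~\ref{ass:Increasing}(b), the maps $\delta_{j'}\mapsto\mathcal{U}_{\varepsilon j'}(\delta_{j'})$ are continuous for each fixed $\varepsilon$, so for almost every $\varepsilon$ one has $H_{\varepsilon}\!\left(\delta_{j}^{T,k+1};\delta_{-j}^{T,k}\right)\to H_{\varepsilon}\!\left(\delta_{j}^{T};\delta_{-j}^{T}\right)$ pointwise as $k\to+\infty$.

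Applying the dominated convergence theorem with the constant $1$ as dominating function then yields
\begin{equation*}
s_{j}=\lim_{k\to+\infty}G_{j}^{T}\!\left(\delta_{j}^{T,k+1};\delta_{-j}^{T,k}\right)=G_{j}^{T}\!\left(\delta_{j}^{T};\delta_{-j}^{T}\right),
\end{equation*}
which is exactly \eqref{limitToShow}. The only non-trivial point in this argument is ensuring that the limit does not escape to the boundary of the domain: that is precisely the content of Lemma~\ref{lem:3}, which provides the uniform upper bound $\overline{\delta}_{j}$, so no compactness issue arises. Continuity and boundedness handle the rest, so I expect no further obstacle beyond having Lemmas~\ref{lem:2} and~\ref{lem:3} already in hand.
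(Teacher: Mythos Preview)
Your proposal is correct and follows essentially the same route as the paper: use the identity $G_{j}^{T}(\delta_{j}^{T,k+1};\delta_{-j}^{T,k})=s_{j}$ from Lemma~\ref{lem:2}(i), then pass to the limit using convergence of the arguments and continuity of $G_{j}^{T}$. The only difference is that the paper simply asserts continuity of $G_{j}^{T}$, whereas you supply the justification via dominated convergence with the constant $1$ as dominating function---a welcome addition of detail.
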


\begin{proof}
{\small One has $G_{j}^{T}\left( \delta _{j}^{T,k+1};\delta
_{-j}^{T,k}\right) =s_{j} $; by the fact that $\delta
_{j}^{T,k+1}\rightarrow \delta _{j}^{T}$ and $\delta _{-j}^{T,k}\rightarrow
\delta _{-j}^{T}$ and by the continuity of $G_{j}^{T}$, it follows~(\ref%
{limitToShow}). }
\end{proof}

{\small We can now deduce the proof of theorem~\ref{thm:existenceNoTies}: }

\begin{proof}[Proof of theorem~\ref{thm:existenceNoTies}]
{\small 
Lemma~\ref{lem:4} implies that one can define
\begin{equation*}
u^{T}\left( \varepsilon \right) =T\log \left( 1+\sum_{j\in \mathcal{J}}\exp
\left( \mathcal{U}_{\varepsilon j}\left( \delta _{j}^{T}\right) /T\right)
\right) \text{ and }\pi _{\varepsilon j}^{T}=\exp \left( \frac{-u^{T}\left(
\varepsilon \right) +\mathcal{U}_{\varepsilon j}\left( \delta
_{j}^{T}\right) }{T}\right) ,
\end{equation*}%
and by the same result, one has%
\begin{equation*}
\mathbb{E}_{\pi ^{T}}\left[ u^{T}\left( \varepsilon \right) \right] =\mathbb{%
E}_{\pi ^{T}}\left[ \mathcal{U}_{\varepsilon j}\left( \delta _{j}^{T}\right) %
\right] .
\end{equation*}%
It follows from lemma~\ref{lem:3} that the sequence $\delta _{j}^{T}$ is
bounded independently of $T$, so by compactness, it converge up to
subsequence toward $\delta _{j}^{0}$. Note that \underline{$\delta $}$%
_{j}^{0}\leq \delta _{j}^{0}\leq \overline{\delta }_{j}^{0}$. We can extract
a converging subsequence $\pi ^{T_{n}}$ where $T_{n}\rightarrow 0$ and $\pi
^{T_{n}}\rightarrow \pi ^{0}$ in the weak convergence. Mimicking the
argument in Villani (2003) page 32, it follows that $\pi ^{0}\in \mathcal{M}%
\left( P,s\right) $. }

{\small 
Let $u^{0}\left( \varepsilon \right)
=\max_{j\in \mathcal{J}_{0}}\left\{ \mathcal{U}_{\varepsilon j}\left( \delta
_{j}^{0}\right) \right\} $. We have $u^{0}\left( \varepsilon \right) \geq
\mathcal{U}_{\varepsilon j}\left( \delta _{j}^{0}\right) $. Let us show that
\begin{equation*}
\mathbb{E}_{\pi ^{0}}\left[ u^{0}\left( \varepsilon \right) \right] =\mathbb{%
E}_{\pi ^{0}}\left[ \mathcal{U}_{\varepsilon J}\left( \delta _{J}^{0}\right) %
\right] ,
\end{equation*}%
which will prove the final result. We have $\mathbb{E}_{\pi ^{T_{n}}}\left[
u^{T_{n}}\left( \varepsilon \right) \right] =\mathbb{E}_{\pi ^{T_{n}}}\left[
\mathcal{U}_{\varepsilon j}\left( \delta _{j}^{T_{n}}\right) \right] $; let
us show that%
\begin{equation*}
\begin{tabular}{l}
(i)~$\mathbb{E}_{\pi ^{T_{n}}}\left[ u^{T_{n}}\left( \varepsilon \right) %
\right] \rightarrow \mathbb{E}_{\pi ^{0}}\left[ u^{0}\left( \varepsilon
\right) \right] \text{, and}$ \\
(ii)~$\mathbb{E}_{\pi ^{T_{n}}}\left[ \mathcal{U}_{\varepsilon J}\left(
\delta _{J}^{T_{n}}\right) \right] \rightarrow \mathbb{E}_{\pi ^{0}}\left[
\mathcal{U}_{\varepsilon J}\left( \delta _{J}^{0}\right) \right] $%
\end{tabular}%
\end{equation*}
}

{\small Start by showing point (i). We have $0\leq u^{0}\left( \varepsilon
\right) -u^{T}\left( \varepsilon \right) \leq T_{n}\log \mathcal{J}$. As a
result, $\mathbb{E}_{\pi ^{T_{n}}}\left[ u^{T_{n}}\left( \varepsilon \right) %
\right] =\mathbb{E}_{P}\left[ u^{T_{n}}\left( \varepsilon \right) \right]
\rightarrow \mathbb{E}_{P}\left[ u^{0}\left( \varepsilon \right) \right] =%
\mathbb{E}_{\pi ^{0}}\left[ u^{0}\left( \varepsilon \right) \right] $. }

{\small Next, we show point (ii). One has,%
\begin{equation*}
\mathbb{E}_{\pi ^{T_{n}}}\left[ \mathcal{U}_{\varepsilon J}\left( \delta
_{J}^{T_{n}}\right) \right] -\mathbb{E}_{\pi ^{0}}\left[ \mathcal{U}%
_{\varepsilon J}\left( \delta _{J}^{0}\right) \right] =\mathbb{E}_{\pi
^{T_{n}}}\left[ \mathcal{U}_{\varepsilon J}\left( \delta _{j}^{T_{n}}\right)
-\mathcal{U}_{\varepsilon J}\left( \delta _{J}^{0}\right) \right] +\mathbb{E}%
_{\pi ^{T_{n}}}\left[ \mathcal{U}_{\varepsilon J}\left( \delta
_{J}^{0}\right) \right] -\mathbb{E}_{\pi ^{0}}\left[ \mathcal{U}%
_{\varepsilon J}\left( \delta _{J}^{0}\right) \right]
\end{equation*}%
Let $\nu >0$. For any $K\subseteq \mathcal{X}$ compact subset of $\mathcal{X}
$, one has%
\begin{equation*}
\left\vert \mathbb{E}_{\pi ^{T_{n}}}\left[ \mathcal{U}_{\varepsilon J}\left(
\delta _{J}^{T_{n}}\right) -\mathcal{U}_{\varepsilon J}\left( \delta
_{J}^{0}\right) \right] \right\vert \leq \mathbb{E}_{\pi ^{T_{n}}}\left[
\mathcal{U}_{\varepsilon J}\left( \delta _{J}^{T_{n}}\right) -\mathcal{U}%
_{\varepsilon J}\left( \delta _{J}^{0}\right) 1_{\left\{ \varepsilon \in
K\right\} }\right] +2\mathbb{E}_{P}\left[ \sum_{j}\left\vert \mathcal{U}%
_{\varepsilon j}\left( \overline{\delta }_{j}\right) \right\vert 1\left\{
\varepsilon \in K\right\} \right]
\end{equation*}%
hence, one may choose $K$ such that $\mathbb{E}_{P}\left[ \sum_{j}\left\vert
\mathcal{U}_{\varepsilon j}\left( \overline{\delta }_{j}\right) \right\vert
1\left\{ \varepsilon \in K\right\} \right] <\nu /4$. By uniform continuity
of $\varepsilon \rightarrow \mathcal{U}_{\varepsilon j}\left( \delta \right)
$ on $K$, and because $\delta _{j}^{T_{n}}\rightarrow \delta _{j}^{0}$,
there exists $n^{\prime }\in \mathbb{N}$ such that $n\geq \overline{n}$
implies $\max_{j\in \mathcal{J}}\left\vert \mathcal{U}_{\varepsilon J}\left(
\delta _{j}^{T_{n}}\right) -\mathcal{U}_{\varepsilon J}\left( \delta
_{j}^{0}\right) \right\vert \leq \nu /2$ for each $\varepsilon \in K$. Thus,
for $n\geq n^{\prime }$, one has%
\begin{equation}
\left\vert \mathbb{E}_{\pi ^{T_{n}}}\left[ \mathcal{U}_{\varepsilon J}\left(
\delta _{J}^{T_{n}}\right) -\mathcal{U}_{\varepsilon J}\left( \delta
_{J}^{0}\right) \right] \right\vert \leq \nu  \label{maj1}
\end{equation}%
By the weak convergence of $\pi ^{T_{n}}$ toward $\pi ^{0}$, there is $%
n^{\prime \prime }\geq n^{\prime }$ such that for $n\geq n^{\prime \prime }$
one has%
\begin{equation}
\left\vert \mathbb{E}_{\pi ^{T_{n}}}\left[ \mathcal{U}_{\varepsilon J}\left(
\delta _{J}^{0}\right) \right] -\mathbb{E}_{\pi ^{0}}\left[ \mathcal{U}%
_{\varepsilon J}\left( \delta _{J}^{0}\right) \right] \right\vert \leq \nu .
\label{maj2}
\end{equation}%
Combining~(\ref{maj1}) and~(\ref{maj2}), it follows that for $n\geq
n^{\prime \prime }$,%
\begin{equation*}
\left\vert \mathbb{E}_{\pi ^{T_{n}}}\left[ \mathcal{U}_{\varepsilon J}\left(
\delta _{J}^{T_{n}}\right) -\mathcal{U}_{\varepsilon J}\left( \delta
_{J}^{0}\right) \right] \right\vert \leq 2\nu ,
\end{equation*}%
which establishes point (ii). The result is proven by noting that $\mathbb{E}%
_{\pi ^{0}}\left[ u^{0}\left( \varepsilon \right) \right] =\mathbb{E}_{\pi
^{0}}\left[ \mathcal{U}_{\varepsilon J}\left( \delta _{J}^{0}\right) \right]
$ along with $u^{0}\left( \varepsilon \right) \geq \mathcal{U}_{\varepsilon
j}\left( \delta _{j}^{0}\right) $ for all $\varepsilon $ and $j$ implies
that $\left( \varepsilon ,j\right) \in Supp\left( \pi ^{0}\right) $ implies $%
u^{0}\left( \varepsilon \right) =\mathcal{U}_{\varepsilon j}\left( \delta
_{j}^{0}\right) $. }
\end{proof}

\subsection{{\protect\small Proof of theorem~\protect\ref%
{thm:uniquenessNoTies}}}

\begin{proof}
{\small We have
\begin{equation*}
s_{0}=P\left( Z_{0j}\geq \delta _{j} \forall j \in \mathcal{J}\right) .
\end{equation*}%

 If $Z$ has full density, it follows that $s_0$ is strictly decreasing in any $\delta_j$, and hence, $0$ is a strict substitute to any other $j \in \mathcal{J}$ in the sense of~ \citeasnoun{berry2013connected}, and 
therefore the conclusion holds by the result in that paper.}
\end{proof}

\subsection{{\protect\small Proof of theorem~\protect\ref{thm:consistency}}}

\begin{proof}
Define $Z_{jk}=U_{\varepsilon j}^{-1}\left( U_{\varepsilon k}\left( \delta
_{k}\right) \right) $, which extends the definition of $Z_{0}$ to any other $%
k$, and we denote $F_{jk}^{n}$ its c.d.f. under $P^{n}$. Recall $%
Z_{jk}=U_{\varepsilon j}^{-1}\left( U_{\varepsilon k}\left( \delta
_{k}\right) \right) $ and denote $F_{j0}^{n}$ its c.d.f. under $P^{n}$. If $%
\delta ^{n}\in \tilde{\sigma}^{-1}\left( P_{n},\delta ^{n}\right)$, then
\begin{equation*}
\min_{n}\left\{ s_{0}^{n}\right\} \leq s_{0}^{n}\leq P^{n}\left(
U_{\varepsilon j}^{-1}\left( U_{\varepsilon 0}\left( \delta _{0}\right)
\right) \geq \delta _{j}^{n}\right) =1-F_{j0}^{n}\left( \delta
_{j}^{n}\right) 
\end{equation*}
thus $\delta _{j}^{n}\leq \left( F_{j0}^{n}\right)
^{-1}\left( 1-\min_{n^{\prime }}\left\{ s_{0}^{n^{\prime }}\right\} \right)
\leq \max_{n}\left\{ \left( F_{j0}^{n}\right) ^{-1}\left( 1-\min_{n^{\prime
}}\left\{ s_{0}^{n^{\prime }}\right\} \right) \right\} $ and similarly
\begin{equation*}\min_{n}\left\{ s_{j}^{n}\right\} \leq s_{j}^{n}\leq P^{n}\left( \delta
_{j}^{n}\geq U_{\varepsilon j}^{-1}\left( U_{\varepsilon 0}\left( \delta
_{0}\right) \right) \right) =F_{j0}^{n}\left( \delta _{j}^{n}\right) 
\end{equation*}
thus $%
\delta _{j}^{n}\geq \left( F_{j0}^{n}\right) ^{-1}\left( 1-\min_{n^{\prime
}}\left\{ s_{0}^{n^{\prime }}\right\} \right) \geq \min_{n}\left\{ \left(
F_{j0}^{n}\right) ^{-1}\left( 1-\min_{n^{\prime }}\left\{ s_{0}^{n^{\prime
}}\right\} \right) \right\} $. Therefore $\delta ^{n}$ remains bounded.

Now consider $\delta ^{\ast }$ the limit of a converging subsequence $\delta
^{n}$. The (sub)sequence $\left( \delta _{n},\varepsilon _{n}\right) $
converges weakly towards $\left( \delta ^{\ast },\varepsilon \right) $, and
by the virtues of Lemma 2.2, part (vii) in \citeasnoun{vandervaart} given
assumption~\ref{ass:NoIndiff}, get that
\[
s_{j}=\lim_{n}P\left( U_{\varepsilon ^{n}j}\left( \delta _{j}^{n}\right)
\geq U_{\varepsilon ^{n}k}\left( \delta _{k}^{n}\right) \forall k\in
\mathcal{J}_{0}\right) =P\left( U_{\varepsilon j}\left( \delta _{j}^{\ast
}\right) \geq U_{\varepsilon k}\left( \delta _{k}^{\ast }\right) \forall
k\in \mathcal{J}_{0}\right)
\]%
and therefore, by the fact that $\delta ^{\ast }$ is the unique solution to
the demand inversion problem as proven in theorem~\ref{thm:uniquenessNoTies}
above, we get that $\delta ^{\ast }=\delta $. As $\delta ^{n}$ is bounded
and has a unique adherence value, it converges to $\delta $.

\end{proof}
{\small 
}

{\small 
}

\subsection{\protect\small Market Shares Adjustment: Algorithm for the lower bound}\label{algo for the lower bound}

In order to calculate the lower bound, one could implement the same algorithm as the one for the upper bound, but invert the roles of yogurts and consumers as in \citeasnoun{kelso1982job}. However, this would be inefficient as there are few brands of yogurts but many different consumers. Therefore, the algorithm would be fast for the upper bound, as it
deals with only few $\delta_{j}$'s, but not for the lower bound as it deals with a lot of different $u_{\varepsilon}$. Instead of switching the roles of
consumers and yogurts, we adapt the upper bound algorithm described in section \ref{sec:algorithms} for the lower bound.

We set the initial systematic utility $\delta_j^{ub}$ equal to the lattice upper bound (estimated using the algorithm for the upper bound).  In the ``first loop'' below, we iterate from $\left\{ \delta_j^{ub}\right\}$ down to values of $\delta$ which are below the lower bound $\underline{\delta}$ (corresponding to a vector of $\delta$ at which all the non-reference brands $j\neq 0$ are in excess supply and the reference brand $0$ is in excess demand).   In the ``second loop'', we iterate up from this point up to the lower bound, similarly to the MSA upper bound algorithm.

\begin{algorithm}[MSA lower bound]
\label{algo:msa lower bound}

Take $\eta ^{init}=1$, $\delta _{j}^{init}=\delta_j^{ub}$ and $block_{j}=0$ for all $j \in \mathcal{J}$.

\underline{Begin first loop}

\qquad Require $\left( \delta _{j}^{init},\eta ^{init},block_{j}\right) .$

\qquad Set $\eta =\eta ^{init}$ and $\delta ^{0}=\delta ^{init}$.

While  $\eta \geq \eta ^{tol}$

\qquad Set $\pi _{ij}=1$ if $j\in \arg \max_{j}\mathcal{U}%
_{\varepsilon j}(\delta _{j})$, and $=0$ otherwise (breaking ties
arbitrarily).

\qquad If $\sum_j block_j = |\mathcal{J} |$, then set $\delta _{j}=
\delta _{j}+2\eta $ and $block_j = 0$ for all $j\in \mathcal{J}$, and $\eta = \eta /4.
$

\qquad Else set

\qquad \qquad  $\delta _{j}= \delta _{j}-\eta \mathbbm{1}\left\{
\sum\limits_{i}\pi _{ij} \geq m_{j}\right\} $ for all $j\in \mathcal{J}$  

\qquad \qquad If $\mathbbm{1}\left\{
\sum\limits_{i}\pi _{ij} < m_{j}\right\}$, then $block_j =  \mathbbm{1}\left\{
\sum\limits_{i}\pi _{i0} > m_{0}\right\}$

\qquad \qquad Else $block_j = block_j$

End While

Return $\delta ^{return}=\delta $.

\underline{End first loop}

\vspace{0.5cm}

\underline{Begin main second loop}

Take $\eta=\eta^{tol}$ and $\delta _{j}^{init}=\delta_j^{return}$.

Repeat:

\qquad Call the {\em inner second loop} with parameter values $\left( \delta
_{j}^{init}\right) $  which returns $\left( \delta
^{return}\right) $.

\qquad Set $\delta ^{init}= \delta ^{return}- 2\eta ^{tol}$

Until $\delta _{j}^{return}>\delta _{j}^{init}$ for all $j \in \mathcal{J}$.

\underline{End main second loop}

\bigskip

\underline{Begin inner second loop}

\qquad Require $\left( \delta _{j}^{init}\right) .$

\qquad Set $\delta =\delta ^{init}$.

\qquad Set $\pi _{i0}=1$ if $0 \in \arg \max_{j \in \mathcal{J}_0} \mathcal{U}%
_{\varepsilon j}(\delta _{j})$, and $=0$ otherwise (breaking ties
arbitrarily).

While  $\sum\limits_{i}\pi _{i0} > m_{0}$

\qquad Set $\pi _{ij}=1$ if $j\in \arg \max_{j}\mathcal{U}%
_{\varepsilon j}(\delta _{j})$, and $=0$ oherwise (breaking ties
arbitrarily).

\qquad Set $\delta _{j}= \delta _{j} + \eta \mathbbm{1}\left\{
\sum\limits_{i}\pi _{ij}<m_{j}\right\} $ for all $j\in \mathcal{J}$.

\qquad Set $\pi _{i0}=1$ if $0 \in \arg \max_{j \in \mathcal{J}_0} \mathcal{U}%
_{\varepsilon j}(\delta _{j})$, and $=0$ otherwise (breaking ties
arbitrarily).

End While

Return $\delta ^{return}=\delta $.

\underline{End inner second loop}

\end{algorithm}

\section{Semi-discrete transport algorithms}\label{sec:semidiscrete}
In this section we describe an additional computational algorithm which is specialized for solving the particular case of the pure characteristics demand model (discussed as Example 2.1 above). The method discussed here can be used when the distribution of
the unobserved taste vector is uniformly distributed over a polyhedron,
typically the Cartesian product of compact intervals. Recall that the pure
characteristics model has $\varepsilon _{ij}=\nu _{i}^{\intercal }x_{j}$,
with $\nu \sim \mathbf{P}_{\nu }$ is a random vector distributed over $%
\mathbb{R}^{d}$, and assume that $\mathbf{P}_{\nu }$ is the uniform
distribution over $E=\prod_{1\leq k\leq d}\left[ 0,l_{k}\right] $. Then we
can use the equivalence theorem in order to compute $\tilde{\sigma}^{-1}$
using semi-discrete transport algorithms, which were pioneered by~%
\citeasnoun{aurenhammer1987power}, with substantial progress made recently by~\citeasnoun{kitagawa2016convergence} and~%
\citeasnoun{levy2015numerical}. The idea, exposited in chapter 5 of~%
\citeasnoun{galichon2015optimal}, is that the optimal transport problem~(\ref%
{semi-discreteTransport}) can be reformulated as a finite-dimensional
unconstrained convex optimization problem%
\begin{equation}
\inf\limits_{\delta \in \mathbb{R}^{\mathcal{J}_{0}}}F\left( \delta \right)
\text{, where }F\left( \delta \right) =\mathbb{E}_{P}\left[ \max_{j\in
\mathcal{J}_{0}}\left\{ \delta _{j}+\varepsilon _{j}\right\} \right]
-\sum_{j\in \mathcal{J}_{0}}\delta _{j}s_{j}.
\label{finiteDimensionalOptTransport}
\end{equation}

Semi-discrete algorithms consist of a gradient descent over $%
F $. Note that $\partial F/\partial \delta _{j}=\Pr \left( \forall j^{\prime
}\in \mathcal{J}_{0}\backslash \left\{ j\right\} ,\varepsilon _{j^{\prime
}}-\varepsilon _{j}\leq \delta _{j}-\delta _{j^{\prime }}\right) -s_{j}$,
where the first term is the area of the polytope $\left\{ \varepsilon \in
E:\forall j^{\prime }\in \mathcal{J}_{0}\backslash \left\{ j\right\}
,\varepsilon _{j^{\prime }}-\varepsilon _{j}\leq \delta _{j}-\delta
_{j^{\prime }}\right\} $, hence a gradient descent can be done provided one
can compute areas of polytopes. The Hessian of $F$ can be computed
relatively easily too; we refer to~\citeasnoun{kitagawa2016convergence} for
details.

For these semi-discrete approaches, we provide an R-based interface to Geogram by \citeasnoun{levy2018numerical}. Geogram is a \textsc{C++} library of geometric algorithms, with fast implementations of semi-discrete optimal transport methods with two or three random coefficients.  The package is open-source and is available on GitHub as \citeasnoun{github_rgeogram}.

For the Pure Characteristics Model, the semi-discrete algorithm provides super-fast performance, far outstripping all the other algorithms (cf. table \ref{table:pure char uniform noise} below). Moreover, the semi-discrete algorithm does not rely on simulations, and computes the exact solution set of $\delta_j$ which ensures equilibrium in the market for given market shares and given vector of observed characteristics $x_j$.  The drawbacks are that, as far as we are aware, the semi-discrete approach is only available for random coefficient distributions $\nu_i$ which are jointly-uniform with at most three dimensions; which limits its use in many applications of the pure characteristics models (which typically assumed a joint Gaussian distribution for the random coefficients). In table \ref{table:pure char uniform noise}, we compute Monte-Carlo simulations similar to the ones showed in table \ref{table:pure_char_alg_inner} except that the random tastes shocks $\nu_i$ are drawn from independent uniform distributions and not from Gaussian ones.

\begin{table}[!htbp]
\centering
\begin{threeparttable}
\centering
\begin{tabular}{l|cccc}
 Algorithms & Draws & Brands & RMSE & CPU time (secs) \\ \toprule
 BLP contract. map. & 1,000 & 5 & 0.062 & 0.034  \\
  LP (gurobi) & 1,000 & 5 & 0.009 & 0.104  \\
  Auction & 1,000 & 5 & 0.010 & 0.009  \\
  Semi discrete & 1,000 & 5 & 0 & 0.003 \\
  MSA & 1,000 & 5 & 0.009 & 5.422  \\ \hline
  BLP contract. map. & 1,000 & 50 & 0.032 & 0.283  \\
  LP (gurobi) & 1,000 & 50 & 0.004 & 0.358\\
Auction & 1,000 & 50 & 0.004 & 0.044  \\
  Semi discrete & 1,000 & 50 & 0 & 0.046  \\ \hline
  BLP contract. map. & 1,000 & 500 & 0.011 & 2.766  \\
  LP (gurobi) & 1,000 & 500 & 0.001 & 3.392  \\
  Auction & 1,000 & 500 & 0.001 & 0.480  \\
  Semi discrete & 1,000 & 500 & 0 & 0.743  \\  \hline
  BLP contract. map. & 10,000 & 5 & 0.061 & 0.331  \\
LP (gurobi) & 10,000 & 5 & 0.003 & 0.311 \\
  Auction & 10,000 & 5 & 0.003 & 0.122  \\
  Semi discrete & 10,000 & 5 & 0 & 0.003  \\
  MSA & 10,000 & 5 & 0.003 & 1.471  \\ \hline
  BLP contract. map. & 10,000 & 50 & 0.032 & 2.894  \\
  LP (gurobi) & 10,000 & 50 & 0.001 & 3.947  \\
Auction & 10,000 & 50 & 0.001 & 0.735 \\
  Semi discrete & 10,000 & 50 & 0 & 0.046  \\ \hline
  BLP contract. map. & 10,000 & 500 & 0.011 & 33.171  \\
  LP (gurobi) & 10,000 & 500 & 0.000 & 54.824  \\
  Auction & 10,000 & 500 & 0.000 & 5.451  \\
  Semi discrete & 10,000 & 500 & 0 & 0.749 \\
   \hline
\end{tabular}
\begin{tablenotes}
\footnotesize{
\item Note: The numbers are average of $50$ Monte-Carlo replication. Demand inversion for the pure characteristics model with $5$, $50$ and $500$ brands of yogurt and $1,000$ and $10,000$ draws of taste shocks.  The column "RMSE" corresponds to the root mean squared of in the estimation of the estimated $\delta_j$. Semi-discrete displays $0$ RMSE since there are no sampling errors.}
\end{tablenotes}
\end{threeparttable}
\caption{Average computational Time (secs.) for pure characteristics models with uniform random shocks}
\label{table:pure char uniform noise}
\end{table}

\newpage
\section{Additional Details for Numerical Exercises in Section \ref{sec: numerical pure char}}\label{sec:DGP_pure_char}

In this section, we provide additional details for the results in Section \ref{sec: numerical pure char}. 

Our DGP for Table \ref{table:pure_char_alg_overall} is adapted from \citeasnoun{dube2012improving}. We first generate the market- and product-specific regressors $x_{mj}=(x_{mj1},x_{mj2},x_{mj3})$ from multivariate normal with

\begin{align}
\mu=\begin{pmatrix}
0.5\\
0.5\\
0.5
\end{pmatrix}, \quad \Sigma=\begin{pmatrix}
1&-0.7&0.3\\
\cdot&1&0.3\\
\cdot&\cdot&1
\end{pmatrix}.
\end{align}

\medskip
\noindent The unobserved fixed effect $\xi_{mj}$ is independently generated from a normal distribution with mean equals zero and standard deviation equals 1. The price, $p_{mj}$, is
generated according to
\begin{align}
\begin{array}{l}
p_{mj}=|1.1(x_{mj1}+x_{mj2}+x_{mj3})+0.5\xi_{mj}+e_{mj}|.
\end{array}
\end{align}

\noindent where $e_{mj}$ is independently generated from a normal distribution with mean zero and standard deviation 1. The utility of consumer $i$ who chooses alternative $j$ in market $m$ is generated by

\begin{equation}
    u_{mij}=\beta_0-\beta_p p_{mj}+\beta_1 x_{mj1}+\beta_2 x_{mj2}+\beta_3x_{mj3}+\xi_{mj}.
\end{equation}
\noindent We set $\beta_0=1$. $(\beta_p,\beta_1,\beta_2,\beta_3)$ are individual-specific coefficients generated from independent normal distributions with the means equal (-1,0.5,0.5,0.2) and the standard deviations all equal to 1. As the purpose is comparing the numerical performance, following \citeasnoun{dube2012improving}, the same set of simulated consumers are applied to both the data simulation and all estimation algorithms.

Next, we describe the DGP for instrumental variables. We first generate six basis instrument variables $z$, independently from the following specification:

\begin{align}
\begin{array}{l}
z_{mj}=0.25\left(1.1\left(x_{mj1}+x_{mj2}+x_{mj3}\right)+e_{mj}\right)+u_{mj},
\end{array}
\end{align}

\noindent where $u$ follows a uniform distribution on the unit interval. We use the linear term of $x$ and $z$, the quadratic and cubic terms of $x$ and $z$, the product terms ($\Pi_{k=1}^3x_{mjk}$ and $\Pi_{k=1}^6z_{mjk}$), and the interaction terms, ($x_{mjl}z_{mjk}$, $l=\{1,2\}, k=\{1,\dots,6\}$). There are total 41 instrumental variables.

Below we provide the setup of tuning parameters for various algorithms involved in Table \ref{table:pure_char_alg_overall}. For the inner loop of Matching-LP, we use the combined LP formulation and solved it by GUROBI 9.0. For MPCC and BLP-MPEC, we use KNITRO 12.0. Our choice of the solver depends on the natural of the problem to achieve the best performance. GUROBI is optimized for LP, whereas KNITRO is a general-purpose nonlinear program solver with capability of handling complementarity constraints in \citeasnoun{pang2015constructive}. KNITRO is also recommended by \citeasnoun{dube2012improving} for estimating the mixed logit demand. We program all these three algorithms in AMPL and execute them from AMPL's R interface. The Matching-LP shares similar computational features as in BLP. As illustrated by \citeasnoun{nevo2000practitioner}, one essentially minimizes the scale parameters only: Given the  the scale parameters, one applies the demand inversion to obtain $\delta$ as the ``dependent variable'' and perform the constrained 2SLS (due to the normalization) to obtain the location parameter estimate. For Model I in Table \ref{table:pure_char_alg_overall}, the problem boils down to a convex programming problem. For Model II in Table \ref{table:pure_char_alg_overall}, we use the {\bf optimize} function in R to find the optimal $\sigma_p$. It is based on the golden section search. We set [0.001,5] as the search interval. For MPCC, we deploy the Intel Pardiso MKL in KNITRO using 16 threads. This is to ensure an equal footing since GUROBI used in Matching-LP automatically deploys a parallel solver. Since it is extremely costly to run MPCC, we set a low tolerance in KNITRO (xtol and ftol to 1e-04) and use only one starting point. For BLP-MPEC, we use the logit-smoothed AR simulator.\footnote{See \citeasnoun{Train2009}. \citeasnoun{berry2007pure} also use the same method.} to approximate the demand map. One first generates the individual-level simulators for each product characteristics: $(v_{mip},v_{mi1},v_{mi2},v_{mi3})$. Then we solve the following MPEC formulation of the pure characteristics model:\footnote{We use 10 starting points, automatically chosen by KNITRO. The smoothing parameter of BLP-MPEC is $\lambda=1$.} 

\begin{align}
\begin{array}{rl}\label{GMM_BLP_MPEC}
\underset{\delta,\beta,\sigma}{\min} & g^{'}Wg\\
\textrm{s.t.}& g_n=\sum_{m=1}^M\sum_{j=1}^{J}(\delta_{mj}-x_{mj}^{'}\beta+\alpha p_{mj})z_{mjn}\\[10pt]
&\log(s_{mj})=\log\left(\frac{1}{I}\sum_{i=1}^{I}\left(\dfrac{\exp\left([\delta_{mj}+\sigma_{p}p_{mk}v_{mip}+\sum_{k=1}^{K}\sigma_k x_{mj}v_{mik}]/\lambda\right)}{\sum_{j^{'}\in\mathcal{J}}\exp\left([\delta_{mj^{'}}+\sigma_{p}p_{mk}v_{mip}+\sum_{k=1}^{K}\sigma_k x_{mj}v_{mik}]/\lambda\right)}\right)\right)
\end{array}
\end{align}


The DGP for Table \ref{table:pure_char_alg_inner} is described below: The $x_{j}=(x_{j1},x_{j2},x_{j3})$ are drawn from multivariate normal with
\begin{align}
\mu=\begin{pmatrix}
0.5\\
0.5\\
0.5
\end{pmatrix}, \quad \Sigma=\begin{pmatrix}
1&-0.7&0.3\\
\cdot&1&0.3\\
\cdot&\cdot&1
\end{pmatrix}.
\end{align}
\noindent Each consumer $i$ have three tastes shocks $\nu_i$ generated from independent normal distributions with means equal $(0.5,0.5,0.2)$ and standard deviations all equal $1$. Different draws of consumers are used for simulation and estimation thus leading to sampling error in the estimates.

\begin{table}[ht]
\centering
\caption{Robustness checks for Multisegment Price Heterogeneity Example (Table 4)}\label{tab:approxN}
{\begin{threeparttable}
\begin{tabular}{c||rrrrrr}
  \hline
  & \multicolumn{5}{c}{Number of discretization points} \\
Brand & $1000$ & $2000$ & $5000$ & $10,000$ & $20,000$ & $50,000$ \\ 
  \hline
A & 0 \footnotesize{(ref.)} & 0 \footnotesize{(ref.)} & 0 \footnotesize{(ref.)} & 0 \footnotesize{(ref.)} & 0 \footnotesize{(ref.)} & 0 \footnotesize{(ref.)} \\ 
  B & 0.070 & 0.038 & 0.018 & 0.008 & 0.004 & 0.002 \\ 
  C & 0.040 & 0.023 & 0.010 & 0.004 & 0.002 & 0.001 \\ 
  D & 0.038 & 0.022 & 0.009 & 0.003 & 0.002 & 0.001 \\ 
  E & 0.040 & 0.023 & 0.010 & 0.004 & 0.002 & 0.001 \\ 
  F & 0.056 & 0.033 & 0.013 & 0.005 & 0.003 & 0.001 \\ 
  G & 0.049 & 0.028 & 0.011 & 0.004 & 0.002 & 0.001 \\ 
  H & 0.049 & 0.027 & 0.011 & 0.004 & 0.002 & 0.001 \\ 
   \hline
\end{tabular}
\begin{tablenotes}
\linespread{1}\footnotesize{
\item Estimates of the lower and the upper bounds are LP solution computed with Gurobi 8.1. $50$ Monte-Carlo estimations were made with different draws of $\varepsilon^{a}$. The average difference between the lower and the upper bound is reported. Brand A is the reference and the systematic utility is normalized to $0$.}
\end{tablenotes}
\end{threeparttable}}

\end{table}

\end{document}